\newcommand{\scream}[1]{{\bf * #1 *}{\typeout{#1}}}
\newcommand{\eat}[1]{}
\newtheorem{theorem}{Theorem}
\newtheorem{lemma}{Lemma}[section]
\newtheorem{definition}{Definition}
\newtheorem{example}{Example}[section]
\def \production#1#2{#1 \rightarrow #2}
\def \derives#1#2{#1 \Rightarrow^* #2}
\def \productionGraph#1{{\cal P}(#1)}
\def \M{${\cal M}=(Q,\delta, q_0,F)~$}
\let\@copyrightspace\relax \makeatother
\begin{document}

\title{Answering Regular Path Queries on Workflow Provenance}%\\ \today}
\newcommand{\superscript}[1]{\ensuremath{^{\textrm{#1}}}}
\def\wa{\superscript{1}}
\def\wb{\superscript{2}}
\def\wc{\superscript{3}}
\def\waa{\superscript{4}}
\def\waaa{\superscript{5}}
\def\sharedaffiliation{\end{tabular}\newline\begin{tabular}{c}}

\author{
 Xiaocheng Huang\wa~ Zhuowei Bao\wb~ Susan B. Davidson\wc~ Tova
 Milo\waa~ Xiaojie Yuan\waaa~\\
\begin{small}
\begin{tabular}{ccccc}
{\wa}Genome Institute of Singapore&{\wb} Facebook Inc.&{\wc}University of Pennsylvania &{\waa} Tel Aviv
  University&{\waaa} Nankai University\\
Singapore & Menlo Park, USA&   Philadelphia,USA&Tel Aviv, Israel&Tianjin, China\\
  huangxc@gis.a-star.edu.sg &zhuowei@cis.upenn.edu&susan@cis.upenn.edu&milo@cs.tau.ac.il &yuanxj@nankai.edu.cn
\end{tabular}
\end{small}
}

\eat{\numberofauthors{1}
\newcommand{\superscript}[1]{\ensuremath{^{\textrm{#1}}}}
\def\wa{\superscript{1}}
\def\wb{\superscript{2}}
\def\wc{\superscript{3}}
\def\waa{\superscript{4}}
\def\sharedaffiliation{\end{tabular}\newline\begin{tabular}{c}}
\author{
  \alignauthor{  Xiaocheng Huang\wa \eat{\thanks{The work of these authors
      is partly supported by the NSFC under grant 61170184.}}~ Zhuowei
    Bao\wb~ Susan B. Davidson\wc~ Tova Milo\waa~ Xiaojie Yuan\wa~\eat{\footnotemark[1]}}
    \sharedaffiliation
\begin{small}
  \begin{tabular}{cccc}
  \affaddr{{\wa}Nankai University}&\affaddr{{\wb} Facebook Inc.}
  &\affaddr{{\wc}University of Pennsylvania} &\affaddr{{\waa} Tel Aviv
  University}\\
 \affaddr{Tianjin, China}& \affaddr{Menlo Park, USA}&   \affaddr{Philadelphia,
   USA} &\affaddr{Tel Aviv, Israel}\\
   \affaddr{huangx@seas.upenn.edu} &\affaddr{zhuowei@cis.upenn.edu}& \affaddr{susan@cis.upenn.edu} &  \affaddr{milo@cs.tau.ac.il} \\
       \affaddr{yuanxj@nankai.edu.cn}&&&
  \end{tabular}
\end{small}
}}

\maketitle

\begin{abstract}
This paper proposes a novel approach for efficiently
evaluating regular path queries over provenance graphs of workflows that may include recursion. 
The approach assumes that an execution $g$ of a workflow  
$G$ is labeled with {\em query-agnostic} reachability
labels using an existing technique.  
At query time, given $g$, $G$ and a regular path query $R$, the
approach decomposes $R$ into a set of subqueries $R_1$, ..., $R_k$ that are  {\em safe} for 
$G$.   For each safe subquery $R_i$, $G$ is rewritten so that, using the reachability labels of nodes in $g$, whether or not there is a path which matches $R_i$ between
two nodes can be decided in constant time.   The results of each safe subquery are then
composed, possibly with some small unsafe remainder, to produce an answer to $R$.    The approach results in an algorithm that significantly reduces the number of subqueries $k$ over existing techniques by increasing their size and complexity, 
and that evaluates each subquery in time bounded by its input and output size.
Experimental results demonstrate the benefit of this approach.
\end{abstract}

%\category{H.2.8}{Database Management}{Database Applications}[scientific databases]
%\terms{Algorithms, Performance, Theory}
%\keywords{Dynamic Labeling Scheme, Reachability Query, Recursive Workflow}

\vspace{-2mm}
\section{Introduction}
\label{sec:intro}
Capturing and querying workflow provenance is
increasingly important for scientific as well as business
applications. By maintaining information about the sequence of
module executions used to produce data, as well as the
parameter settings and intermediate data passed between module
executions, the validity and reproducibility of data can be
enhanced.

A series of ``provenance challenges"~\footnote{{\small http://twiki.ipaw.info/bin/view/Challenge/}} was held
between 2006 and 2010 to compare the expressiveness of various
provenance systems. Many of the sample queries given in these
challenges were simple {\em reachability queries} that check
the existence of an (arbitrary) execution path between workflow
nodes, e.g. ``Identify the data sources that contributed some data
leading to the production of publication $p$".
However, others were
more complex, requiring the path between nodes to have a
certain  shape.

Such constraints on the path structure can naturally be captured by
regular expressions. For example, the query ``Find all publications
$p$ that resulted from starting with data of type $x$, then
performing a repeated analysis using either technique $a_1$ or
technique $a_2$, terminated by producing a result of type $s$, and
eventually ending by publishing $p$." can be captured as the {\em
regular path query}  $R=\; x.(a_1|a_2)^+.s.\_^*.p$.

As users become familiar with the power of provenance,
such complex, regular path queries will become even more common.  
In particular, they are necessary to find workflows that exhibit certain
types of behaviors within shared repositories of workflows and their
executions, a topic of increasing interest within the scientific
community~\cite{Cohen-Boulakia:2011:SAR:2034863.2034865,DBLP:conf/ssdbm/StarlingerBL12}.

\eat{what is the state of the art:
1.  if query is simple (single label, reachability) know how to do
for reachability, solved: people use reachability labels
  2. complicated quereis, not easy
- example of a complicated path query  - how do people typically
traverse graph
cut into smaller queries and compose using joins which are slow

what we do is an alternative of new labeling that can handle in one shot the whole regular expression
turns out can't do for every query, but for a large class of "safe" queries it is possible, show to extend to general queries
build on previous work on reachability labels:  labelling has two parts run and spec, run labels are done on the fly
novel trick is to rewrite the spec based on query, then have reduced problem to previous labeling (fine-granined)
not trivial, prove that by this rewriting we have correctly reduced
even for qureeis that are not safe, we propose a decomposition algorithm.  unlike traditional decomp that breaks into atomic, we break into big things and save a lot of joins
}

Answering regular path queries over graphs (in particular,
XML trees) has  been extensively studied~\cite{DBLP:conf/icde/Al-KhalifaJPWKS02,DBLP:conf/sigmod/BrunoKS02,DBLP:conf/vldb/ChenGK05,DBLP:conf/sigmod/TrisslL07,DBLP:journals/www/WangLWL08,DBLP:journals/pvldb/ZengH12}.  The
typical approach used is to cut the query into smaller subqueries
(e.g. reachability queries), and traverse the graph to answer each
subquery.  The results of the subqueries are then joined together
to answer the original query.  The
problem with this approach is the large number and size of 
intermediate results, and the subsequent cost of joins.
 In this paper, we show that since workflow executions are {\em
not arbitrary graphs}, but rather graphs that originate from a {\em
given specification},  regular path queries can be processed much more efficiently.
Specifically, we show that a regular path query {\em does not need to be
decomposed} when it is {\em safe} for a given workflow
specification.   Safe queries are quite general, and go well
beyond reachability queries.

Before discussing our solution, note that regular
path queries, such as the one presented above, cannot be answered
simply by looking at a workflow {\em specification}. This is because
1) the queries may involve run-time data; and 2) if the workflow
specification contains alternatives then the exact paths
between data may not be known in advance. For example, if a workflow
specification $G$, which takes something of type $x$ as input,
involves a choice of either executing $a_1$ repeatedly followed by
$s$ and terminating with $p$ (which matches $R$), or executing $a_3$
repeatedly followed by $s$ and terminating with $p$ (which doesn't
match $R$), to answer the query one needs to examine which option
was actually taken at run time. Nevertheless, we will see that the
specification can still be used to speed up query processing.

\begin{figure}
\vspace{-2mm} \hspace{-10mm}
\includegraphics[width=1.2\columnwidth]{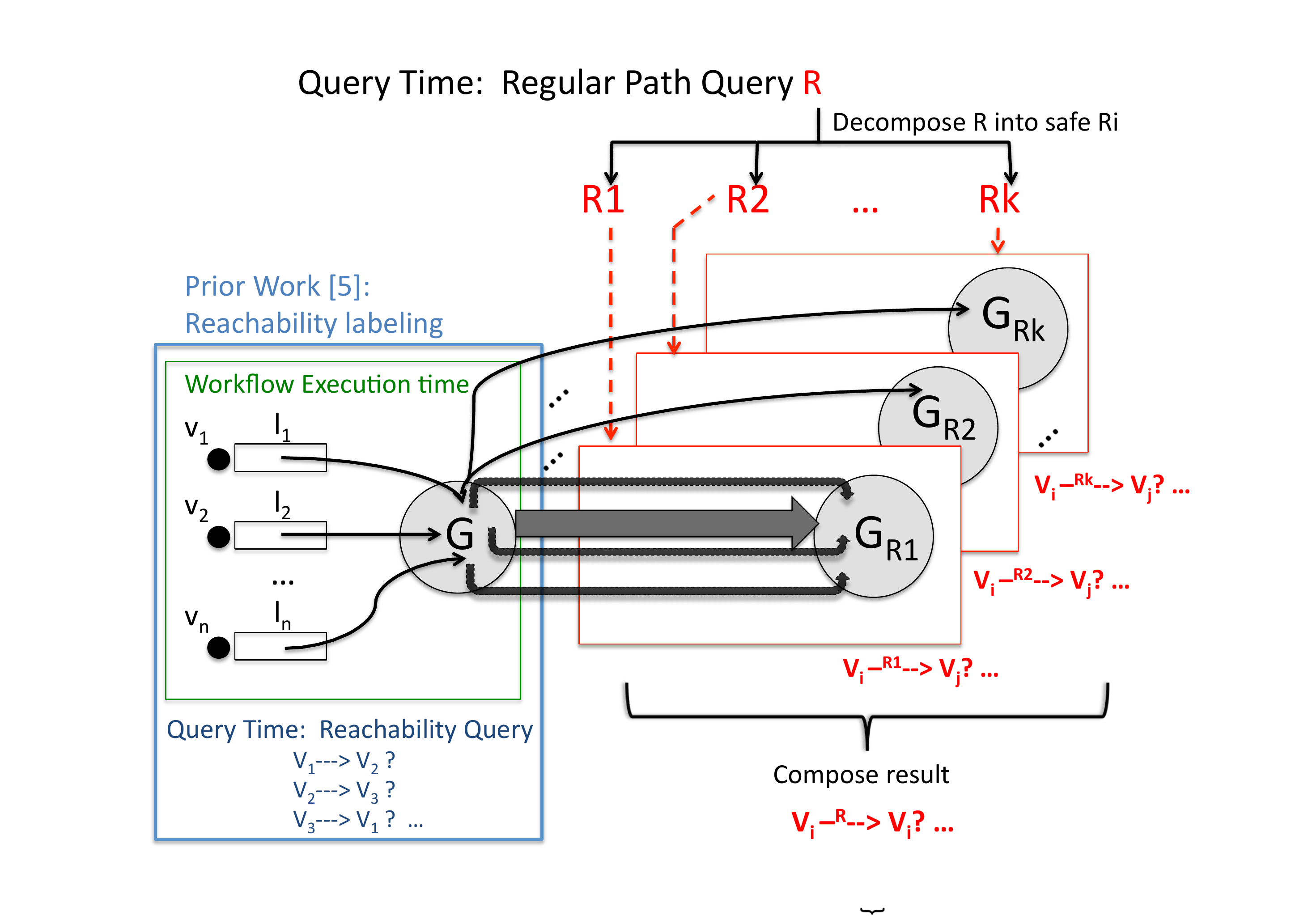}
\vspace{-12mm}
\caption{Overview of our approach} \label{fig:overview} \vspace{-6mm}
\end{figure}

The approach we present in this paper, illustrated in Fig.~\ref{fig:overview},  assumes that an execution $g$ of a  workflow specification 
$G$ is labeled with query-agnostic,  {\em reachability}
labels using an existing technique~\cite{DBLP:conf/sigmod/BaoDM12}
(left portion of Fig.~\ref{fig:overview}; each $v_i$ is a node in
$g$, and its reachability label $l_i$ references the specification $G$).
At query time, given  $g$, $G$ and a regular path query $R$, the
approach decomposes $R$ into a set of subqueries $R_1$, ..., $R_k$ which are  {\em safe} for  the specification
$G$ .    For each safe subquery $R_i$, $G$ is rewritten so that, using the reachability labels of nodes in $g$, 
whether or not there is a path which matches $R_i$ between two nodes $u$
and $v$ can be decided in constant time ($G_{R_1}$,...$G_{R_k}$ in Fig.~\ref{fig:overview}).   The results of each safe subquery are then composed, possibly with some small unsafe remainder, 
 to produce an answer to $R$.%  (bottom right portion of Fig.~\ref{fig:overview}).

The reason this solution is possible is because the labeling in \cite{DBLP:conf/sigmod/BaoDM12} is parameterized by the specification.  The novelty in this paper is to rewrite the specification so that reachability labeling can be used to evaluate regular path expressions.

The benefit  of this approach is 1) the number of subqueries is much smaller than previous approaches; 2) subqueries are larger and more complex than the atomic subqueries used previously, reducing the number and size of intermediate results; and 3) there are therefore fewer, potentially less expensive joins, hence overall a significant
speedup is achieved.   The type of executions (provenance graphs) 
for which this can be done are produced by a very general
class of workflows expressed as {\em context-free graph
grammars}~\cite{DBLP:conf/vldb/BeeriEKM06}, and which handle
recursion in a reasonable way ({\em strictly linear recursive}~\cite{DBLP:conf/sigmod/BaoDM12}).

\eat{The approach that we use is to label nodes in the provenance graph
{\em dynamically}, i.e. as they are produced in an execution, where
the labels encode the sequence of grammar productions used to
produce the object. The labeling is {\em query agnostic}, and builds
on a technique for answering reachability queries presented
in~\cite{DBLP:conf/sigmod/BaoDM11}. At query time, the
workflow specification from which the execution was generated is
rewritten using the regular path query $R$, so that given the labels
of two nodes $u$ and $v$, and the rewritten specification, one is
able to answer {\em pairwise} queries efficiently, e.g. ``Is there a
path which matches $R$ between $u$ and $v$?" The labels assigned to
the nodes are {\em compact}, i.e. logarithmic in the size of the
workflow execution, and {\em efficient}, i.e. pairwise queries can
be answered in constant time given two labels. We then show how to
use this to answer queries over sets of workflow nodes $U$ and $V$
({\em all-pairs } queries) to find all pairs $(u,v) \in U \times V$
having an $R$-shaped path between them. In particular, the cost of
rewriting the specification is now amortized over all the tested
pairs.}

{\bf Related Work.}
The model of workflow provenance that we adopt in this paper is based on \cite{davidson@deb2007,opm2011}.  
The problem of labeling workflow runs to answer reachability queries was studied in \cite{DBLP:conf/sigmod/HeinisA08}, where they adapted the interval-based labeling for trees to work for DAGs. 
\eat{The basic idea is to transform a DAG into a tree by replicating the vertices which have more than one incoming edges; a similar idea was proposed in GRIPP~\cite{DBLP:conf/sigmod/TrisslL07}.}
The problem with this approach is that the size of the transformed tree can be exponential in the size of the original DAG, which leads to linear-size interval labels.  Furthermore, it does not extend to regular path queries since it is not parameterized by the specification.
\eat{ It is worth mentioning that [46] observed that large provenance graphs originate from a small specification through loop (sequential) and fork (parallel) executions of sub-workflows. Based on that, they discussed several optimization techniques to reduce the size of the transformed tree, however, their approach does not fully take advantage of this information.}

Reachability queries have also been extensively studied for XML trees and graphs, and  a common approach is to use labeling.
%more recently for workflow provenance graphs.  
%The most common approach is to use labeling.
% (see~\cite{DBLP:conf/sigmod/BaoDM11} for a discussion). 
An algorithm for\eat{answering} all-pairs reachability queries over {\em trees} 
is given in~\cite{DBLP:conf/icde/Al-KhalifaJPWKS02}, which executes in time linear in the input and output size and is therefore optimal.  
\cite{DBLP:conf/sigmod/BrunoKS02} gives an optimal algorithm for XML pattern matching. 
However,  existing work on all-pairs reachability queries on DAGs/graphs
cannot achieve linear time complexity \eat{in the input and output size}~\cite{DBLP:conf/vldb/ChenGK05,DBLP:conf/sigmod/TrisslL07,DBLP:journals/www/WangLWL08,DBLP:journals/pvldb/ZengH12}. 
\eat{Since in this paper we consider
DAGs that are executions of strictly linear recursive workflows, we are able to develop an optimal algorithm for all-pairs
reachability queries.}

%\scream{We should also mention the work of provenance graph labeling Heinis  Thomas Heinis and Gustavo Alonso. Efficient lineage tracking for scientific workflows. In
%SIGMOD Conference, pages 1007Ð1018, 2008.  [74] Silke Tri§l and Ulf Leser. Fast and practical indexing and querying of very large graphs. In
%SIGMOD Conference, pages 845Ð856, 2007.}

\eat{Answering (pairwise) regular path queries is known to be NP-complete when (i) the graphs have cycles;
and (ii) the matching paths are required to be
simple~\cite{DBLP:conf/vldb/MendelzonW89}. } Pairwise
regular path queries on DAGs can be answered in time linear in graph size~\cite{DBLP:conf/vldb/MendelzonW89}.
\eat{Since
provenance graphs are acyclic, using depth-first search we can
answer pairwise regular path queries in time linear in the size of the
graph.}
Two optimization techniques (query pruning and query rewriting) which use graph schemas~\cite{DBLP:conf/icdt/BunemanDFS97} are proposed in~\cite{DBLP:conf/icde/FernandezS98}.\eat{:  {\em Query pruning} (which restricts path search to a fragment of the graph) and {\em query rewriting} (which rewrites the regular expression
to an equivalent but simpler one).} \cite{DBLP:conf/vldb/LiM01} proposes
to decompose regular expressions into concatenation/union/Kleene star 
subexpressions, and then uses reachability labeling to perform
joins. 
Recently \cite{DBLP:conf/ssdbm/KoschmiederL12} proposes to use rare labels to decompose queries to smaller subqueries and perform a breadth-first search in parallel. \cite{Dey:2013:IPR:2457317.2457353} proposes multiple regular query variants and represents queries as datalog.
\cite{DBLP:conf/icdt/LibkinV12} considers querying both data
and the topology of graphs. 
 \cite{DBLP:conf/pods/LosemannM12} considers regular expressions with numerical occurrence
indicators. Regular expressions of special forms have also been
recently studied~\cite{\eat{DBLP:journals/corr/abs-1203-2886,}DBLP:conf/sigmod/JinHWRX10}.\eat{:
Label constrained reachability (LCR) queries are studied in
\cite{DBLP:conf/sigmod/JinHWRX10}, and label order constrained
reachability queries (LOCR) are studied in
\cite{DBLP:journals/corr/abs-1203-2886}. }
Languages for path queries over graph-structured data are surveyed in
\cite{DBLP:journals/tods/BarceloLLW12}; among them,
\cite{DBLP:conf/vldb/LiM01,DBLP:conf/vldb/MendelzonW89} {\color{red} \cite{DBLP:conf/ssdbm/KoschmiederL12}} can be extended to our setting. We
will show a comparison to this in the experiments.

Most relevant for this paper are the dynamic reachability labeling
techniques of
\cite{DBLP:conf/sigmod/BaoDM11, DBLP:conf/sigmod/BaoDM12,DBLP:conf/sigmod/BaoDKR10}
for workflow provenance graphs, which address reachability queries between
a single pair of nodes.
In contrast, this paper addresses considerably more complex queries, regular path queries, between sets of nodes. 
To do this, we harness in a non-trivial way the labeling techniques in \cite{DBLP:conf/sigmod/BaoDM12},
and employ them for processing general queries over workflow provenance.
\eat{However, there is no existing work on regular path labeling
for the reason that it is query-dependent. In this paper, we extend
the reachability labeling scheme in \cite{DBLP:conf/sigmod/BaoDM12} to
regular path labeling, which is also compact.}

\eat{{\noindent \bf Mary Fernandez and Dan Suciu, ICDE'1998~\cite{DBLP:conf/icde/FernandezS98}:\\ Optimizing Regular Path Expressions Using Graph Schemas}

This paper considers the problem of optimizing {\em regular path queries} over semi-structured data. The optimization techniques utilize the underlying {\em graph schema}~\cite{DBLP:conf/icdt/BunemanDFS97} which describes the partial knowledge of the structure of data graphs. More precisely, the correspondence between graph schema and data graphs is defined via {\em graph simulation}. This is similar to our notion of specification and executions in the context of workflows. However, we define such graph simulation in terms of the derivation of a {\em context-free graph grammar} (rather than {\em graph homomorphism}). Their application scenario is to query data graphs that represent the structure of intranet web sites whose nodes denote pages and whose edges denote hyperlinks. The administrator who is familiar to the structure of intranet web sites provides the graph schema that specifies a high-level summary of the entire data graph and possible expansions.

Two families of optimization techniques are proposed in this paper. The first is {\em query pruning} which restricts path search to a fragment of the graph. This is done by pruning the given regular path expression to a more restricted but equivalent form by exploiting the knowledge of graph schema. The other is {\em query rewriting using state extents}. The idea is that, by rewriting the given regular path query to equivalent logical plans, we will be able to start the search deeper in the graph. In some cases, we may even avoid any graph traversal in the query evaluation. While the algorithm runs in exponential time to find all possible logical plans, for special forms of regular path queries, the algorithm runs in polynomial time. This paper also gives an efficient approximation algorithm that works for all regular path queries.}

\eat{Other related work~\cite{DBLP:conf/icdt/Abiteboul97, DBLP:conf/vldb/GoldmanW97} studied the problem of regular path expressions in semi-structured databases, and focused  on deriving and using schema information to rewrite queries and guide the search. \cite{DBLP:conf/icdt/MiloS99} considers indexing.}

\eat{
\medskip

{\noindent \bf Quanzhong Li, Bongki Moon, VLDB'2001~\cite{DBLP:conf/vldb/LiM01}:\\ Indexing and Querying XML Data for Regular Path Expressions}

This paper studies the problem of indexing XML data to support efficient regular path queries. The idea is to first use a labeling scheme (similar to {\em interval encoding}, but reserve the numbers for future insertions) to enable efficient ancestor-descendant queries, and then decompose a regular path query to a combination of three types of subexpressions: (1) element joins element; (2) element joins attribute and (3) Kleene closure of a subexpression. For each of them, this paper proposes path-join algorithms, namely, {\em EE-Join}, {\em EA-Join} and {\em KC-Join}. In particular, {\em EE-Join} use the labeling to quickly decide the ancestor-descendant relationship between elements.}

%\medskip
%\textbf{regular path queries in graphs}
%Regular expressions as a language for querying graphs have
%been studied in the database literature for more than a decade,
%sometimes under the name of regular path queries or general path queries \cite{DBLP:journals/jodl/AbiteboulQMWW97,DBLP:conf/sigmod/BunemanDHS96, ...} 

%\cite{DBLP:conf/webdb/GubichevN11}

{\bf Contributions.} In contrast to previous work, we  answer
regular path queries over graphs using labeling, by leveraging the
fact that the graphs represent executions generated from a given
workflow specification. Specifically:

\vspace{0.75mm} \noindent $\bullet$ We identify a core  property,
{\em safe query}, that is defined for a query relative to a workflow specification,
and that enables the use of reachability labels for processing regular
path queries.   We show that safety of a query can be detected in polynomial
time in the size of the query and specification.

\vspace{0.75mm} \noindent $\bullet$ {\em Pairwise safe queries.}  We
show how to rewrite a specification using a safe query $R$, and use
the rewritten specification together with the reachability labels of two input
nodes $u$ and $v$ to answer whether there exists a path between $u$
and $v$ which conforms to $R$ , $u\stackrel{R}\leadsto v$, in constant time.

\vspace{0.75mm} \noindent $\bullet$ {\em All-pairs safe queries.} We
extend the pairwise query technique to answer all-pairs safe queries,
which ask whether $u\stackrel{R}\leadsto v$ for node pairs $(u,v)
\in U \times V$, and give an algorithm for answering all-pairs
queries that runs in time linear in $|U|,|V|,N$ and polynomial in the size of the
specification,  where $N$ is the number of reachable
nodes in $U\times V$. As a side effect, we answer
all-pairs reachability queries in  linear time in the input and output size,
which is optimal.

\eat{We show that the brute-force (nested loop) approach can  be
significantly improved using a prefiltering step which tests
reachability. We also give }

\vspace{0.75mm} \noindent $\bullet$ {\em All-pairs general queries.}
Finally, we present our approach for answering general regular path queries.  We give a top-down
algorithm for decomposing a general query into a small set of safe
subqueries, and show how to compose results of the safe subqueries, possibly with some small unsafe remainder,
to answer the original query.

\vspace{0.75mm} \noindent $\bullet$ Experimental studies demonstrate the significant speedup that is achieved by our approach.
% mentioned in the Related Work and the significant speedup that is achieved.

\vspace{2mm}
{\bf Outline.}  %The rest of the paper is organized as follows: 
Section~\ref{sec:prelims} presents the workflow model and  reachability labeling of~\cite{DBLP:conf/sigmod/BaoDM12}.  
We formally define regular path queries, and discuss  pairwise safe queries in Section~\ref{sec:pairwise}.  In particular, we show
how to  transform a regular path query $R$ to a reachability query by rewriting the workflow specification and decoding the labels of nodes using the rewritten workflow; 
we also discuss conditions under which this can be done (safe query).
Section~\ref{sec:allpairs}  shows how to answer all-pairs
safe queries, and discusses how to decompose a general query into a small
set of safe subqueries to answer general all-pairs queries.
Experimental results are given in Section~\ref{sec:experiments}.  
%We conclude in Section~\ref{sec:conclu}.

%\input{related-work}
\section{Prior Work}
%\section{Preliminaries}
\label{sec:prelims}

In this section, we summarize the workflow model and labeling scheme of~\cite{DBLP:conf/sigmod/BaoDM12}.
Although the labeling scheme was designed to answer reachability queries, we will extend it in Section~\ref{sec:pairwise} to answer pairwise regular path queries.  %Readers familiar with the results in ~\cite{DBLP:conf/sigmod/BaoDM12} can skip Section~\ref{ and go directly to Section~\ref{sec:pairwise}.}

\subsection{Workflow model~\protect\cite{DBLP:conf/sigmod/BaoDM12}}
\label{sec:model}

A  workflow {\em specification} is modeled as a {\em context-free graph grammar} (CFGG), which describes the design of the workflow 
and whose  language  corresponds to the set of all possible executions ({\em  runs}).    The model that we use is similar to \cite{DBLP:conf/sigmod/BaoDM11, DBLP:conf/vldb/BeeriEKM06}.  
Nonterminals in a CFGG $G$ correspond to {\em composite modules} and terminals to {\em atomic modules}; edges in graphs in $G$ correspond to dataflow between modules.  
%Both nodes and edges are labeled.  
More formally, we start by defining simple workflows and build up to workflows using productions.

%\vspace{-1mm}
\begin{definition}
\label{def:simple-workflow}
{\bf (Simple Workflow)}
A {\em simple workflow} is $W = (V, E)$, where $V$ is a set of {\em modules} and $E$ is a set of {\em data edges} between modules. 
Each node $v$ has a name drawn from a finite set of 
symbols $\Sigma$, denoted $name(v)$.  Each edge $e$ is tagged with an element of a finite set of 
symbols, $\Gamma$, which represents the name of the data flowing over the edge, denoted $\tau_E(e)$.  
There may be multiple parallel edges between two nodes, each with a different tag.  
\end{definition}
%\vspace{-1mm}

Simple workflows are reused as  composite modules to build more complex workflows. This is modeled using {\em workflow productions}.

%\vspace{-2mm}
\begin{definition}
\label{def:production}
{\bf (Workflow Production)}
A {\em workflow production} is of form $\production{M}{W}$, where $M$ is a composite module and $W$ is a simple workflow. 
\end{definition}
%\vspace{-3mm}

%\vspace{-2mm}
\begin{definition}
\label{def:grammar}
{\bf (Workflow Specification)}
A  {\em workflow specification} is a CFGG $G = (\Sigma, \Delta, S,
P)$, where $\Sigma$ is a finite set of modules, $\Delta \subseteq
\Sigma$ is a set of {\em composite modules} (then $\Sigma \setminus
\Delta$ is the set of {\em atomic modules}), $S \in \Sigma$ is a {\em
  start module}, and $P = \{ \production{M}{W} \ | \ M \in \Delta, W
\in  \Sigma^* \}$ is a finite set of {\em workflow productions}
(i.e. $W$ is a simple workflow whose nodes are modules in $\Sigma$).\eat{The {\em language} of $G$ is $L(G) = \{ R \in (\Sigma \setminus
  \Delta)^* \ | \ \derives{S}{R} \}$, representing the set of all {\em
    executions} ({\em runs}) of $G$.} We will frequently  refer to workflow specifications as workflows.
\end{definition}
%\vspace{-3mm}

%\vspace{-2mm}
\begin{definition}
{\bf (Workflow Derivation and Execution)} A given workflow {\em execution}
is {\em derived} by a series of {\em node replacements} or {\em
  derivation steps} corresponding to the productions in the
specification. We start with a graph $g_0$ consisting of a node named $S$. At the $i^{th}$ step of the derivation, a new
graph $g_{i}$ is obtained by replacing (executing) some composite node
$v$ of the current graph $g_{i-1}$ with a simple workflow $W$, where
$p: name(v)\rightarrow W$ is a production of the grammar. 
If $u$ is a node in $W$, then we say that $v$ {\em derives} $u$ ($u$
{\em is derived by} $v$) and extend this transitively. We denote by
$(v,p)$ a node replacement. The {\em
  language} of  a workflow is the set of all executions.
\end{definition}
%\vspace{-2mm}

\begin{figure}[t]
%\begin{minipage}{0.5\linewidth}
%\begin{center}
\centering
\subfloat[Workflow specification $G$]{\label{fig:grammar}\includegraphics[scale=0.2]{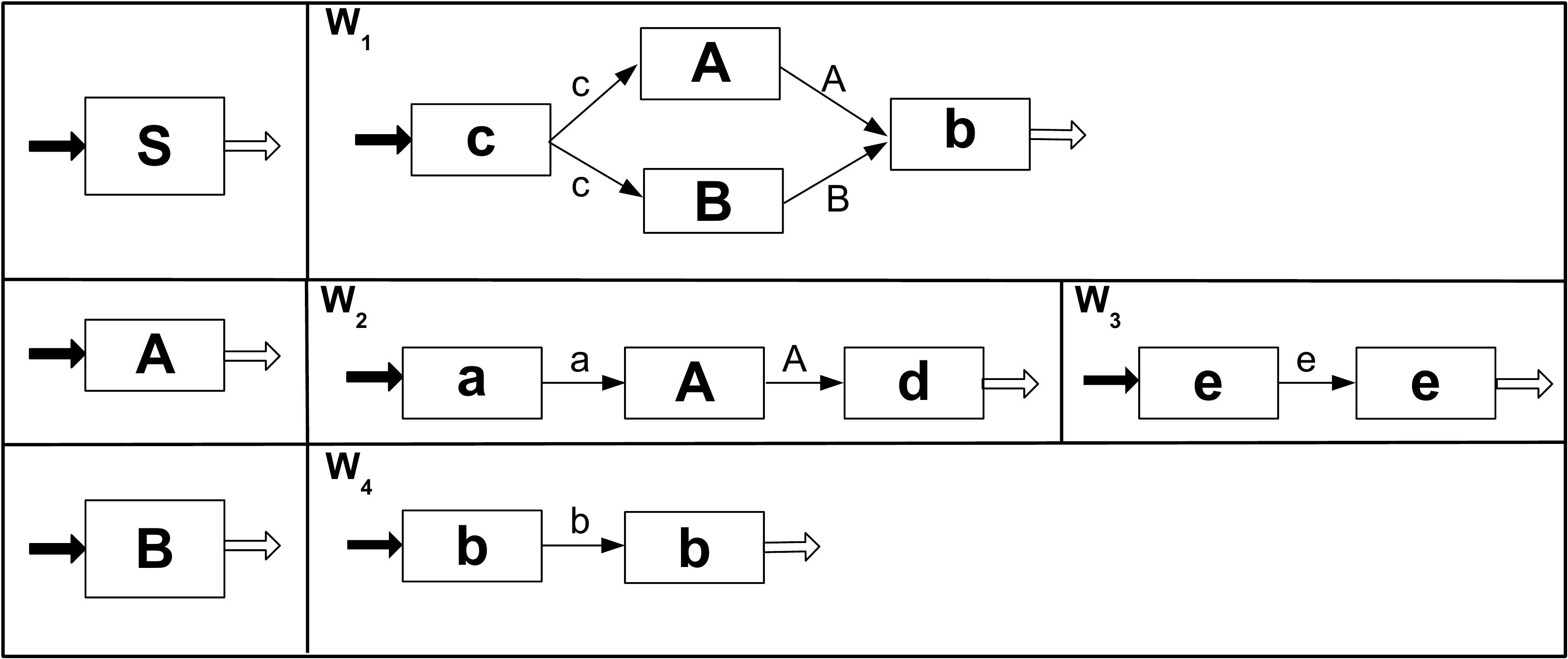}}
\\
\subfloat[A run of $G$]{\label{fig:run}\includegraphics[scale=0.2]{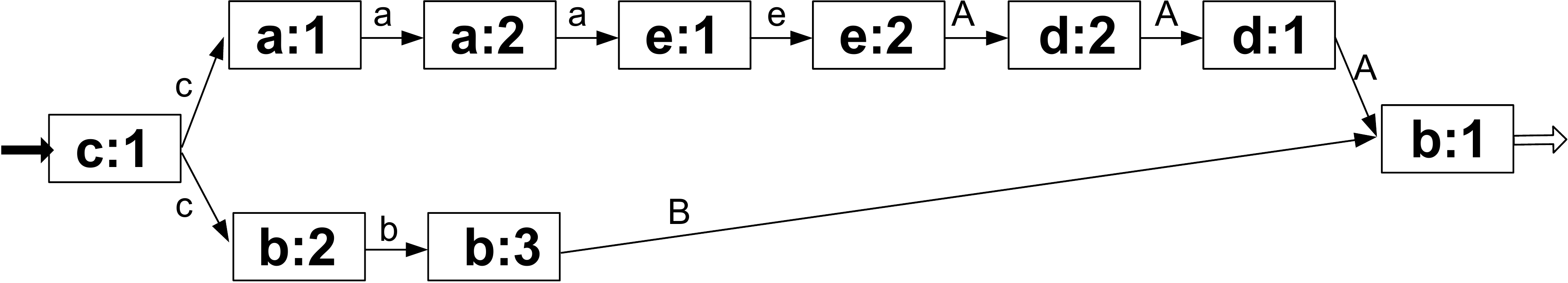}}
\\
\subfloat[Partial derivation graph]{\label{fig:derv}\includegraphics[scale=0.25]{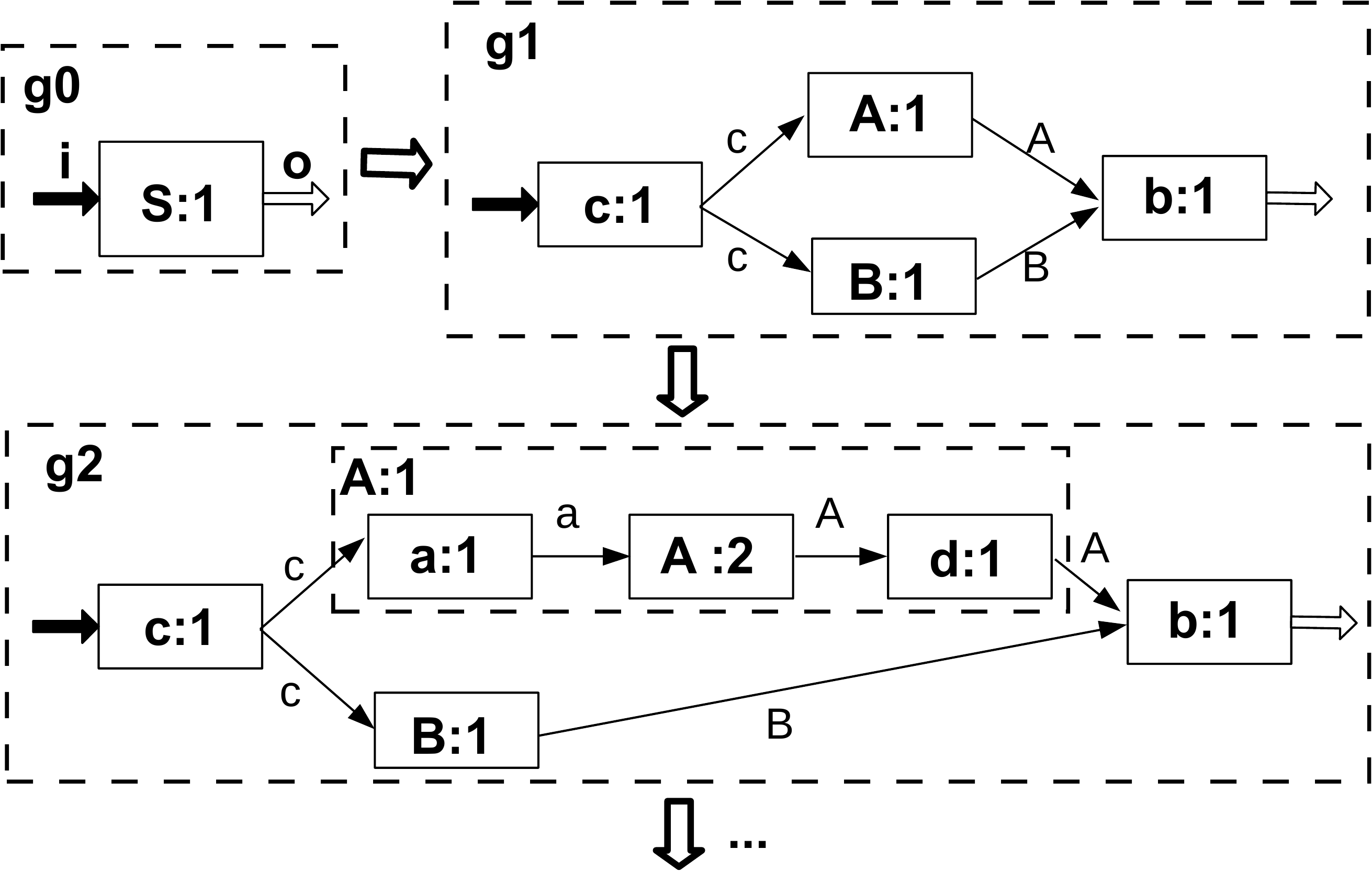}}
%\subfloat[Production graph $\productionGraph{G}$]{\label{fig:prod}\includegraphics[scale=0.6]{docs/ProdGraph.pdf}}
\eat{\subfloat[Production graph $\productionGraph{G}$]{\label{fig:prod}\includegraphics[scale=0.3] {docs/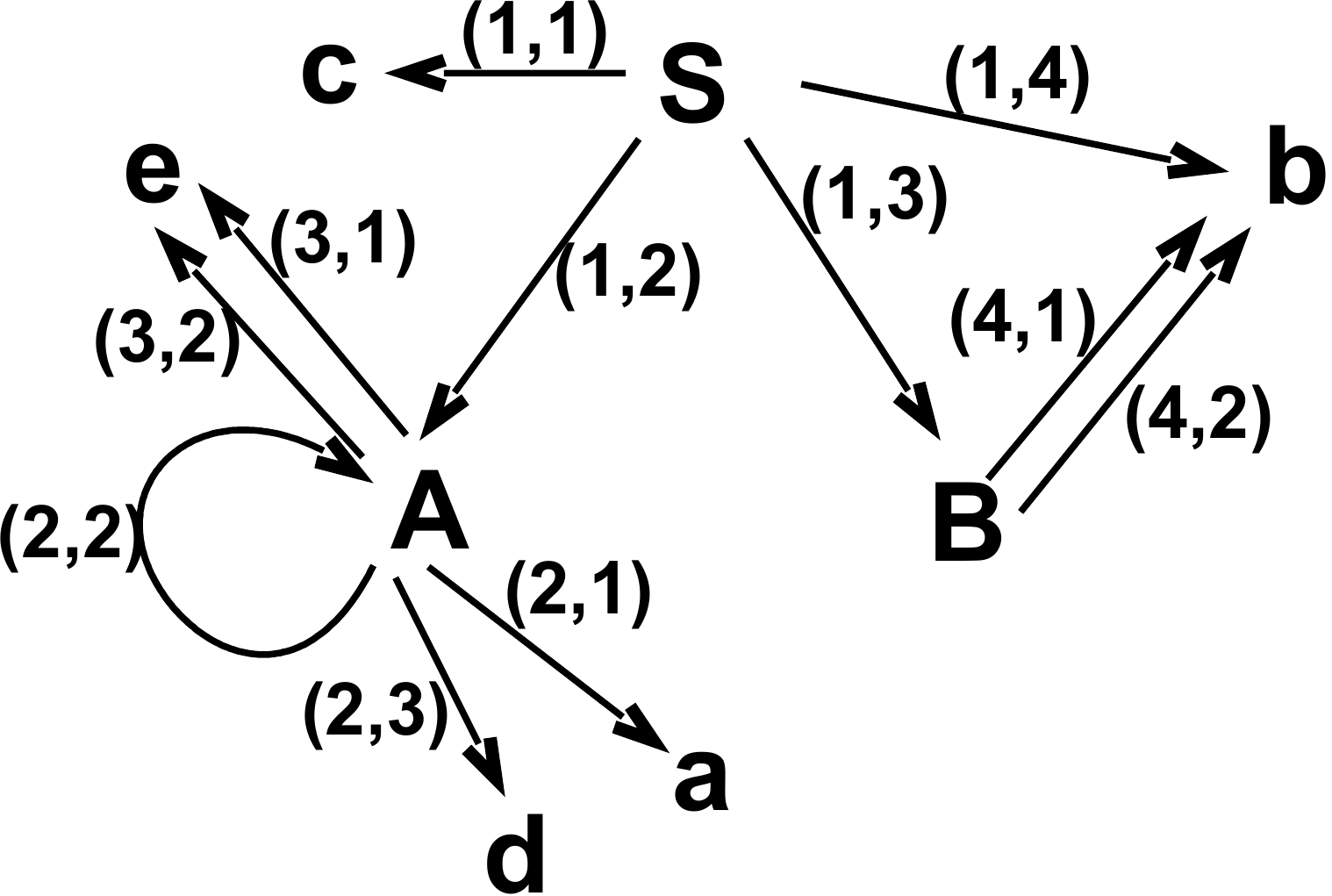}}
\\}
%\end{center}
%\end{minipage}
\vspace{-2mm}
\caption{Sample Workflow}
\vspace{-0.7cm}
\end{figure}
%\vspace{-0.2mm}

To simplify, in the examples of specifications throughout this paper the tags on edges are the same as the name of the modules at their head. 

%\vspace{-2mm}
\begin{example}
An example of a workflow specification is shown in Fig.~\ref{fig:grammar}, and one of its runs in Fig.~\ref{fig:run}.   Upper case module names ($S$, $A$, $B$) correspond to composite modules, and lowercase to atomic modules ($a$, $b$, $c$, $d$, $e$).  
The specification contains a choice of implementations for $A$, i.e. either $W_2$ or $W_3$.
The run inherits node (module) names and edge tags from its specification;  to disambiguate multiple occurrences of the same module  an occurrence number is appended to module names to  form a unique node id.   A partial sequence of derivation steps that would arrive at the run\eat{ in Fig.~\ref{fig:run}} is shown in Fig.~\ref{fig:derv}.
We start with a graph $g_0$ consisting of a single node $S:1$.%, which would be immediately labeled.  
In the first step, we replace $S:1$  with $W_2$; %, and immediately label $c:1$, $A:1$, $B:1$ and $b:1$.  
$S:1$ therefore derives $c:1$, $A:1$, $B:1$ and $b:1$ since they are nodes in $W_2$.
\eat{The end-node of the incoming edge to $S:1$ (tagged $i$)  would become  $c:1$ and be fixed as such since $c$ is atomic.  The start-node of the outgoing edge from $S:1$ (tagged $o$) would become  $b:1$ and again would be fixed since $b$ is atomic.}
Similarly, in the second step, we replace $A:1$ by $W_2$; %and immediately label $a:1$, $A:2$ and $d:1$.
$A:1$ therefore derives  $a:1$, $A:2$ and $d:1$.  By transitivity, $S:1$  also derives  $a:1$, $A:2$ and $d:1$.
\eat{In the second step, we could replace $A:1$ by $W_2$. The end-node of the incoming edge to $A:1$ (tagged $c$) would become $a:1$, and the start-node of the outgoing edge from $A:1$
(tagged $A$) would become $d:1$.}
\end{example}
%\vspace{-2mm}

In a {\em fine-grained} workflow model,  each module $M$ may have
multiple input/output {\em ports}, each representing a different data
item\eat{ with edges connecting an output port of one module to an input
port of another module}.  \eat{Using this model, e}Explicit dependencies
between the output of an atomic module and a subset of its inputs can
be captured as {\em internal} module edges.  
For example, if an atomic $m$ has two inputs, $x$ and $y$, and produces as output $x$ and $x+y$, $m$ would have two input ports and two output ports (see Fig.~\ref{fig:fine-grained}).  
The first output port, representing $x$, would be connected only to
the first input port, representing $x$, while the second output port, representing
$x+y$, would be connected to both input ports. The dependency
between input and output ports of a composite module may vary according
to its execution (more details can be found in Section~\ref{sec:labeling}). 
Executions of fine-grained workflows are also fine-grained.

\begin{figure}
\centering
\includegraphics[scale=0.3]{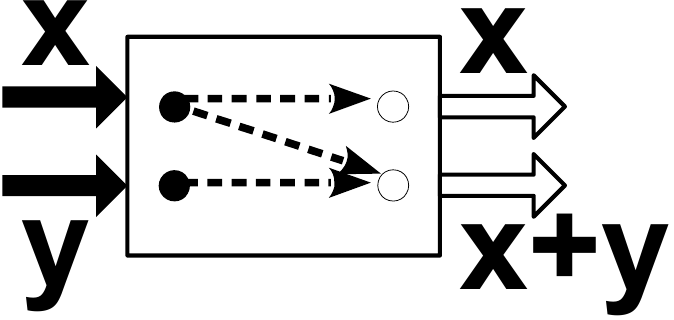}
\vspace{-2mm}
\caption{Fine-grained workflow}
\vspace{-0.7cm}
\label{fig:fine-grained}
\end{figure}

\eat{Moved to next section
To simplify the presentation in this paper, we consider
regular path queries on  {\em  coarse-grained} workflows in which each module has a single-input and
single-output, and the output is assumed to depend on the input; we
also assume  that simple workflows are {\em acyclic}. \eat{ Neither of
  these conditions limit the expressive power of our model; in
  particular loops can be captured by recursive productions. The
labeling technique proposed in this paper can be naturally extended to
the fine-grained model.} We introduce fine-grained workflows here since we will reduce regular path queries on
coarse-grained workflows to reachability queries on fine-grained
workflows (Section~\ref{sec:pairwise}).

The queries we consider are over runs (a.k.a. {\em provenance graphs} \cite{DBLP:journals/debu/DavidsonBELMBAF07}). 
Queries are {\em regular expressions} over edge tags, defined using concatenation, alternation and Kleene star:
$$e~:=~c~|~e_1e_2~|~e_1+e_2~|~e_1^*~|~e_1^+$$ 
where $c~:=~\epsilon~|~\_~|~a$ is a constant regular expression
($\epsilon$ is the empty string and $~\_~$ is the wildcard symbol that
matches any single symbol);
$e_1e_2$ denotes the concatenation of two sub-expressions; $e_1+e_2$ denotes alternation; and $e_1^*$ ($e_1^+$) denotes the 
set of all strings that can be obtained by concatenating zero (one) or
more strings chosen from $e_1$. Given a regular expression $R$, we
denote by $L(R)$ the set of strings that conform to $R$.

%\vspace{-1mm}
\begin{definition}
\label{def:regular-query}
{\bf (Regular Path Query)}
Let $G$ be a workflow specification and $g \in L(G)$ be a run.  Given a path 
$p = v_0 \xrightarrow{e_1} v_1 \xrightarrow{e_2} v_2 \ldots v_{n-1} \xrightarrow{e_n} v_n$ in $g$, we define $\tau_P(p)$ to be the concatenation of all edge tags on this
path, that is, $\tau_P(p) = \tau_E(e_1) \tau_E(e_2) \ldots \tau_E(e_n) \in  \Gamma^*$. A {\em regular path query} $R$ over $g$ is a {\em regular expression}  over
$\Gamma$. The result of $R$ on $g$  is defined as the set of node pairs $(u, v)$ in $g$ such that there is a path $p$ in $g$ from $u$ to
$v$  where $\tau_P(p) \in L(R)$.
\end{definition}
%\vspace{-2mm}

In this paper, we study two related sub-problems of answering regular path queries over workflow runs, {\em pairwise} queries and {\em all-pairs} queries.
%\vspace{-1mm}
\begin{definition}{\bf (Pairwise  Query)} 
Given two nodes $u,v$ from an edge-tagged graph $g$,  a {\em pairwise query} $R$ asks if there exists a path $p$ from $u$ to $v$ in $g$
such that $\tau_P(p) \in L(R)$, denoted by $u\stackrel{R}\leadsto v$.  
 \end{definition}
 %\vspace{-1mm}
 
The answer to a pairwise query is either true or false; reachability is a special case ($R= \; \_^*$).

%\vspace{-1mm}
 \begin{definition}{\bf (All-Pairs  Query)} 
Given two lists of nodes $l_1,l_2$ from an edge-tagged graph $g$,  {\em all-pairs query} $R$ asks for all node pairs $(u,v)\in
l_1\times l_2$ such that $u\stackrel{R}\leadsto v$. 
 \end{definition}
%\vspace{-2mm}

%\vspace{-1mm}
\begin{example}
Let $R_1= \;A^+$ and $R_2= \; A$.  
Revisiting the run in Fig.~\ref{fig:run}, the pairwise query result of $R_1$ for $(d:2, b:1)$ is true, but is false for $R_2$.  
The all-pairs query result of $R_1$ for $l_1= \{d:1, d:2, e:2\}$, $l_2=\{b:1, b:2\}$ is $\{(d:1,b:1), (d:2,b:1),(e:2,b:1)\}$.
The all-pairs  query result of $R_2$ for $l_1$, $l_2$ is $\{(d:1,b:1)\}$.
%\end{example}
}
%%%%%%%%%
%\subsection{\hspace{-3mm} Reachability labeling \cite{DBLP:conf/sigmod/BaoDMFi}}
\subsection{Reachability labeling \protect\cite{DBLP:conf/sigmod/BaoDM12}}
\label{sec:labeling}

The  labeling scheme in \cite{DBLP:conf/sigmod/BaoDM12},  called {\em dynamic, derivation based labeling}, was designed to answer reachability queries over views of workflows.
A {\em reachability query} is one which, given two nodes $u$, $v$ in a run $g$, returns ``yes" iff there is a path from $u$ to $v$ in $g$ (written $u\leadsto v$).
The  labeling scheme is based on the fine-grained  workflow model; however, it labels a run as if the workflow were coarse-grained, encoding only the sequence of productions used to arrive at each node (hence the name {\em derivation-based}).  
Reachability queries over views are then answered by decoding the labels 
using the fine-grained workflow specification intersected with the view definition.  
In a similar way, to answer regular path queries we label a run as if it were coarse-grained; however,
to decode labels we will use the query intersected workflow specification $G_R$,  which is fine-grained.
%In this section, we summarize the approach in \cite{DBLP:conf/sigmod/BaoDM12}.
Readers familiar with the results in ~\cite{DBLP:conf/sigmod/BaoDM12} can go directly to Section~\ref{sec:pairwise}.

{\bf Constraints.}
A labeling scheme is optimal (or compact) if 1) labels are
logarithmic in the size of the run, and 2)  labels can be decoded %(i.e. a query can be answered) 
in constant time, assuming that any operation on two
words ($log\;n$ bits) can be done in constant time.
For compact  reachability labeling to be achievable for fine-grained workflows, two corresponding constraints must be met:  1) the workflow must be {\em strictly-linear recursive}; and
2) the workflow must be {\em safe}. The first condition is
essential for logarithmic-size labeling and the second for
efficient decoding. %We now explain these two constraints.

To define the first constraint, we use the notion of  a {\em production graph}.

%\vspace{-2mm}
\begin{definition}
\label{def:production-graph}
{\bf (Production Graph\eat{~\cite{DBLP:conf/sigmod/BaoDM12}})}
Given a workflow  $G = (\Sigma, \Delta, S, P)$, the {\em production graph} of $G$ is a directed multigraph $\productionGraph{G}$ in which each vertex denotes a unique module in $\Sigma$. For each production $\production{M}{W}$ in $P$ and each module $M'$ in $W$, there is an edge from $M$ to $M'$ in $\productionGraph{G}$. Note that if $W$ has multiple instances of a module $M'$, then $\productionGraph{G}$ has multiple parallel edges from $M$ to $M'$.
\end{definition}
%\vspace{-2mm}

%\vspace{-1mm}
\begin{definition}
\label{def:strictly-linear-grammar}
{\bf (Strictly Linear-Recursive Workflow\eat{~\cite{DBLP:conf/sigmod/BaoDM12}})} 
A workflow $G$ is {\em recursive} if $\productionGraph{G}$ is cyclic, and a module in $G$ is  recursive if it belongs to a cycle in $\productionGraph{G}$.
$G$ is  {\em strictly linear-recursive} iff all cycles in $\productionGraph{G}$ are vertex-disjoint.
\end{definition}
%\vspace{-1mm}

%\vspace{-2mm}
\begin{example}
The production graph for the grammar $G$  in Fig.~\ref{fig:grammar} is shown in Fig.~\ref{fig:prod} (ignore for now the pair of numbers on edges).  
$G$ is recursive since there is a cycle in $\productionGraph{G}$; it is strictly linear-recursive since $\productionGraph{G}$ contains only one cycle. 
The only recursive node in $G$ is $A$.  

 In contrast, the hypothetical (and unlabeled) production graph shown in 
Fig.~\ref{fig:nonSLR} contains two cycles which share a node, $S$, and therefore the workflow that it represents is not  strictly linear-recursive. 
\end{example}
%\vspace{-2mm}

\begin{figure}
\centering
\begin{minipage}{0.6\linewidth}
\includegraphics[scale=0.3]{production-crop.pdf}
\caption{Production graph $\productionGraph{G}$}
\vspace{-7mm}
\label{fig:prod}
\end{minipage}
\begin{minipage}{0.3\linewidth}
\includegraphics[scale=0.3]{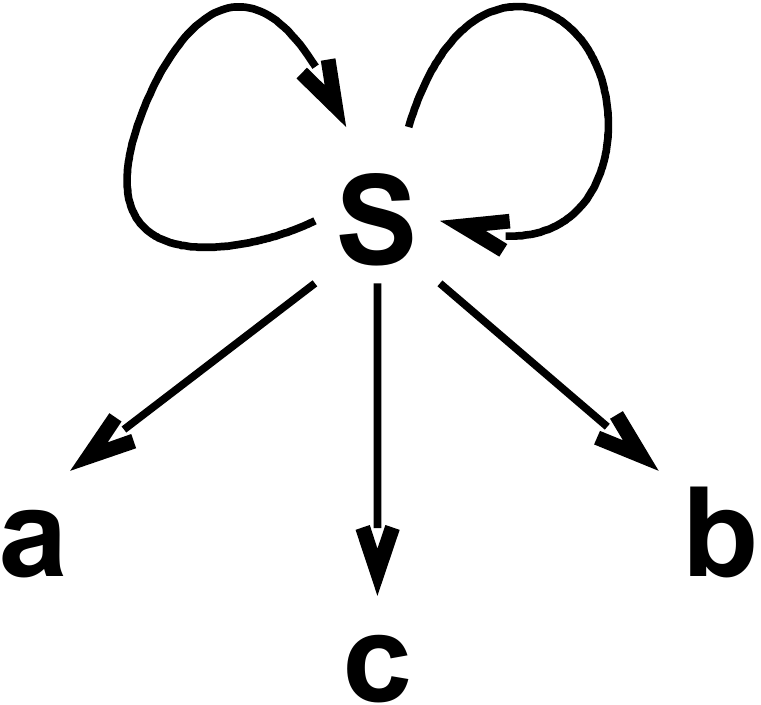}
\caption{Synthetic production graph}
\vspace{-7mm}
\label{fig:nonSLR}
\end{minipage}
\end{figure}

As argued in \cite{DBLP:conf/sigmod/BaoDM12}, strict linear recursion is able to capture
common recursive patterns found in repositories of scientific workflows, 
\eat{e.g. myExperiment~\cite{DBLP:journals/fgcs/RoureGS09},}
in particular looping and forked executions. 

Now we turn to the second constraint. Recall that a fine-grained
workflow is such that each  atomic module has  one or more
input/output ports whose dependency is explicitly specified by module
internal edges (see Fig.~\ref{fig:fine-grained}). The dependency between
the input and output ports of a composite module is determined by 
the executions of the module.  Intuitively, if a workflow is safe, we can draw unambiguous internal edges for all composite modules. %Details can be found in \cite{DBLP:conf/sigmod/BaoDM12}.

%\vspace{-2mm}
\begin{definition}
\label{def:safe-workflow}
{\bf (Safe Workflow\eat{~\cite{DBLP:conf/sigmod/BaoDM12}})} 
A workflow $G$ is {\em safe} iff for each
composite module, the dependency between its input and output ports
is deterministic w.r.t. all its executions. 
\end{definition}
%\vspace{-2mm}

%\vspace{-2mm}
\begin{example}
Consider the fine-grained workflow below 
and two of its executions
$ex_1$ and $ex_2$. % (Fig.~\ref{fig:unsafeSpec}). 
In $ex_1$, the second output port of $S$ solely
depends on the second input port of $S$. However, in $ex_2$, the
second output port of $S$ depends on both input ports of $S$. Thus the
dependency between the input and output ports of $S$ is not
deterministic. Therefore the workflow is not safe. An example of safe
workflow is given in Fig.~\ref{fig:specQ}, where the dependency for
composite modules are illustrated by internal module edges.
\begin{figure}[h]
\centering
\vspace{-2mm}
\includegraphics[scale=0.25]{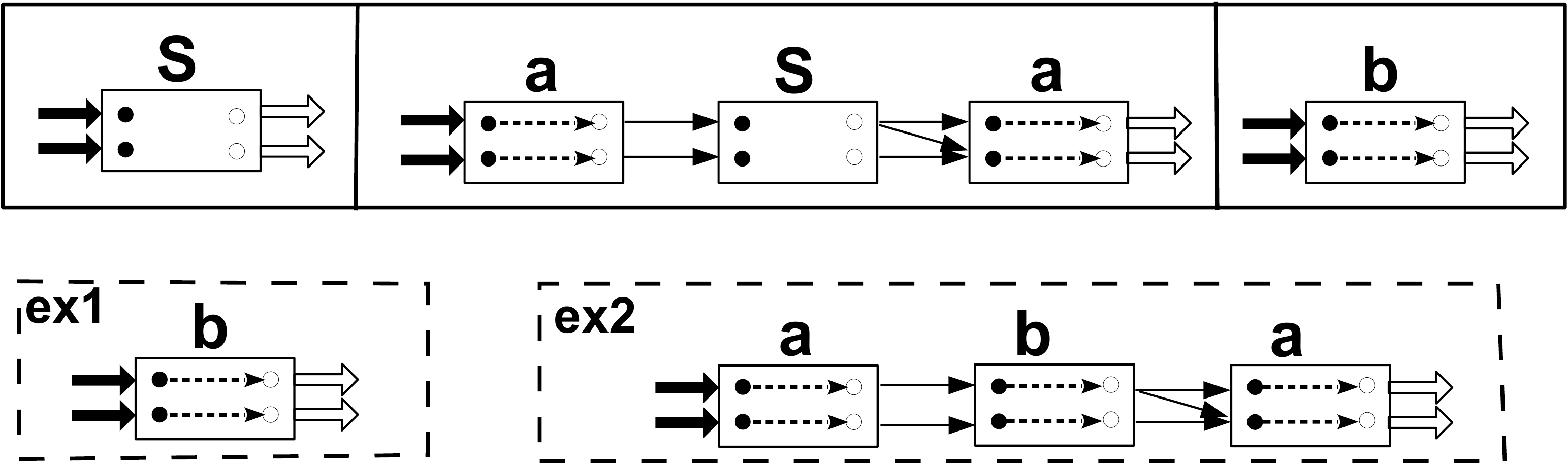}
%\vspace{-4mm}
\caption{\small Unsafe workflow}
\vspace{-4mm}
\label{fig:unsafeSpec}
\end{figure}
\end{example}

{\bf Labeling $\psi_V(u)$.}
The labeling function $\psi_V$ assigns a label to each node $u$ when
the node is derived and will not change the label as the workflow is
executed. The approach is based on a tree representation for a run, called the {\em compressed parse tree}. In contrast to the traditional parse tree used for context-free grammars whose depth may be proportional to the size of the run, the depth of a compressed parse tree is bounded by the size of the specification.   The compressed parse is constructed in a top-down manner, i.e. as productions are fired. A label 
is assigned to each node (module execution) as soon as it is executed, and encodes the sequence of derivation steps that create the module. %Details can be found in \cite{DBLP:conf/sigmod/BaoDM12}.

\begin{figure}
\centering
\vspace{-2mm}
\includegraphics[scale=0.3]{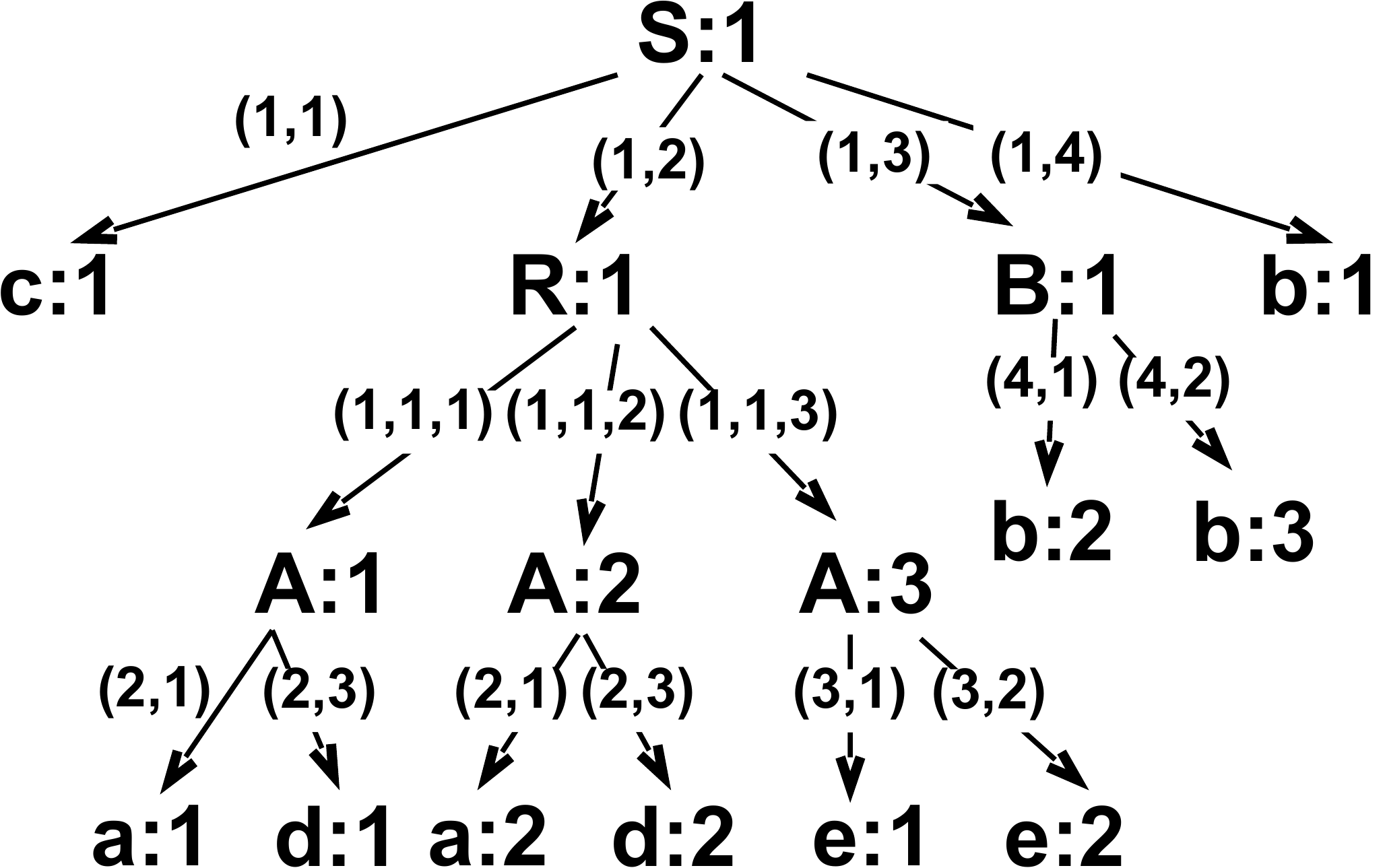}
%\vspace{-4mm}
\caption{\small Compressed Parse Tree}
\vspace{-7mm}
\label{fig:parse}
\end{figure}

%\vspace{-1mm}
\begin{example}
\label{eg:compressed-parse-tree}
The compressed parse tree %$\compressedParseTree{R}$
for the run in Fig.~\ref{fig:run} is shown in Fig.~\ref{fig:parse} (ignore the tree edge labels for now).  
Each leaf node denotes an atomic module, and each non-leaf node denotes either a composite module or a linear recursion, called a {\em recursive node} and labeled $R$. The children of a composite node denote the modules of the simple workflow produced by the production used in its execution; and the children of a recursive node denote a sequence of nested composite modules obtained by unfolding a cycle in the production graph.  
As before, occurrence numbers are used to disambiguate module executions.  In the sample run, 
$A$ is executed three times (denoted $A:1$, $A:2$, $A:3$), twice using $W_2$ and the final time using $W_3$.
\end{example}
%\vspace{-1mm}

Edges in the compressed parse tree $T$ are labeled as follows (we denote by $\psi_T(e)$ the label of an edge $e$):
We begin with assigning labels to edges of the production graph of the
specification, $\productionGraph{G}$
\eat{(recall Section~\ref{sec:cond1})}. First of all, fix an arbitrary ordering among the productions in $P$, and for each production $\production{M}{W}$, fix an arbitrary topological ordering among the modules in $W$. 
Let $p_k = \production{M}{W}$ be the $k$th production in $P$, and
$M_i$ be the $i$th module in $W$, then we assign the edge from $M$ to
$M_i$ in $\productionGraph{G}$ a pair $(k, i)$.  
In addition, fix an arbitrary ordering among all the cycles in
$\productionGraph{G}$, and for each cycle, fix an arbitrary edge as
the first edge of the cycle. We are now ready to label $T$. 
Let $e=(u,v)$ be an edge of $T$.
%We denote by $\cycle{s}$ the $s$th cycle in $\productionGraph{G}$ containing a list of number pairs.
(1) If $u$ is a composite node, then $e$ can be mapped to an edge $e'$ in $\productionGraph{G}$.  
Let $e' = (k, i)$, then $\psi_T(e) = (k, i)$; and (2) otherwise (if $u$ is a recursive node), let $u$ denote the $s$th cycle in $\productionGraph{G}$ starting from the $t$th edge. 
%This can be determined by the first child of $u$.
Let $v$ be the $i$th child of $u$, then $\psi_T(e) = (s, t, i)$.

\eat{
%REPLACED BY THE FOLLOWING EXAMPLE
\vspace{-1mm}
\begin{example}
Consider $\productionGraph{G}$ shown in
Fig.~\ref{fig:prod}. The productions are ordered as shown in the
Fig.~\ref{fig:grammar}. For example, the edge between $S$ and $c$ is
labeled (1,1) since $W_1$ is the first production, and $c$ is chosen
as the first module in its body. Since there is only one cycle in
$\productionGraph{G}$ it is the first cycle, and its first (and only)
edge is (2,2). {\color{red}Thus we label edge $(R:1,A:2)$ in $T$ as (1,1,2),
meaning that $A:2$ is the second child of $R:1$ which corresponds to the first
cycle in $\productionGraph{G}$ starting from the first edge.}
\end{example}
\vspace{-2mm}
}
%\vspace{-1mm}

\begin{example}
Consider $\productionGraph{G}$ shown in
Fig.~\ref{fig:prod}. The productions are ordered as shown in the
Fig.~\ref{fig:grammar}. For example, the edge between $S$ and $c$ is
labeled (1,1) since $W_1$ is the first production, and $c$ is chosen
as the first module in its body. The edge labels for Fig.~\ref{fig:parse} were constructed as follows:  
Labels on edges from the root $S$ of $T$ were taken from the production graph in Fig.~\ref{fig:prod}.  Since there is only one cycle in
$\productionGraph{G}$ it is the first cycle, and its first (and only)
edge is (2,2). Thus the cycle is labeled (1,1). The children of the
recursive node $R$ were ordered by their order of execution. Thus we label edge $(R:1,A:2)$ in $T$ as (1,1,2),
meaning that $A:2$ is the second child of $R:1$ which corresponds to the first
cycle in $\productionGraph{G}$ starting from the first edge.
\end{example}
%\vspace{-2mm}

A module execution (node $v$ in $T$) is labeled using the
concatenation of edge labels from the root of $T$ to $v$, denoted by
$\psi_V(v)$.  For example,  $\psi_V(b:2)=(1,3)(4,1)$.

\eat{
\vspace{-1mm}
\begin{example}
\label{eg:label-run}
The edge labels for Fig.~\ref{fig:parse} were constructed as follows:  
Labels on edges from the root $S$ of $T$ were taken from the production graph in Fig.~\ref{fig:prod}.  The only cycle in $T$ is labeled (1,1), and the children of the recursive node $R$ were ordered by their order of execution.
Module executions (not illustrated) are then labeled by their root to node path.  For example,  $\psi_V(b:2)=\{(1,3),(4,1)\}$.\eat{ and $\psi_V(b:1)=\{(1,4)\}$.}
\end{example}}

{\bf Decoding $\pi(\psi_V(u),\psi_V(v),G)$.}
Given a pair of module executions $u$, $v$ in a run that was generated from the workflow $G$, the predicate $\pi(\psi_V(u),\psi_V(v),G)$ outputs whether $u$ is reachable to $v$
in the run. A constant-time algorithm to evaluate
$\pi(\psi_V(u),\psi_V(v),G)$  is presented
in~\cite{DBLP:conf/sigmod/BaoDM12}. The subtlety is that the workflow
$G$ is taken as a parameter. As a very simple example, consider node
$c:1$ and $b:1$ of the run in Fig.~\ref{fig:stack-join} which was
derived from the (safe) workflow in Fig.~\ref{fig:specQ}. Once we
know $c:1$ and $b:1$ are from the same node replacement i.e. $(S:1,
S\rightarrow W_1')$ (which is determined by identifying their least common ancestor in the compressed parse tree (Fig.~\ref{fig:parse}) using their labels), we know directly from $W_1'$ the connectivity between $c:1$ and $b:1$. This is done in constant time because we access the specification rather than the run. Details of the decoding algorithm are omitted here, since they are not necessary for understanding the new techniques that will be proposed in this work.

\eat{
{\bf Decoding $\pi(\psi_V(u),\psi_V(v),G)$.}
Given a pair of module executions $u$, $v$ in a run that was generated
from the workflow $G$, the predicate
$\pi(\psi_V(u),\psi_V(v),G)$ outputs whether $u$ is reachable to $v$
in the run in constant time. The subtlety is that the workflow $G$ is taken as a
parameter. As a very simple example, consider node $c:1$ and $b:1$
of the run in Fig.~\ref{fig:stack-join} which was derived from the
(safe) workflow in Fig.~\ref{fig:specQ}. Once we know $c:1$ and
$b:1$ are from the same node replacement i.e. $(S:1, S \rightarrow W_1')$
(which is determined by identifying their least common ancestor in the
compressed parse tree (Fig.~\ref{fig:parse}) using their labels), 
we know directly from $W_1'$ the connectivity between $c:1$ and
$b:1$. This is done in constant time because we access the
specification rather than the run.

\eat{$\psi_V(u)$ and  $\psi_V(v)$ are then used together with the fine-grained grammar $G$ from which the run was generated to answer 
the reachability query between $u$, $v$. }

First, the most recent composite module execution $m$ in which $u$ and $v$ co-occur in the compressed parse tree is identified.  
This is done by taking the longest common prefix of $\psi_V(u)$ and  $\psi_V(v)$, which is the label of $u$ and $v$'s least common ancestor in $T$.  The production $p:M\rightarrow W$  which was used 
to implement $m$ can also be identified by the children of $m$ from
which $u$ and $v$ were derived, $m_u$, $m_v$. The connectivity between $u$ and $v$ has three parts: the connectivity
between  $m_u$ and $m_v$, between $u$ and $m_u$,   and between $m_v$
and $v$. Since the workflow is safe, we
can test these three parts in constant time.
Connectivity for the first part,  ($m_u$, $m_v$), has two cases: either $m$ is a composite module or it is a special recursive node.  If $m$ is a composite module, then $m_u$, $m_v$ are from the same simple workflow $W$, and connectivity can be easily tested.  If $m$ is a recursive node, then
since the grammar is strictly linear-recursive (and safe), the recursion being fired has a common connectivity pattern repeated many times. \eat{If we use a $|Q|\times |Q|$ boolean matrix to represent the connectivity of the common pattern (this can be calculated using $\lambda$), then a pattern being repeated only results in a limited number of different connectivities (at most  $|Q|^2$).} Thus by doing {\em fast matrix multiplication}, we can get the connectivity between $m_u$ and $m_v$ in constant time.   %Details can be found in \cite{DBLP:conf/sigmod/BaoDM12}.

\usetikzlibrary{decorations.pathmorphing}

\begin{figure}[h]
\centering
\begin{tikzpicture}
\node(tl) {};
\node(bl) [below of=tl, node distance=60]{};
\node(tr) [right of=tl, node distance=200]{};
\node(br) [below of=tr, node distance=60]{};
\draw[dashed] (tl) rectangle (br);
\node(M) [below of=tl, right of=tl, node distance=10] {$m$};

\node(tl2) [below of=M, node distance=10] {};
\node(bl2) [below of=tl2, node distance=30]{};
\node(tr2) [right of=tl2, node distance=80]{};
\node(br2) [below of=tr2, node distance=30]{};
\draw[dashed] (tl2) rectangle (br2);
\node(Mu) [below of=tl2, right of=tl2, node distance=7] {$m_u$};

\node(u)[below right=0.2 and 0.6 of tl2]{$u$};
\draw[dashed](u) +(-0.2,-0.2) rectangle  +(0.4,0.2);
\node(uo)[right of=u, node distance=7] {};
\draw(uo) circle (0.05);

\node(Muo) [right of=u, node distance=47]{};
\draw(Muo) circle (0.05);
%\draw[->,>=stealth,snake it] (uo)--(Muo);
\draw[-stealth,
decoration={snake, 
    amplitude = .4mm,
    segment length = 2mm,
    post length=0.9mm},decorate] (uo) -- (Muo) node (t1) [pos=0.5, below]{(1)};

\node(tl3) [right of=tl2, node distance=100] {};
\node(bl3) [below of=tl3, node distance=30]{};
\node(tr3) [right of=tl3, node distance=80]{};
\node(br3) [below of=tr3, node distance=30]{};
\draw[dashed] (tl3) rectangle (br3);
\node(Mv) [below of=tl3, right of=tl3, node distance=7] {$m_v$};

\node(v)[below right=0.2 and 1.6 of tl3]{$v$};
\draw[dashed](v) +(-0.4,-0.2) rectangle  +(0.2,0.2);
\node(vo)[left of=v, node distance=7] {};
\draw(vo) circle (0.05);

\node(Mvo) [left of=v, node distance=47]{};
\draw(Mvo) circle (0.05);
%\draw[->,>=stealth,snake it] (uo)--(Muo);
\draw[-stealth,
decoration={snake, 
    amplitude = .4mm,
    segment length = 2mm,
    post length=0.9mm},decorate] (Mvo) -- (vo) node (t3) [pos=0.5, below]{(3)};

\draw[-stealth,
decoration={snake, 
    amplitude = .4mm,
    segment length = 2mm,
    post length=0.9mm},decorate] (Muo) -- (Mvo) node (t2) [pos=0.5, below]{(2)};

\end{tikzpicture}
\vspace{-2mm}
\caption{Connectivity between $u$ and $v$}
\vspace{-2mm}
\label{fig:connectivity}
\end{figure}

For connectivity between $u$ and $m_u$ (or $m_v$ and $v$), we need to consider all the productions fired along the path from $m_u$ to $u$.  Again, since the workflow is safe, \eat{$\lambda(M)$ exists for all $M$, and therefore} for any production $M\rightarrow W$ we know the connectivity between a node in $W$ and the output of $M$. Transitively, we therefore know the connectivity between $u$ and $m_u$. 
}

\eat{
{\bf Labeling $\psi_V(u)$.}
The labeling function $\psi_V$ assigns a label to each node $u$ when
the node is derived and will not change the label as the workflow is
executed. The approach is based on a tree representation for a run, called the {\em compressed parse tree}. In contrast to the traditional parse tree used for context-free grammars whose depth may be proportional to the size of the run, the depth of a compressed parse tree is bounded by the size of the specification.   The compressed parse is constructed in a top-down manner, i.e. as productions are fired. A label 
is assigned to each node (module execution) as soon as it is executed, and encodes the sequence of derivation steps that create the module. %Details can be found in \cite{DBLP:conf/sigmod/BaoDM12}.

\begin{figure}[h]
\centering
\vspace{-2mm}
\includegraphics[scale=0.3]{parse-crop.pdf}
%\vspace{-4mm}
\caption{\small Compressed Parse Tree}
\vspace{-4mm}
\label{fig:parse}
\end{figure}

\vspace{-1mm}
\begin{example}
\label{eg:compressed-parse-tree}
The compressed parse tree %$\compressedParseTree{R}$
for the run in Figure \ref{fig:run} is shown in Fig.~\ref{fig:parse} (ignore the tree edge labels for now).  
Each leaf node denotes an atomic module, and each non-leaf node denotes either a composite module or a linear recursion, called a {\em recursive node} and labeled $R$. The children of a composite node denote the modules of the simple workflow produced by the production used in its execution; and the children of a recursive node denote a sequence of nested composite modules obtained by unfolding a cycle in the production graph.  
As before, occurrence numbers are used to disambiguate module executions.  In the sample run, 
$A$ is executed three times (denoted $A:1$, $A:2$, $A:3$), twice using $W_2$ and the final time using $W_3$.
\end{example}
\vspace{-1mm}

Edges in the compressed parse tree $T$ are labeled as follows (we denote by $\psi_T(e)$ the label of an edge $e$):
We begin with assigning labels to edges of the production graph of the
specification, $\productionGraph{G}$
\eat{(recall Section~\ref{sec:cond1})}. First of all, fix an arbitrary ordering among the productions in $P$, and for each production $\production{M}{W}$, fix an arbitrary topological ordering among the modules in $W$. 
Let $p_k = \production{M}{W}$ be the $k$th production in $P$, and
$M_i$ be the $i$th module in $W$, then we assign the edge from $M$ to
$M_i$ in $\productionGraph{G}$ a pair $(k, i)$.  
In addition, fix an arbitrary ordering among all the cycles in
$\productionGraph{G}$, and for each cycle, fix an arbitrary edge as
the first edge of the cycle. We are now ready to label $T$. 
Let $e=(u,v)$ be an edge of $T$.
%We denote by $\cycle{s}$ the $s$th cycle in $\productionGraph{G}$ containing a list of number pairs.
(1) If $u$ is a composite node, then $e$ can be mapped to an edge $e'$ in $\productionGraph{G}$.  
Let $e' = (k, i)$, then $\psi_T(e) = (k, i)$; and (2) otherwise (if $u$ is a recursive node), let $u$ denote the $s$th cycle in $\productionGraph{G}$ starting from the $t$th edge. 
%This can be determined by the first child of $u$.
Let $v$ be the $i$th child of $u$, then $\psi_T(e) = (s, t, i)$.
\eat{
\vspace{-2mm}
\begin{example}
Consider again the production graph $\productionGraph{G}$ shown in
Fig.~\ref{fig:prod} and the compressed
parse tree $T$ in Fig.~\ref{fig:parse}. The edges of
$\productionGraph{G}$ are labeled. \eat{The productions in $\productionGraph{G}$ are ordered as shown in the
Fig.~\ref{fig:grammar}.} For example, the edge between $S$ and $c$ is
labeled (1,1) since $W_1$ is the first production, and $c$ is chosen
as the first module in its body. The edge $(S:1, c:1)$ in $T$
corresponds to the edge $(S,C)$ in $\productionGraph{G}$ and thus is
labeled as (1,1).  Since there is only one cycle in
$\productionGraph{G}$ it is the first cycle, and its first (and only)
edge is (2,2). Thus we label edge $(R:1,A:2)$ in $T$ as (1,1,2),
meaning that $A:2$ is the second child of $R:1$ which corresponds to the first
cycle in $\productionGraph{G}$ starting from the first edge.
\end{example}
\vspace{-2mm}}

\vspace{-1mm}
\begin{example}
Consider $\productionGraph{G}$ shown in
Fig.~\ref{fig:prod}. The productions are ordered as shown in the
Fig.~\ref{fig:grammar}. For example, the edge between $S$ and $c$ is
labeled (1,1) since $W_1$ is the first production, and $c$ is chosen
as the first module in its body. The edge labels for Fig.~\ref{fig:parse} were constructed as follows:  
Labels on edges from the root $S$ of $T$ were taken from the production graph in Fig.~\ref{fig:prod}.  Since there is only one cycle in
$\productionGraph{G}$ it is the first cycle, and its first (and only)
edge is (2,2). Thus the cycle is labeled (1,1). The children of the
recursive node $R$ were ordered by their order of execution. Thus we label edge $(R:1,A:2)$ in $T$ as (1,1,2),
meaning that $A:2$ is the second child of $R:1$ which corresponds to the first
cycle in $\productionGraph{G}$ starting from the first edge.
\end{example}
\vspace{-2mm}

A module execution (node $v$ in $T$) is labeled using the
concatenation of edge labels from the root of $T$ to $v$, denoted by
$\psi_V(v)$.  For example, $\psi_V(b:2)=\{(1,3),(4,1)\}$.\eat{ and $\psi_V(b:1)=\{(1,4)\}$.}

\eat{\vspace{-1mm}
\begin{example}
\label{eg:label-run}

Module executions (not illustrated) are then labeled by their root to node path.  For example,  $\psi_V(b:2)=\{(1,3),(4,1)\}$.\eat{ and $\psi_V(b:1)=\{(1,4)\}$.}
\end{example}}

{\bf Decoding $\pi(\psi_V(u),\psi_V(v),G)$.}
Given a pair of module executions $u$, $v$ in a run that was generated
from the workflow $G$, the predicate
$\pi(\psi_V(u),\psi_V(v),G)$ outputs whether $u$ is reachable to $v$
in the run in constant time. The subtlety is that the workflow $G$ is taken as a
parameter. As a very simple example, consider node $c:1$ and $b:1$
of the run in Fig.~\ref{fig:stack-join} which was derived from the
(safe) workflow in Fig.~\ref{fig:specQ}. Once we know $c:1$ and
$b:1$ are from the same node replacement, i.e. $S:1\rightarrow W_1'$
(which is determined by identifying their least common ancestor in the
compressed parse tree (Fig.~\ref{fig:parse}) using their labels), 
we know directly from $W_1'$ the connectivity between $c:1$ and
$b:1$. This is done in constant time because we access the
specification rather than the run.

\eat{$\psi_V(u)$ and  $\psi_V(v)$ are then used together with the fine-grained grammar $G$ from which the run was generated to answer 
the reachability query between $u$, $v$. }

First, the most recent composite module execution $m$ in which $u$ and $v$ co-occur in the compressed parse tree is identified.  
This is done by taking the longest common prefix of $\psi_V(u)$ and  $\psi_V(v)$, which is the label of $u$ and $v$'s least common ancestor in $T$.  The production $p:M\rightarrow W$  which was used 
to implement $m$ can also be identified by the children of $m$ from
which $u$ and $v$ were derived, $m_u$, $m_v$. The connectivity between $u$ and $v$ has three parts: the connectivity
between  $m_u$ and $m_v$, between $u$ and $m_u$,   and between $m_v$
and $v$. Since the workflow is safe, we
can test these three parts in constant time.
Connectivity for the first part,  ($m_u$, $m_v$), has two cases: either $m$ is a composite module or it is a special recursive node.  If $m$ is a composite module, then $m_u$, $m_v$ are from the same simple workflow $W$, and connectivity can be easily tested.  If $m$ is a recursive node, then
since the grammar is strictly linear-recursive (and safe), the recursion being fired has a common connectivity pattern repeated many times. \eat{If we use a $|Q|\times |Q|$ boolean matrix to represent the connectivity of the common pattern (this can be calculated using $\lambda$), then a pattern being repeated only results in a limited number of different connectivities (at most  $|Q|^2$).} Thus by doing {\em fast matrix multiplication}, we can get the connectivity between $m_u$ and $m_v$ in constant time.   %Details can be found in \cite{DBLP:conf/sigmod/BaoDM12}.

\eat{
\begin{figure}[h]
\centering
\includegraphics[scale=0.3]{decoding-crop.pdf}
\vspace{-1mm}
\caption{Connectivity between $u$ and $v$}
\vspace{-3mm}
\label{fig:connectivity}
\end{figure}}

\usetikzlibrary{decorations.pathmorphing}

\begin{figure}[h]
\centering
\begin{tikzpicture}
\node(tl) {};
\node(bl) [below of=tl, node distance=60]{};
\node(tr) [right of=tl, node distance=200]{};
\node(br) [below of=tr, node distance=60]{};
\draw[dashed] (tl) rectangle (br);
\node(M) [below of=tl, right of=tl, node distance=10] {$m$};

\node(tl2) [below of=M, node distance=10] {};
\node(bl2) [below of=tl2, node distance=30]{};
\node(tr2) [right of=tl2, node distance=80]{};
\node(br2) [below of=tr2, node distance=30]{};
\draw[dashed] (tl2) rectangle (br2);
\node(Mu) [below of=tl2, right of=tl2, node distance=7] {$m_u$};

\node(u)[below right=0.2 and 0.6 of tl2]{$u$};
\draw[dashed](u) +(-0.2,-0.2) rectangle  +(0.4,0.2);
\node(uo)[right of=u, node distance=7] {};
\draw(uo) circle (0.05);

\node(Muo) [right of=u, node distance=47]{};
\draw(Muo) circle (0.05);
%\draw[->,>=stealth,snake it] (uo)--(Muo);
\draw[-stealth,
decoration={snake, 
    amplitude = .4mm,
    segment length = 2mm,
    post length=0.9mm},decorate] (uo) -- (Muo) node (t1) [pos=0.5, below]{(1)};

\node(tl3) [right of=tl2, node distance=100] {};
\node(bl3) [below of=tl3, node distance=30]{};
\node(tr3) [right of=tl3, node distance=80]{};
\node(br3) [below of=tr3, node distance=30]{};
\draw[dashed] (tl3) rectangle (br3);
\node(Mv) [below of=tl3, right of=tl3, node distance=7] {$m_v$};

\node(v)[below right=0.2 and 1.6 of tl3]{$v$};
\draw[dashed](v) +(-0.4,-0.2) rectangle  +(0.2,0.2);
\node(vo)[left of=v, node distance=7] {};
\draw(vo) circle (0.05);

\node(Mvo) [left of=v, node distance=47]{};
\draw(Mvo) circle (0.05);
%\draw[->,>=stealth,snake it] (uo)--(Muo);
\draw[-stealth,
decoration={snake, 
    amplitude = .4mm,
    segment length = 2mm,
    post length=0.9mm},decorate] (Mvo) -- (vo) node (t3) [pos=0.5, below]{(3)};

\draw[-stealth,
decoration={snake, 
    amplitude = .4mm,
    segment length = 2mm,
    post length=0.9mm},decorate] (Muo) -- (Mvo) node (t2) [pos=0.5, below]{(2)};

\end{tikzpicture}
\vspace{-2mm}
\caption{Connectivity between $u$ and $v$}
\vspace{-2mm}
\label{fig:connectivity}
\end{figure}

For connectivity between $u$ and $m_u$ (or $m_v$ and $v$), we need to consider all the productions fired along the path from $m_u$ to $u$.  Again, since the workflow is safe, \eat{$\lambda(M)$ exists for all $M$, and therefore} for any production $M\rightarrow W$ we know the connectivity between a node in $W$ and the output of $M$. Transitively, we therefore know the connectivity between $u$ and $m_u$. 

\eat{
%REMOVED by Xiaocheng
\vspace{-1mm}
\begin{example}
\label{eg:pairwiseQ1}
Continuing with the example, suppose we are given\eat{ the regular
  path query $R_3=\; (\_)^*e(\_)^*$ and} node pair  $(b:2,\;b:1)$ in Fig.~\ref{fig:stack-join}. The
connectivity between $b:2$ and $b:1$ is shown below.  
Since $\psi_V(b:2)=\{(1,3),(4,1)\}$ and $\psi_V(b:1)=\{(1,4)\}$, their longest common prefix is the empty string, and therefore their least common ancestor is the root of $T$ in Fig.~\ref{fig:parse}
($S:1$).  The remainder of 
$\psi_V(b:2)$ after the longest common prefix is removed is $\{(1,3),(4,1)\}$, and therefore the module from which $b:2$ was derived is the third module of the first production, module $B:1$.  
The remainder of  $\psi_V(b:1)$ after the longest common prefix is removed is $\{(1,4)\}$, and therefore the module from which $b:1$ was derived is $S:1$. 
 Furthermore, $b:2$ is the first module in the fourth production
($W'_4$).

\begin{figure}[h]
\centering
\begin{tikzpicture}
\node(tl) {};
\node(bl) [below of=tl, node distance=60]{};
\node(tr) [right of=tl, node distance=210]{};
\node(br) [below of=tr, node distance=60]{};
\draw[dashed] (tl) rectangle (br);
\node(m)[below of=tl, node distance=10]{};
\node(M)[right of=m, node distance=21] {$S:1~ (W_1')$};

\node(tl2) [below of=M, node distance=10] {};
\node(bl2) [below of=tl2, node distance=30]{};
\node(tr2) [right of=tl2, node distance=80]{};
\node(br2) [below of=tr2, node distance=30]{};
\draw[dashed] (tl2) rectangle (br2);
\node(mu)[below of = tl2, node distance = 6]{};
\node(Mu) [ right of=mu, node distance=25] {$B:1~(W_4')$};

\node(u)[below right=0.3 and 0.8 of tl2]{$b:2$};
\draw[dashed](u) +(-0.5,-0.2) rectangle  +(0.6,0.2);
\node(uo)[right of=u, node distance=11] {};
\draw(uo) circle (0.05);

\node(Muo) [right of=u, node distance=38]{};
\draw(Muo) circle (0.05);
%\draw[->,>=stealth,snake it] (uo)--(Muo);
\draw[-stealth,
decoration={snake, 
    amplitude = .4mm,
    segment length = 2mm,
    post length=0.9mm},decorate] (uo) -- (Muo) node (t1) [pos=0.5, below]{(1)};

\node(tl3) [right of=tl2, node distance=100] {};
\node(bl3) [below of=tl3, node distance=30]{};
\node(tr3) [right of=tl3, node distance=80]{};
\node(br3) [below of=tr3, node distance=30]{};
\draw[dashed] (tl3) rectangle (br3);
\node(mv)[below of=tl3, node distance=6]{};
\node(Mv) [ right of=mv, node distance=10] {$b:1$};

\eat{
\node(v)[below right=0.2 and 1.6 of tl3]{$b:1$};
\draw[dashed](v) +(-0.6,-0.2) rectangle  +(0.4,0.2);
\node(vo)[left of=v, node distance=11] {};
\draw(vo) circle (0.05);}

\node(Mvo) [right of=Muo, node distance=30]{};
\draw(Mvo) circle (0.05);
\eat{
%\draw[->,>=stealth,snake it] (uo)--(Muo);
\draw[-stealth,
decoration={snake, 
    amplitude = .4mm,
    segment length = 2mm,
    post length=0.9mm},decorate] (Mvo) -- (vo) node (t3) [pos=0.5, below]{(3)};}

\draw[-stealth,
decoration={snake, 
    amplitude = .4mm,
    segment length = 2mm,
    post length=0.9mm},decorate] (Muo) -- (Mvo) node (t2) [pos=0.5, below]{(2)};

\end{tikzpicture}
\vspace{-2mm}
\caption{Connectivity between $b:2$ and $b:1$}
\vspace{-2mm}
\label{fig:connectivity}
\end{figure}

The connectivity between $B:1$ and $b:2$ (marked as $(1)$) can be obtained by
looking at $W_4'$ where paths between $B:1$ and $b:2$ leave the states
unchanged. Similarly, the connectivity between $B:1$ and $b:1$ (marked as $(2)$)
can be obtained from $W_1'$. Combining $(1)$ and $(2)$ we know that the
$q_0$ output port of $b:1$ depends on solely the input port $q_0$ of
$b:2$ and $q_f$ of $b:1$ depends on solely $q_f$ of $b:1$.
\eat{Combining $(1)$ and $(2)$ we know that the
paths between $b:2$ and $b:1$ leave the states of the DFA
unchanged. Thus there is no path from  the $q_0$ port of $b:2$ (which is also the first occurrence of $b$'s port) 
%output port $q_f$ of $b$ to the input port $q_0$ of $B$, 
to the $q_f$ port of $b:1$, and the answer to the pairwise query is
``false".}  This can be confirmed by examining the sample run.
\end{example}
\vspace{-1mm}
}

\eat{
\vspace{-3mm}
\begin{example}
\label{eg:pairwiseQ2}
We are again given the query $R_3$ and node pair $(a:1, e:2)$, where $\psi_V(a:1)=\{(1,2),(1,1,1),(2,1)\}$ and $\psi_V(e:2)=\{(1,2),(1,1,3),(3,2)\}$.  
Their longest common prefix is $\{(1,2)\}$, which identifies $R:1$.  The module from which $e:2$ is derived within $R:1$ is $A$, and for $a:1$ it is also $A$.  We therefore test if there is a path between
the output port $q_f$ of $A$ to the input port $q_0$ in $A$ in $G_Q$.  Since there is, the answer to the pairwise query is  ``true".
This can again be confirmed by examining the sample run.
\end{example}
}

\eat{Note that the labeling of the run, $\psi_V$, is done at run time using
the compressed parse tree.  The fine-grained specification label,
$\lambda$, is calculated at query time as described in the previous
subsection.}
}
\section{Answering Pairwise Safe Queries}
\label{sec:pairwise}

In this section, we show how to answer pairwise safe queries, assuming that the execution has been labeled using the reachability labeling scheme of \cite{DBLP:conf/sigmod/BaoDM12}. 
To achieve this, we must do two things: 1) reduce regular path queries  to equivalent reachability queries;  and
%2) since the reachability labeling is done when query $R$ is unknown, the reduction must not change the labels of runs; 
2) identify constraints on $G$ and $R$ which allow reachability labeling to be used.

Informally, our approach works as follows. We reduce regular path
queries on a coarse-grained workflow $G$ to equivalent reachability queries on
a fine-grained (and query-specific) workflow $G_R$.  This workflow is obtained by
intersecting $G$ with a DFA of query $R$, thereby modeling DFA state transitions within modules of $G_R$ while leaving the sequence of productions unchanged (Section~\ref{sec:transform}).
Since reachability labeling only works for safe workflows, 
we discuss in Section~\ref{sec:constraints} a class of {\em safe
  queries} which guarantee that the query-intersected workflow $G_R$ is safe.
Since the run $g$ is labeled with information about the sequence of productions used as
it was executed, the (pre-existing) reachability labels of a pair of nodes $u$, $v$ can then be combined at query time with DFA state transition information in $G_R$ to answer 
$u\stackrel{R}\leadsto v$ (Section~\ref{sec:decodingReg}).

We start in Section~\ref{sec:queries} by formally defining the class of queries studied in this paper. 

\subsection{Regular path queries}
\label{sec:queries}

To simplify the presentation, in this paper we will consider
regular path queries on  {\em  coarse-grained} workflows in which each module has a single-input and
single-output, and the output is assumed to depend on the input; we
also assume  that simple workflows are {\em acyclic}.
\eat{
We will reduce regular path queries on coarse-grained workflows to reachability queries on fine-grained
workflows in Section~\ref{sec:transform}.
}

Queries are {\em regular expressions} over edge tags, defined using concatenation, alternation and Kleene star:
$$e~:=~c~|~e_1e_2~|~e_1+e_2~|~e_1^*~|~e_1^+$$ 
where $c~:=~\epsilon~|~\_~|~a$ is a constant regular expression
($\epsilon$ is the empty string and $~\_~$ is the wildcard symbol that
matches any single symbol);
$e_1e_2$ denotes the concatenation of two sub-expressions; $e_1+e_2$ denotes alternation; and $e_1^*$ ($e_1^+$) denotes the 
set of all strings that can be obtained by concatenating zero (one) or
more strings chosen from $e_1$. Given a regular expression $R$, we
denote by $L(R)$ the set of strings that conform to $R$.

%\vspace{-1mm}
\begin{definition}
\label{def:regular-query}
{\bf (Regular Path Query)}
Let $G$ be a workflow specification and $g \in L(G)$ be a run.  Given a path 
$p = v_0 \xrightarrow{e_1} v_1 \xrightarrow{e_2} v_2 \ldots v_{n-1} \xrightarrow{e_n} v_n$ in $g$, we define $\tau_P(p)$ to be the concatenation of all edge tags on this
path, that is, $\tau_P(p) = \tau_E(e_1) \tau_E(e_2) \ldots \tau_E(e_n) \in  \Gamma^*$. A {\em regular path query} $R$ over $g$ is a {\em regular expression}  over
$\Gamma$. The result of $R$ on $g$  is defined as the set of node pairs $(u, v)$ in $g$ such that there is a path $p$ in $g$ from $u$ to
$v$  where $\tau_P(p) \in L(R)$.
\end{definition}
%\vspace{-2mm}

In this paper, we study two related sub-problems of answering regular path queries over workflow runs, {\em pairwise} queries and {\em all-pairs} queries.
%\vspace{-1mm}
\begin{definition}{\bf (Pairwise  Query)} 
Given two nodes $u,v$ from an edge-tagged graph $g$,  a {\em pairwise query} $R$ asks if there exists a path $p$ from $u$ to $v$ in $g$
such that $\tau_P(p) \in L(R)$, denoted by $u\stackrel{R}\leadsto v$.  
 \end{definition}
 %\vspace{-1mm}
 
The answer to a pairwise query is either true or false; reachability is a special case ($R= \; \_^*$).

%\vspace{-1mm}
 \begin{definition}{\bf (All-Pairs  Query)} 
Given two lists of nodes $l_1,l_2$ from an edge-tagged graph $g$,  {\em all-pairs query} $R$ asks for all node pairs $(u,v)\in
l_1\times l_2$ such that $u\stackrel{R}\leadsto v$. 
 \end{definition}
%\vspace{-2mm}

%\vspace{-1mm}
\begin{example}
Let $R_1= \;A^+$ and $R_2= \; A$.  
Revisiting the run in Fig.~\ref{fig:run}, the pairwise query result of $R_1$ for $(d:2, b:1)$ is true, but is false for $R_2$.  
The all-pairs query result of $R_1$ for $l_1= \{d:1, d:2, e:2\}$, $l_2=\{b:1, b:2\}$ is $\{(d:1,b:1), (d:2,b:1),(e:2,b:1)\}$.
The all-pairs  query result of $R_2$ for $l_1$, $l_2$ is $\{(d:1,b:1)\}$.
\end{example}

In the next subsection, we reduce regular path queries on coarse-grained workflows to reachability queries on fine-grained
workflows.

\subsection{From regular path queries to reachability queries}
\label{sec:transform}

A simple %linear-time 
algorithm  for answering a pairwise regular path query $R$ over a run $g \in L(G)$ works as follows: 
augment each module in the run with input and output ports representing
the states of a DFA for $R$, and connect the output port of module execution $u$ 
representing state $q$ to the input port of module execution $v$ representing state $q'$ 
iff the tag of edge $e=(u,v)$ causes the DFA to transition from $q$ to
$q'$ ($\delta(q, \tau_E(e))= q'$).  Atomic modules leave states
unchanged. Then for any two nodes $u,v$ in
$g$, $u\stackrel{R}\leadsto v$ iff the input port of $u$ representing
the start state of the DFA reaches an output port of
$v$ representing an accepting state of the DFA. This algorithm is
linear in the run size since it needs to scan the run to perform the
intersection. 

%(The proof is similar to Lemma~\ref{lemma:intersect}).

%\vspace{-1mm}
\begin{example}
The fine-grained run  in Fig.~\ref{fig:stack-join} corresponds to the sample run in Fig.~\ref{fig:run},
augmented with state transition information for query $R_3$ (Fig.~\ref{fig:Q1}).  Since there are two
states in the DFA for $R_3$ , $q_0$ and $q_f$,
each module execution has two input ports and two output ports.
%By ``corresponds", we mean that same sequence of node replacements (derivation graph) was used in the two runs.
Since there is an edge tagged $e$ between $e:1$ and $e:2$ in the sample run, the output
port $q_0$ of $e:1$ connects to the input port $q_f$ of $e:2$.  All other edge tags  leave the DFA in the  same state.

In the sample run, $R_3= (\_)^*e(\_)^*$ evaluates to true for $(c:1,b:1)$, but false for $(c:1,b:3)$.  Correspondingly, in the fine-grained run there is a path from the 
input port $q_0$ of $c:1$ to the output port $q_f$ of $b:1$, but 
there is no path from the input port $q_0$ of $c:1$ to the output port $q_f$ of $b:3$.
\end{example}

\begin{figure}
\centering
\includegraphics[scale=0.25]{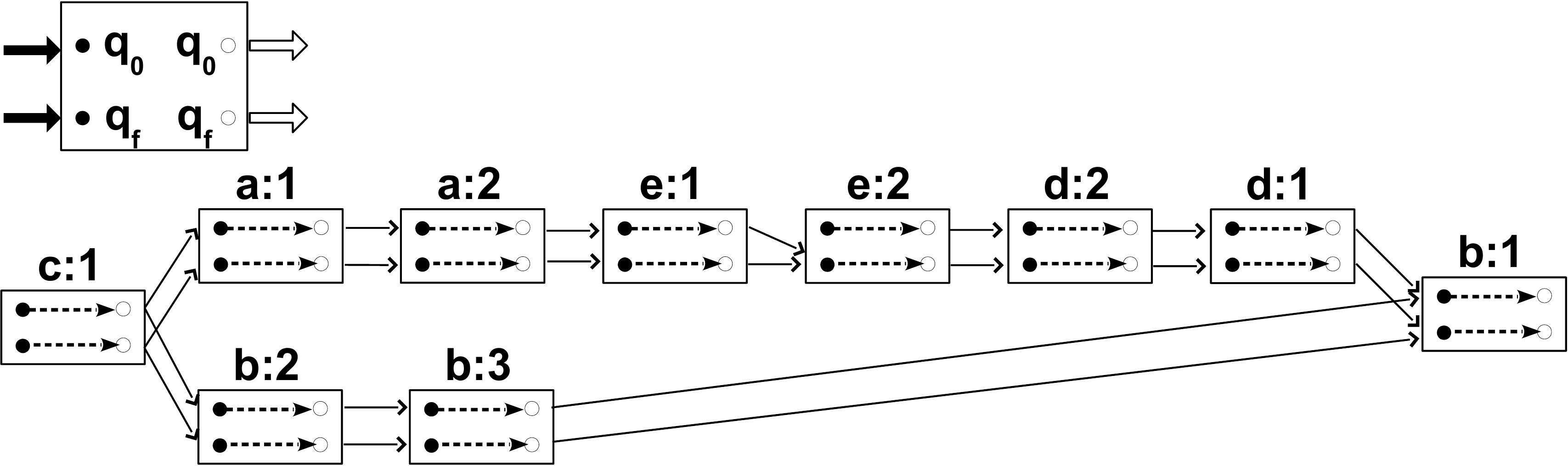}
\vspace{-2mm}
\caption{Fine-grained representation of sample run  for query $R_3$}
\vspace{-7mm}
\label{fig:stack-join}
\end{figure}

However, since the run is very large compared to the specification, we do not actually want to generate the query-augmented run.  Rather, we augment the workflow {\em specification}
with state-transition information from the DFA  for $R$, transforming $G$ into a query-specific, fine-grained workflow\eat{(described in Section~\ref{sec:prelims})}.
We then use the derivation information encoded as labels in the run to answer pairwise queries.  

We now describe how to augment the workflow specification with DFA state-transition information by
{\em intersecting} the workflow with the DFA.

Let ${\cal M}=(Q,\Gamma,\delta,q_0,F)$ be a DFA of query $R$. 
\eat{Intersecting the specification $G=(\Sigma,\Delta,S,P)$ with $\cal
  M$ constructs a new specification $G'=(\Sigma',\Delta',S',P')$ as
  follows:  }
We intersect the specification $G=(\Sigma,\Delta,S,P)$ with $\cal
  M$ to obtain a new specification $G_R=(\Sigma',\Delta',S',P')$ as
  follows:

\begin{enumerate}
\item For each module $M\in\Sigma$, create an {\em augmented} module $f_M(M)$ in $\Sigma'$, 
where $f_M(M)$ has $|Q|$  input ports $I_1\ldots I_{|Q|}$ and $|Q|$ output ports $O_1\ldots O_{|Q|}$, corresponding to the $|Q|$ states of $\cal M$.   
Module names are preserved, i.e. $name(f_M(M))=\; name(M)$.  For each
atomic module $M\in\Sigma/\Delta$, for each input port $q$ of $M$, there is
an edge from $q$ to the output port $q$ of $M$.
%We denote by $[M,q]_I$ ($[M,q] _O$) the input (output) port corresponding to state $q$ of module $M$.  

\item For each $M\in\Delta$, add $f_M(M)$ to $\Delta'$.  $S' = \; f_M(S)$.

\item For each production $p: M\rightarrow W\in P$  where $W=(V,E)$, construct a new production $f_P(p): f_M(M)\rightarrow W'\in P'$, where $W'=(V',E')$ is constructed from $W$
by (i) for each $v \in V$,  $f_M(v) \in V'$; and (ii) for each edge $e=(u,v)\in E$ there 
is an edge from output port $q$ of $f_M(u)$ to input port $q'$ of $f_M(v)$  iff $\delta(q,\tau_E(e))=q'$.
%$([f_M(u),q]_O,[f_M(v),q']_I)\in E'$ iff $\delta(q,\psi(e))=q'$. 
\end{enumerate}

\begin{example}
The intersection of the specification $G$ of Fig.~\ref{fig:grammar}
and the query $R_3=\_^*e\_^*$ in Fig.~\ref{fig:Q1}  is shown in
Fig.~\ref{fig:specQ} (ignore the edges inside composite modules for now). Each
module in $G_R$ has two input ports and two output ports corresponding
to $q_0$ and $q_f$.  
The only occurrence of  the edge tag $e$ in $G$ is on the edge $(e,e)$ in $W_2$. 
Since $\delta(q_0,e)=q_f$, $\delta(q_f,e)=q_f$, all output ports of the first $e$ in $W'_2$ are connected to the $q_f$ input port of the second $e$. 
All other augmented modules in $G_R$ connect $q_0$ to $q_0$ and $q_f$ to 
$q_f$. 
\end{example}
%\vspace{-3mm}

\begin{figure}[h]
\centering
\includegraphics[scale=0.2]{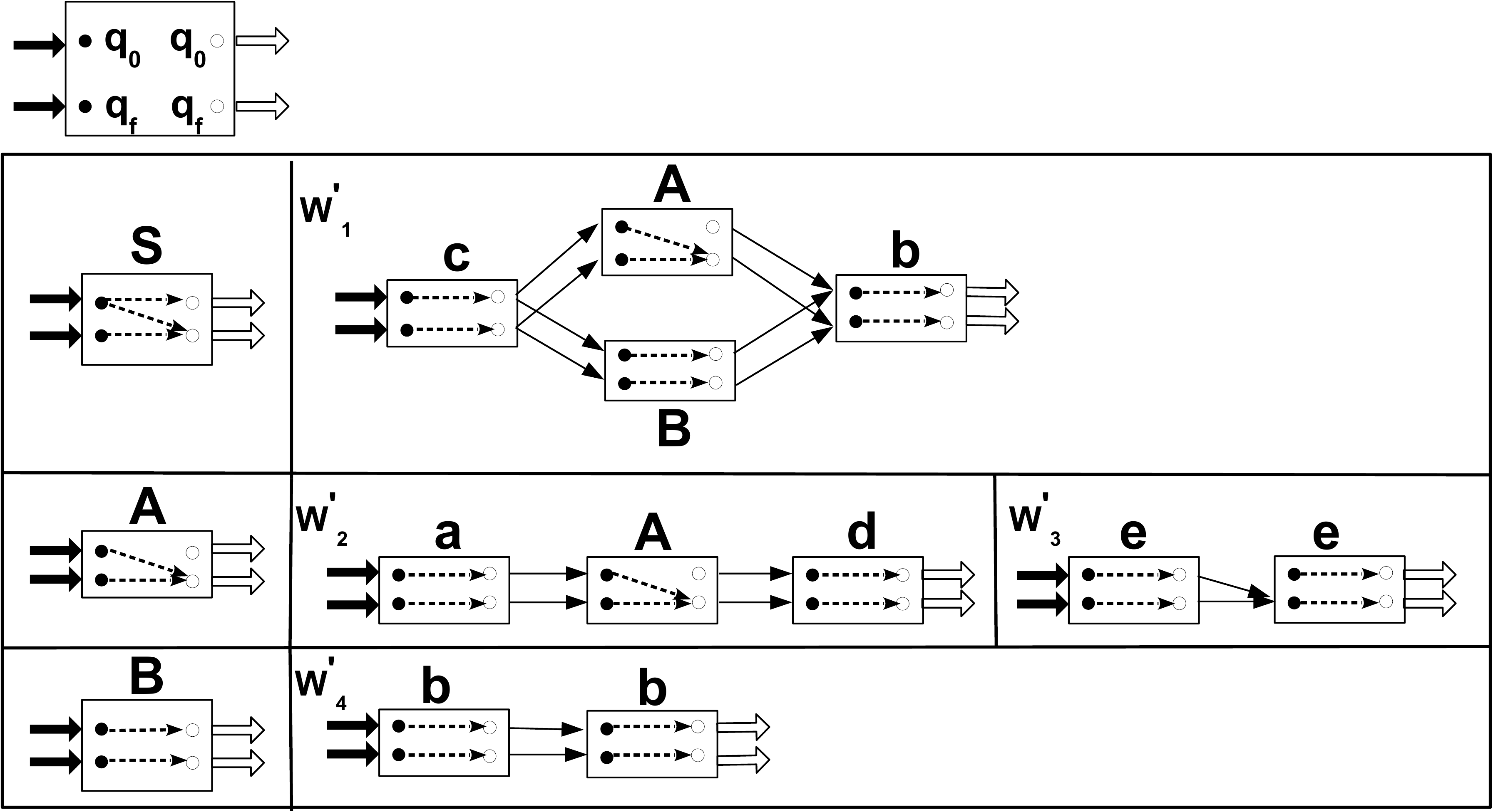}
\vspace{-1mm}
\caption{The fine-grained grammar $G_R$ obtained by intersecting $G$ with query $R_3$}
\vspace{-7mm}
\label{fig:specQ}
\end{figure}

%\scream{Note that till now, we can not draw internal edges for composite modules}

We now reduce a pairwise query $R$ over a run $g$
generated by  $G$ into an equivalent reachability query over a run
$g'$ generated by $G_R$. The correctness  is shown below. 

%\vspace{-2mm}
\begin{lemma}
Let $G$ be a workflow, $\cal M$ be a DFA for query $R$,
and $G_R$ be the intersection of $G$ with $\cal M$.
Given any two vertices $u,v$ of $g \in L(G)$,   $u\stackrel{R}\leadsto v$ iff 
the $q_0$ input port of $u$ reaches an accepting output port $q_f$ of $v$ in $g'$,
where $g' \in L(G_R)$ is the run obtained by the same sequence of node replacements
as $g$.
\label{lemma:intersect}
\end{lemma}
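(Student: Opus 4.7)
The plan is to prove both directions by establishing an invariant tying together the parallel derivations of $g$ and $g'$, and then converting paths back and forth using DFA semantics. Since $g$ and $g'$ are built by the same sequence of node replacements, let $g_0, g_1, \ldots$ and $g'_0, g'_1, \ldots$ be the intermediate partial runs. The invariant I want to maintain is: for every edge $e=(u,v)$ present in $g_i$ with tag $a=\tau_E(e)$, the corresponding augmented nodes in $g'_i$ are connected by an edge from output port $q$ of $u$ to input port $\delta(q,a)$ of $v$ for every $q\in Q$; moreover, each atomic node in $g'_i$ contains internal edges from input port $q$ to output port $q$ for every $q\in Q$.

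The invariant is proven by induction on $i$. The base case $i=0$ is trivial since $g_0$ has only the start node and no edges. For the inductive step, a replacement $(v,p)$ with $p: M\rightarrow W$ in $G$ is mirrored by $(v,f_P(p))$ with $f_P(p): f_M(M)\rightarrow W'$ in $G_R$. The newly added intra-$W$ edges in $g_i$ satisfy the claim directly by the construction of $W'$: clause (ii) of the intersection builds exactly the required transition edges from $\delta$, and clause on atomic modules installs the identity edges that preserve state. The edges inherited from the context around $v$ (the incoming/outgoing edges whose endpoints get rewired to nodes of $W$) retain their tags, and the augmented versions in $g'_i$ have ports on $f_M(v)$ that are in the same place before and after the rewrite, so the transition edges from the previous step still witness the invariant. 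The main subtlety lies precisely here — verifying that the port-level rewiring across a node replacement is consistent with the intersection construction on composite modules — and this is the step I expect to spend the most care on.

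Given the invariant for the terminal run $g$ (where all nodes are atomic), the two directions are now direct. For ($\Rightarrow$), suppose $p=v_0\xrightarrow{e_1}v_1\cdots\xrightarrow{e_n}v_n$ with $v_0=u$, $v_n=v$, and $a_1\cdots a_n=\tau_P(p)\in L(R)$. Let $q_i=\delta(q_{i-1},a_i)$ be the DFA run; then $q_n\in F$. Chain together in $g'$ the internal $q_0$-to-$q_0$ edge of $v_0$, the inter-module edge from output port $q_0$ of $v_0$ to input port $q_1$ of $v_1$ (which exists by the invariant since $\delta(q_0,a_1)=q_1$), the internal $q_1$-to-$q_1$ edge of $v_1$, and so on, ending at output port $q_n$ of $v_n$, yielding the required port-to-port path to an accepting output port of $v$.

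For ($\Leftarrow$), take any path $P'$ in $g'$ from input port $q_0$ of $u$ to output port $q_f$ of $v$ with $q_f\in F$. Decompose $P'$ into maximal blocks alternating between internal atomic-module edges and inter-module edges. Internal edges preserve state by the invariant, so every block enters and leaves a module at the same port index $q_i$; inter-module edges advance the state from $q_{i-1}$ to $q_i$ where, by the invariant, there must exist an edge $e_i=(v_{i-1},v_i)$ in $g$ with $\delta(q_{i-1},\tau_E(e_i))=q_i$. The projection to coarse-grained nodes therefore yields a path $u=v_0\to v_1\to\cdots\to v_n=v$ in $g$ whose tag sequence drives the DFA from $q_0$ to $q_f\in F$, so $\tau_P\in L(R)$ and $u\stackrel{R}{\leadsto}v$. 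Combining both directions gives the lemma.
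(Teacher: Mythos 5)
Your proposal is correct and follows essentially the same route as the paper's proof: both rely on the observation that the parallel derivations make $g'$ differ from $g$ only by the port-level transition edges and atomic identity edges, and then translate a matching path into a port-to-port path (and back) by tracking the DFA state. Your explicit induction on derivation steps merely makes rigorous the one-sentence correspondence the paper asserts, and your only-if direction spells out what the paper dismisses as ``analogous.''
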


%\eat{%%%%%%%%%%%%%%%%%%%%%%eat by huangxc 7/23/2014
\begin{proof}
Let $G_R=(\Sigma',\Delta', S', P')$ be the intersection of $G= (\Sigma,\Delta, S, P)$ and \M,
and  $g$ be a run of $G$ derived by the sequence of node replacement
$(v_1,p_1)\ldots (v_n,p_n)$. Note that $f_M,f_P$ are
bijective functions. 
The corresponding run $g' \in G_R$ is  derived by the sequence of node replacements
$(v_1',f_P(p_1))\ldots (v_n',f_P(p_n))$.  Observe that since $g$ and $g'$ are derived by corresponding sequences of node replacements, the only difference between
$g$ and $g'$ is the addition of input/output ports and connections between modules corresponding
to edge tag induced state transitions in the DFA.  

If $u\stackrel{R}\leadsto v$ in $g$, then there is a path $p= e_1...e_k$
from $u$ to $v$ in $g$ with  $e_i= (u_{i-1}, u_i)$ ($u=u_0$ and $v=u_k$) such that
$\delta(q_0, \tau_E(e_1))=q_1$, $\delta(q_1, \tau_E(e_2))=q_2$, ...,
$\delta(q_{k-1}, \tau_E(e_k))=q_f$, and $q_f$ is a final state in
${\cal M}$.   Thus there are corresponding edges $e'_i$
in $g'$ from output port $q_{i-1}$ of $u'_{i-1}$ to input port $q_i$ of $u'_i$.  Since modules in 
$g'$ are atomic, the $i$'th input port of each module is connected to its $i$'th output port. Hence
there is a path $p'$ in $g'$ from input port $q_0$ in $u'$ to output port $q_f$ in $v'$. 
The only-if direction proceeds analogously.
 \end{proof}
%}

The benefit of this approach is that intersecting the workflow $G$ with a DFA $\cal M$ to obtain $G_R$ models state transitions in $\cal M$ but  does not change the
sequence of productions in $G$.   Thus the reachability labels of \cite{DBLP:conf/sigmod/BaoDM12},
which rely solely on the production sequence that arrive at a node, can be used.
However, in order to use the results of \cite{DBLP:conf/sigmod/BaoDM12} $G_R$ must be 1)  strictly-linear recursive, a condition which is guaranteed by $G$ being strictly-linear recursive; and 2) safe, a condition which is  guaranteed by $R$ being a {\em safe query} for $G$.  
\eat{In the next subsection w}We next discuss the general problem of using (dynamic) labels for answering regular path queries, and then define safe queries.

\subsection{Safe Queries}
\label{sec:constraints}
%%%%%%%%%%%%%%%%%%%%%%%%%%%%%%%%%%%%%%%%%%%%%%%All this is deleted!
\eat{In this section, we define safe queries corresponding to safe
workflows  in \cite{DBLP:conf/sigmod/BaoDM12}.}

\eat{Similar to \cite{DBLP:conf/sigmod/BaoDM12}, for compact and efficient  labeling to be achievable for regular path queries, two constraints must be met:  1) the workflow $G$ must be strictly-linear recursive; and
2) the query $R$ must be {\em safe} with respect to $G$.
The first constraint arises (even when we consider coarse-grained
workflow model) since the (more general) problem of regular path queries over coarse-grained
workflows can be reduced to one of reachability in fine-grained workflows, as discussed in Section~\ref{sec:pairwise}, whereas
the second constraint is due to the particular added challenge introduced by regular path queries.}

\eat{
\subsubsection{Strictly-linear recursive workflows}  
\label{sec:cond1}
To define the first constraint, we use the notion of a {\em production graph}.

\vspace{-2mm}
\begin{definition}
\label{def:production-graph}
{\bf (Production Graph~\cite{DBLP:conf/sigmod/BaoDM12})}
Given a workflow  $G = (\Sigma, \Delta, S, P)$, the {\em production graph} of $G$ is a directed multigraph $\productionGraph{G}$ in which each vertex denotes a unique module in $\Sigma$. For each production $\production{M}{W}$ in $P$ and each module $M'$ in $W$, there is an edge from $M$ to $M'$ in $\productionGraph{G}$. Note that if $W$ has multiple instances of a module $M'$, then $\productionGraph{G}$ has multiple parallel edges from $M$ to $M'$.
\end{definition}
\vspace{-2mm}

\eat{
\vspace{-2mm}
\begin{example}
The production graph for the grammar  in Fig.~\ref{fig:grammar} is shown in Fig.~\ref{fig:prod} (ignore for now the pair of numbers on edges).  $G$ is recursive since there is a cycle in $\productionGraph{G}$.  The only recursive node is $A$.  
\end{example}
\vspace{-2mm}
}

\eat{MOVED
\begin{figure}[th!]
%\begin{minipage}{0.5\linewidth}
%\begin{center}
\centering
\subfloat[Workflow specification $G$]{\label{fig:grammar}\includegraphics[scale=0.2]{spec-crop.pdf}}
\\
\subfloat[A run of $G$]{\label{fig:run}\includegraphics[scale=0.2]{docs/run-crop.pdf}}
\\
\subfloat[Partial derivation graph]{\label{fig:derv}\includegraphics[scale=0.25]{docs/derivation-crop.pdf}}
\\
%\subfloat[Production graph $\productionGraph{G}$]{\label{fig:prod}\includegraphics[scale=0.6]{docs/ProdGraph.pdf}}
\subfloat[Production graph $\productionGraph{G}$]{\label{fig:prod}\includegraphics[scale=0.3] {docs/production-crop.pdf}}
\\
%\end{center}
%\end{minipage}
\caption{Sample Workflow}
\end{figure}
}

\vspace{-1mm}
\begin{definition}
\label{def:strictly-linear-grammar}
{\bf (Strictly Linear-Recursive Workflow~\cite{DBLP:conf/sigmod/BaoDM12})} 
A workflow $G$ is said to be {\em recursive} if $\productionGraph{G}$ is cyclic, and a module in $G$ is  recursive if it belongs to a cycle in $\productionGraph{G}$.
$G$ is  {\em strictly linear-recursive} iff all cycles in $\productionGraph{G}$ are vertex-disjoint.
\end{definition}
\vspace{-1mm}

\vspace{-2mm}
\begin{example}
The production graph for the grammar $G$  in Fig.~\ref{fig:grammar} is shown in Fig.~\ref{fig:prod} (ignore for now the pair of numbers on edges).  
$G$ is recursive since there is a cycle in $\productionGraph{G}$; it is strictly linear-recursive since $\productionGraph{G}$ contains only one cycle. 
The only recursive node in $G$ is $A$.  

 In contrast, the hypothetical (and unlabeled) production graph shown in 
Fig.~\ref{fig:nonSLR} contains two cycles which share a node, $S$, and therefore the workflow that it represents is not  strictly linear-recursive. 
\end{example}
\vspace{-2mm}

\eat{
\begin{figure}[h]
\centering
\includegraphics[scale=0.3]{docs/NonLRProdGraph-crop.pdf}
\caption{Another production graph}
\label{fig:nonSLR}
\end{figure}
\vspace{-2mm}}

%\subfloat[Production graph $\productionGraph{G}$]{\label{fig:prod}\includegraphics[scale=0.3] {docs/production-crop.pdf}}
%\vspace{-2mm}
\begin{figure}[ht]
\centering
\begin{minipage}{0.6\linewidth}
\includegraphics[scale=0.3]{docs/production-crop.pdf}
\caption{Production graph $\productionGraph{G}$}
\vspace{-2mm}
\label{fig:prod}
\end{minipage}
\begin{minipage}{0.3\linewidth}
\includegraphics[scale=0.3]{docs/NonLRProdGraph-crop.pdf}
\caption{Synthetic production graph}
\vspace{-2mm}
\label{fig:nonSLR}
\end{minipage}
\end{figure}

As argued in \cite{DBLP:conf/sigmod/BaoDM12}, strict linear recursion is able to capture
common recursive patterns found in repositories of scientific workflows, 
\eat{e.g. myExperiment~\cite{DBLP:journals/fgcs/RoureGS09},}
in particular looping and forked executions. }

\eat{\subsubsection{Safe queries}  
%%%%%%%%%%%%%%%%%%%%%%%%%%%%%%%%%%%%%%%%%%%%%%%%%%%%%%%%%%%%%%%%%%%%%%%%%%%%%%%%%To here.
\label{sec:safety}}
Due to the complexity of regular expressions, there are specifications for which dynamic labeling
is not possible, even if arbitrarily large labels
are allowed. For example, consider the grammar in Fig.~\ref{fig:grammar} and  query $(\_)^*a(\_)^*$. At the second step of the derivation of the run in Fig.~\ref{fig:run}, the graph
$g_1$ is $W_1$. Observe that we cannot tell if the query will be satisfied for ($c:1$, $b:1$), and therefore it is impossible to label them as they arise:   If $A\rightarrow
W_2$ is applied, the answer would be ``yes''.  However, if $A\rightarrow W_3$ is applied, the answer would be ``no''. We therefore say that query $(\_)^*a(\_)^*$
is not {\em safe} with respect to this workflow. In contrast, the query
$(\_)^*e(\_)^*$ is  safe, since $A$ must eventually terminate with an execution of $W_3$.
It is also easy to see that the reachability query $(\_)^*$ is safe with respect to any workflow, since every module in a coarse-grained model has a single input and a single output. \eat{For the fine-grained model with multiple inputs and outputs, a similar notion of safety is introduced in \cite{DBLP:conf/sigmod/BaoDM12} to ensure the feasibility of dynamic labeling.}

We formally define safety of a query with respect to a workflow in terms of a finite state automata (DFA) for the query.  Since there may be several equivalent DFAs for a query, we then show that it is sufficient to consider the minimal DFA for the query, and give an efficient algorithm for checking safety using the minimal DFA.

Recall the standard definition of a DFA \cite{hopcroft2007introduction}:

%\vspace{-1mm}
\begin{definition}
({\bf DFA})  A {\em DFA} ${\cal M}$ is a 5-tuple $(Q,\Gamma,\delta,q_0,F)$, where $Q$ is a set of states, $\Gamma$ is a set of input symbols (i.e., edge tags), $\delta:Q\times\Gamma\rightarrow Q$ is a transition function, $q_0$ is the start state and $F\subseteq Q$ is a set of accept states. We extend $\delta$ to $\delta^*:Q\times \Gamma^*\rightarrow Q$ such that  $\delta^*(q,w_1\ldots w_n)=\delta^*(\delta(q,w_1),w_2\ldots w_n)$ where $w_i\in \Gamma$.
%We denote by $L^{\cal M}_{q_1q_2}$ the set of strings that transit state $q_1\in Q$ to $q_2\in Q$.
\end{definition}
%\vspace{-2mm}

%\vspace{-2mm}
\begin{definition}
{\bf (Safe DFA)}
A DFA ${\cal M}=(Q,\Gamma,\delta,q_0,F)$ is {\em safe} with respect to  a workflow specification $G=(\Sigma,\Delta,S,P)$ iff any state pair $(q_1,q_2)\in Q\times Q$ is safe.  
A state pair $(q_1,q_2)$ is {\em safe} iff  $ \forall M\in \Sigma$ and any two executions $ex_1,ex_2$
%\in (\Sigma/\Delta)^*$ 
of $M$, if there exists a path $p$ connecting an input $i$ of $M$ and an output $o$ of $M$ in $ex_1$ such that $\delta^*(q_1,\tau_P(p))=q_2$, then there exists a path $p'$ connecting $i$ and $o$ in $ex_2$ such that $\delta^*(q_1,\tau_P(p'))=q_2$.
% (otherwise, we say $(q_1,q_2)$ is unsafe with respect to $(i,o)$ of $M$).
\end{definition}
%\vspace{-2mm}

%\vspace{-2mm}
\begin{definition}\label{def:safeQuery}
%{\bf (Strictly Safe Query)} A regular path query $R$ is said to be {\em strictly safe} 
{\bf (Safe Query)} A regular path query $R$ is said to be {\em safe} 
with respect to a workflow specification $G$ iff there exists a DFA that accepts
 $R$ and is safe with respect to $G$.
\end{definition}
%\vspace{-1mm}

%\vspace{-2mm}
\begin{example}\label{ex:safe_unsafeQ}
\eat{Returning to the  sample workflow $G$ in
  Fig.~\ref{fig:grammar}, consider the composite module $A$.}
Consider the composite module $A$ in the sample workflow $G$ in Fig.~\ref{fig:grammar}.
%, the only composite module in $G$ that contains alternation. 
Two of $A$'s executions are shown in Fig.~\ref{fig:A_ex}; all other executions of $A$ will represent $k$ recursions, with $k$ modules named $a$, followed by two named $e$, followed by $k$ named $d$.

\begin{figure}[h]
\centering
\includegraphics[scale=0.3]{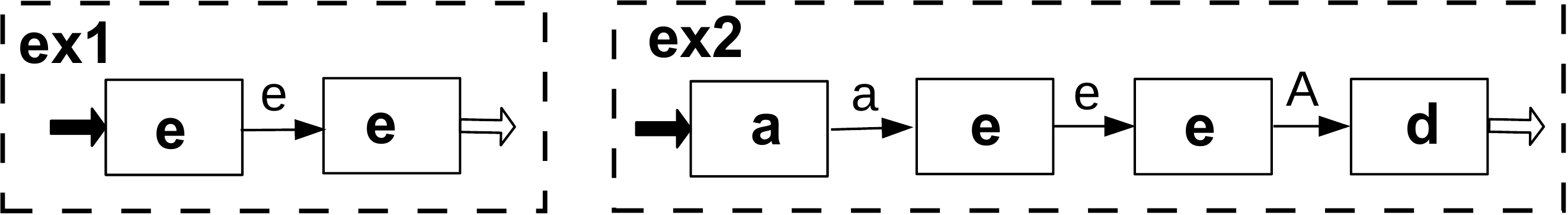}
\vspace{-1mm}
\caption{Two of $A$'s executions}
\vspace{-3mm}
\label{fig:A_ex}
\end{figure}

\noindent
\vspace{-0.6cm}
\begin{figure}[hb]
%\centering
\begin{minipage}[b]{0.45\linewidth}
\subfloat[$R_3: (\_)^*e(\_)^*$ (safe)]{\label{fig:Q1}
\begin{tikzpicture}[shorten >=1pt,node distance=2cm,on grid,auto] 
   \node[state,initial] (q_0)   {$q_0$};  
   \node[state,accepting](q_f) [ right of= q_0, node distance=1.5cm] {$q_f$};
    \path[->] 
    (q_0) edge  node {$e$} (q_f)
          edge  [loop above] node {$a,b,c,A,B$} ()
    (q_f) edge [loop above] node {$\_$} ();
\end{tikzpicture}
}
\end{minipage}
\begin{minipage}[b]{0.45\linewidth}
\subfloat[$R_4: e$ (not safe)]{
\label{fig:Q2}
\begin{tikzpicture}[shorten >=1pt,node distance=2cm,on grid,auto] 
   \node[state,initial] (q_0)   {$q_0$};  
   \node[state,accepting](q_f) [ right of= q_0, node distance=1.5cm] {$q_f$};
    \path[->] 
    (q_0) edge  node {$e$} (q_f);
          %edge  [loop above] node {$a,b,c,A,B$} ()
    %(q_f) edge [loop above] node {$a,b,c,e,A,B$} ();
\end{tikzpicture}}
\end{minipage}
\vspace{-1mm}
\caption{Two regular path queries}
\vspace{-5mm}
\label{fig:Q1Q2}
\end{figure}

The DFAs of two queries are shown in Figures~\ref{fig:Q1Q2}.  $R_3$ is
safe with respect to $G$ since all state pairs (i.e., $(q_0,q_0)$,
$(q_f,q_f)$, $(q_0,q_f)$) are safe. 
For example, $(q_0,q_f)$ is safe because {\em all} executions of $S$ and $A$ contain a path whose tag transitions the DFA from $q_0$ to $q_f$, whereas {\em none} of $B$'s executions contains a path whose tag transitions the DFA from $q_0$ to $q_f$. 

In contrast, $R_4$ is not safe with respect to $G$ because $(q_0,q_f)$ is not safe. In particular, the two executions of $A$ 
shown in Fig.~\ref{fig:A_ex}, $ex_1$, $ex_2$,  behave differently:  There exists a path in $ex_1$ that transitions the DFA from $q_0$ to $q_f$, but there does not exist such a path in $ex_2$.
\end{example}

Note that the number of states of the DFA determines the size of the
fine-grained workflow. We now show that it is sufficient to check the
{\em minimal} DFA for $R$ in order to determine whether or not $R$ is
safe for a given workflow.

\begin{lemma}
Given a workflow $G$ and a regular expression $R$, $R$ is safe with respect to $G$ iff the minimal DFA of $R$ is safe with respect to $G$.
\end{lemma}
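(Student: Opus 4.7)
The plan is straightforward. The ``if'' direction is immediate from Definition~\ref{def:safeQuery}: if the minimal DFA of $R$ is safe with respect to $G$, then that minimal DFA itself is a DFA accepting $R$ that witnesses safety, so $R$ is safe. All the work is in the ``only if'' direction.

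For ``only if'', I would invoke the standard quotient-construction of the minimal DFA from automata theory. Assume $R$ is safe, and let $\mathcal{M}=(Q,\Gamma,\delta,q_0,F)$ be a witness DFA that is safe w.r.t.\ $G$. First, discard the unreachable states of $\mathcal{M}$; this is harmless because safety is a universal statement over $Q\times Q$, so shrinking $Q$ only weakens the premise, and the transition function on the surviving states is unchanged. Let $\mathcal{M}_{\min}=(Q',\Gamma,\delta',q_0',F')$ be the minimal DFA of $R$. Then there is a surjective map $h:Q\to Q'$ with $h(q_0)=q_0'$, $h^{-1}(F')=F$, and $h(\delta(q,a))=\delta'(h(q),a)$ for every $q\in Q$, $a\in\Gamma$; by induction this extends to $h(\delta^*(q,w))=\delta'^{*}(h(q),w)$ for every $w\in\Gamma^*$.

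Now I would verify safety of $\mathcal{M}_{\min}$ pair-by-pair. Fix an arbitrary $(q_1',q_2')\in Q'\times Q'$, a module $M\in\Sigma$, two executions $ex_1,ex_2$ of $M$, and a path $p$ in $ex_1$ from some input $i$ to some output $o$ with $\delta'^{*}(q_1',\tau_P(p))=q_2'$. Using surjectivity of $h$, choose any $q_1\in h^{-1}(q_1')$ and set $q_2:=\delta^*(q_1,\tau_P(p))$; the homomorphism property gives $h(q_2)=\delta'^{*}(q_1',\tau_P(p))=q_2'$. Since $\mathcal{M}$ is safe, the pair $(q_1,q_2)$ is safe, so there is a path $p'$ in $ex_2$ from $i$ to $o$ with $\delta^*(q_1,\tau_P(p'))=q_2$. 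Projecting back via $h$ yields $\delta'^{*}(q_1',\tau_P(p'))=h(q_2)=q_2'$, which is exactly what safety of $(q_1',q_2')$ requires.

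The ``hard'' part is more a bookkeeping point than a real obstacle: one must justify the existence of the surjective DFA homomorphism $h$ onto the minimal DFA and the preliminary reduction to the reachable part of $\mathcal{M}$. Both are classical facts (Myhill--Nerode / DFA minimization), so I would cite them rather than reprove them. The rest is a routine ``lift $q_1'$ to some $q_1$, let $\delta$ determine $q_2$, apply safety of $\mathcal{M}$, project by $h$'' argument, and the bijective structure of $h$ on accepting vs.\ non-accepting states is not needed here since safety does not refer to $F$.
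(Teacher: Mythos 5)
Your proof is correct and follows essentially the same route as the paper's: both arguments hinge on the Myhill--Nerode correspondence between states of an arbitrary DFA for $R$ and states of the minimal DFA, lift a state pair along that correspondence, and transport the witnessing path from one automaton to the other. The differences are presentational only --- you prove the forward implication directly via the quotient homomorphism $h$ (and explicitly dispose of unreachable states, a point the paper glosses over), whereas the paper proves the contrapositive using pairwise state equivalence $q\equiv q_{min}$, which is exactly the relation your $h$ encodes.
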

%\vspace{-2mm}

\begin{proof}
By definition~\ref{def:safeQuery}, it suffices to show that  if the minimal DFA of $R$ is  unsafe with respect to $G$ then all DFAs of $R$ are unsafe with respect to $G$. Specifically, for any unsafe state pair of the
minimal DFA, we can always find in any DFA of $R$ a corresponding
state pair that is unsafe.

Let ${\cal M}=(Q,\delta,q_0,F)$ be any DFA of $R$,  and 
$\\{\cal   M}_{min}=(Q_{min},\delta_{min},q^{min}_0,F_{min})$ be the minimal
DFA. By the definition of minimal DFA, for any state
$q\in Q$ there is an {\em equivalent} state $q_{min}\in Q_{min}$,
denoted by $q\equiv q_{min}$ (which we will define later) and
vice versa. Suppose $(q_1^{min},q_2^{min})\in Q_{min}\times Q_{min}$ is unsafe
with respect to some composite module $M$, i.e. there are two executions
$W_1,W_2$ of $M$ where there exists some path $p$ connecting the input
$i$ and output $o$ of $M$ in $W_1$
such that 
\begin{equation}\label{eq1}
\delta^{*}_{min}(q_1^{min},\tau_P(p))=q_2^{min}
\end{equation}
 while for every
path $p'$ connecting $i$ and $o$ in $W_2$,
$\\ \delta^{*}_{min}(q_1^{min},\tau_P(p'))\not=q_2^{min}$. Now choose a
state $q_1\in Q$ such that $q_1\equiv
q_1^{min}$. Let 
\begin{equation}\label{eq2}
q_2=\delta^*(q_1,\tau_P(p))
\end{equation}
 We prove that (1) $q_2\equiv q^{min}_2$ and thus $q_2$ is not a rejecting state; and (2) $(q_1,q_2)$ is
unsafe w.r.t. $M$. 

We first prove $q_2\equiv
q^{min}_2$. Proof is by contradiction.  Suppose
$q_2\not\equiv q^{min}_2$. Recall from \cite{hopcroft2007introduction}, a state $q$ of $\cal M$  is equivalent to a
state $q_{min}$ of ${\cal M}_{min}$ iff any string that transits $\cal
M$ from $q$ to some accepting state transits ${\cal M}_{min}$ from
$q_{min}$ to some accepting state and vice versa. Formally $q\equiv
q_{min}$ iff $\{w|\delta^*(q,w)\in
F\}=\{w|\delta_{min}^*(q_{min},w)\in F_{min}\}$. Take any string $w$
that transits ${\cal M}_{min}$ from $q^{min}_2$ to some accepting
state while cannot transit ${\cal M}$ from $q_2$ to some accepting state.
Because of Equation~\ref{eq1}, string $\tau_P(p)w$ transits
${\cal M}_{min}$ from $q^{min}_1$ to some accepting state while
because of Equation~\ref{eq2}, $\tau_P(p)w$
cannot transit $\cal M$ from $q_1$ to some accepting state, which
contradicts $q_1\equiv q^{min}_1$.

We prove the second one by first showing that for any string $w$, if
$q^{min}_1\equiv
q_1,q^{min}_2\equiv q_2$ and
$\delta_{min}^*(q^{min}_1,w)\not=q^{min}_2$, then
$\delta^*(q_1,w)\not=q_2$. Proof is similar to the first part.
\eat{Proof is
 by contradiction. Let 
$\delta_{min}^*(q^{min}_1,w)=q^{min'}_2$.  Suppose
$\delta^*(q_1,w)=q_2$. Similar to the proof for the first part, $q_2\equiv
q^{min'}_2$. Since $q_2\equiv q^{min}_2$, $q^{min'}_2\equiv
q^{min}_2$, which contradicts that no two states in a minimal DFA that
are equivalent \cite{hopcroft2007introduction}.}  Recall that no path in  $W_2$ can transit 
from $q^{min}_1$ to $q^{min}_2$. Thus no path in $W_2$ can transit
from $q_1$ to $q_2$. Revisiting Equation~\ref{eq2}, the path $p$ in $W_1$
can transit from $q_1$ to $q_2$. Thus by Definition~\ref{def:safeQuery},
$(q_1,q_2)$ is unsafe.
\end{proof}

In the rest of the paper, we refer to the minimal DFA of a
query as its DFA.

%\vspace{2mm}
{\bf Checking safety of DFA:}  
Intuitively, a DFA $\cal M$ is safe w.r.t. a workflow $G$ if the
fine-grained workflow $G_R$ obtained by intersecting $G$ with $\cal M$
is safe.
We must therefore understand the {\em internal edges} in modules of $G_R$, since a safe workflow is one in which the dependency between the input and output ports  of each composite module is deterministic w.r.t. all its executions (recall Section~\ref{sec:labeling}). 
\eat{For an {\em atomic module} $M$, if the execution of   $M$ causes the DFA to
transition from state $q$ to state $q'$, then we draw an internal edge from the input port representing
$q$ to the output port representing $q'$.}An {\em
  atomic module} leaves states of a DFA unchanged; thus we draw an
internal edge from the input port representing $q$ to the output port
representing $q$.
If the DFA  $\cal M$ is safe, then internal edges can be ``lifted" deterministically to composite modules.  

%\vspace{-1mm}
\begin{example}
Consider  the internal module edges in Fig.~\ref{fig:specQ}.  All atomic modules leave the 
state unchanged.  The execution of composite module $B$ leaves the states unchanged, whereas any execution of composite module $A$ causes a transition from $q_0$
to $q_f$, and from $q_f$ to $q_f$.   Thus $R_3$ is safe for $G$.
\end{example}
%\vspace{-1mm}

An algorithm for checking safety is as follows:  We denote by $\lambda(M,ex)$ for each execution $ex$ of each module
$M\in \Sigma$ a $|Q|\times |Q|$ 
%$0\mbox{-}1$ 
boolean matrix, such that
$\lambda(M,ex)[q_1,q_2]=1$ iff there exists a path in $ex$ whose tag can
transition the DFA from $q_1$ to $q_2$. Note that if the DFA is safe
with respect to the specification, then executions of the same module provide
the same matrix.  In this case, we simply use $\lambda(M)$, where $ex$ is understood from the context.
%($M\in \Sigma$) when query is safe. 
To check safety, we start by defining $\lambda(M)$ as the identity matrix for
any atomic module $M$, and then compute $\lambda(M)$ for composite
modules by verifying all the productions. A production $M\rightarrow
W$ is said to be {\em verifiable} if $\lambda$ is already defined for
all modules in $W$, so that $\lambda(M)$ can be computed. The
algorithm reports that the DFA is safe  if $\lambda$ is consistently
defined for all composite modules, and outputs $\lambda$ as a
by-product. To visit each production once, we 
 adapt the algorithm in
\cite{hopcroft2007introduction} of checking whether the language of a
given context-free string grammar is empty. The time complexity of
checking safety is then $O(|Q|^2*G)$.

\eat{
{\bf Time Complexity:} Given a query $R$, checking safety with respect to $G$ consists
  of 1) creating a DFA for $R$; 2) minimizing the DFA; and 3) checking
  whether the minimum DFA is safe. The third step dominates the first
  two since the size of the DFA  is the main cost.
While in general the first step produces a DFA state space of which is exponential in $|R|$ in time  $O(c^{|R|}*|\Sigma|)$~\cite{hopcroft2007introduction}, 
for the subclass of {\em deterministic regular expressions}~\cite{DBLP:journals/iandc/Bruggemann-KleinW98}, which are widely used in XML processing,
the DFA state space size and construction time is only $O(|\Sigma|*|R|)$. 
Furthermore,  whether or not $R$ is deterministic can be tested in  time which is linear in $|R|$~\cite{Groz:2012:DRE:2213556.2213566}.
\eat{The second step can be solved using the well-known algorithm
in~\cite{hopcroft2007introduction}, which runs in  time
$O(|Q|*|\Sigma|*log(|Q|))$.}% (where $n$ is the number of states). 
 Note that in our case, $|\Sigma|$ is bounded by the grammar size, $|G|$. 
The third step can be solved in $O(|Q|^2*|G|)$ as discussed above, and dominates the first two steps.
Thus the overall time complexity for general grammars is
$O(c^{|R|}*|G|^3)$, and for deterministic regular expressions is $O(
(|\Sigma|*|R|)^2*|G|)$= $O((|G|*|R|)^2*|G|)=O(|R|^2*|G|^3)$.
}

{\bf Time Complexity:} 
Given a query $R$, checking safety with respect to $G$ consists of 1)
creating a DFA for $R$; 2) minimizing the DFA; and 3) checking
whether the minimum DFA is safe.
The third step dominates the first two, since the DFA size is the main factor.
While in general the DFA may have a state space which is exponential in
$|R|$, precisely  $O(c^{|R|}*|\Gamma|)$,  %~\cite{hopcroft2007introduction}, 
for the subclass of {\em deterministic regular expressions}~\cite{DBLP:journals/iandc/Bruggemann-KleinW98}, which are widely used in XML processing,
the DFA state space size  is only $O(|\Gamma|*|R|)$. 
Note that in our case, $|\Gamma|$ is bounded by the grammar size, $|G|$. 
Thus the overall time complexity for general grammars is  $O(c^{|R|}*|G|^3)$, and for deterministic regular expressions is $O(|R|^2*|G|^3)$.

\subsection{Decoding \eat{Regular Path} Labels}
\label{sec:decodingReg}
%%%%%%%%%%%%%%%%%%%%%%%%%%%%%%%%%%%%%%%%%%%%%%%%%%%%%%%%%Deleted lots of stuff
\eat{To complete the transformation of $G$ to the  fine-grained  grammar $G_R$, we must connect the
input ports of a module to its output ports (internal module edges).  
Intuitively, if the execution of  module $M$ causes the DFA to
transition from state $q$ to state $q'$, then an internal edge connects the input port representing
$q$ to the output port representing $q'$.
More precisely, 
%we use the $\lambda$ function from the previous section:  For each $M \in \Sigma$,
if $\lambda(M)[q,q']=1$ then there is an internal edge $(q,q')$.  
This can only be done for safe queries. 

\vspace{-1mm}
\begin{example}
Now consider  the internal module edges in Fig.~\ref{fig:specQ}.  All atomic modules leave the 
state unchanged.  The execution of composite module $A$ causes a transition from $q_0$
to $q_f$, and from $q_f$ to $q_f$.  
\end{example}
\vspace{-1mm}

Since $G_R$ is a   fine-grained workflow, runs that are generated from
$G_R$ are also fine-grained.   For example, the fine-grained run in
Fig.~\ref{fig:stack-join} is an element of $L(G_R)$.}
%%%%%%%%%%%%%%%%%%%%%%%%%%%%%%%%%%%%%%%%%%%%%%%%%%%%%%%%%To here

We conclude this section by summarizing a constant-time algorithm for
answering safe regular path queries. 
%Since we use reachability  labels to answer regular path queries, we refer to them as {\em   regular path labels} in the rest of the paper.

%vspace{-1mm}
\begin{theorem}
Given a workflow $G$ and the labels of two nodes, a safe pairwise
query $R$ can be answered in constant time.
\end{theorem}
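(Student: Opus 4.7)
The plan is to chain together the reduction from regular path queries to reachability queries (Lemma~\ref{lemma:intersect}) with the constant-time reachability decoding of \cite{DBLP:conf/sigmod/BaoDM12}, after first verifying that the intersected workflow $G_R$ satisfies the preconditions required by that decoding.

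First, I would apply Lemma~\ref{lemma:intersect} to reduce $u \stackrel{R}{\leadsto} v$ in the run $g \in L(G)$ to a reachability question in the corresponding fine-grained run $g' \in L(G_R)$: namely, whether the $q_0$ input port of $u$ can reach some accepting output port $q_f \in F$ of $v$ in $g'$. The key observation is that $g'$ is obtained from $g$ by the same sequence of node replacements, only using the bijectively-renamed productions $f_P(p)$; hence the derivation-based label $\psi_V(u)$ assigned to $u$ when $g$ was built is also a valid label for the corresponding node $u'$ in $g'$ (and similarly for $v$). No relabeling at query time is required.

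Second, I would verify that $G_R$ satisfies the two constraints of Section~\ref{sec:labeling} that make the reachability labels of \cite{DBLP:conf/sigmod/BaoDM12} decodable in constant time. Strict linear-recursion is inherited: the production graph of $G_R$ is isomorphic to that of $G$ under $f_M$, because intersection with $\cal M$ only augments each module with ports and each production body with internal connections, without adding or removing productions or modules. Safety of $G_R$ follows directly from the definition of safe query: because $R$ is safe w.r.t.\ $G$, every state pair $(q_1,q_2)$ behaves identically across all executions of any composite module $M$, which is exactly the condition ensuring that the input-to-output port dependency of the augmented module $f_M(M)$ is deterministic across all its executions. The matrices $\lambda(M)$ computed while checking safety are precisely the internal-edge patterns that make $G_R$ safe.

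Third, with $G_R$ shown to be both strictly linear-recursive and safe, I invoke the constant-time reachability decoder $\pi$ of \cite{DBLP:conf/sigmod/BaoDM12}, applied to $\psi_V(u)$, $\psi_V(v)$ and the parameter grammar $G_R$. To answer the pairwise query, I call $\pi$ for each pair consisting of the $q_0$ input port of $u$ and an accepting output port $q_f$ of $v$, returning \true{} iff any call succeeds. The number of such calls is at most $|F| \le |Q|$, which depends only on $R$ and $G$, not on the size of the run; since each individual decoding is $O(1)$, the total time is $O(1)$ in the run size. The only mildly subtle point — and the part worth flagging as the main obstacle — is arguing that applying the previously published decoder to the \emph{query-intersected} grammar $G_R$ (rather than $G$ itself) is sound; this is exactly what Lemma~\ref{lemma:intersect} together with the safety-preservation argument above establishes.
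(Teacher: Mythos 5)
Your proposal is correct and follows essentially the same route as the paper: compute the query-intersected workflow $G_R$ at query time, observe that the offline derivation-based labels remain valid because the intersection preserves the production sequence, check that $G_R$ inherits strict linear recursion from $G$ and safety from the safety of $R$, and then invoke the constant-time reachability decoder of \cite{DBLP:conf/sigmod/BaoDM12} parameterized by $G_R$. The paper's own proof is terser (it simply presents Algorithm~\ref{algo:answerPairwiseRegularQuery} and notes that computing $G_R$ and decoding are both $O(1)$ in the run size), but your added justification of the preconditions and of label reuse matches what the paper establishes in Sections~\ref{sec:transform} and~\ref{sec:constraints}.
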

\begin{proof}
We prove this by presenting
Algorithm~\ref{algo:answerPairwiseRegularQuery}. We override the
decoding function $\pi$ in \cite{DBLP:conf/sigmod/BaoDM12}  by adding query
$R$ as a parameter. The subtlety is that the run is labeled when it is created (offline) rather that at query time.  At query
time, we first compute $G_R$, which runs in $O(1)$ time
w.r.t. the run size. We then decode labels, which also runs in $O(1)$ time
w.r.t. the run size assuming that any operation on two
words ($log\;n$ bits) can be done in $O(1)$ time \cite{DBLP:conf/sigmod/BaoDM12}. 
Pairwise safe queries can therefore be answered in constant
time.
\eat{
Given a fine-grained workflow $G'$, let $(\psi',\pi')$ be the reachability labeling scheme for
fine-grained workflows proposed in \cite{DBLP:conf/sigmod/BaoDM12},
where $G'$ is the fine-grained workflow; $\psi'$ assigns for each
module execution a label and $\pi'(\psi'(u),\psi'(v),G')$ outputs whether
$u\leadsto v$. 

By
Lemma~\ref{lemma:intersect}, given $R$, $(\psi_V,\pi,G)$ is equivalent
to $(\psi',\pi',G_R)$. The subtlety is, as we will describe in the next subsection, we
label the run query-independently, specifically $\psi_V=\psi'$. $\pi(\psi_V(u),\psi_V(v)
,G,R)$ firstly obtains $G_R$, and then outputs
$\pi'(\psi_V(u),\psi_V(v),G_R)$. $\pi(\psi_V(u),\psi_V(v),G,R)$ iff
$\pi'(\psi'(u),\psi'(v),G_R)$. 

Obtaining $G_R$ takes $O(1)$ time (w.r.t. run size), and $\pi'(\psi'(u),\psi'(v),G_R)$ can be done
in $O(1)$ time \cite{DBLP:conf/sigmod/BaoDM12}. The overall time complexity is $O(1)$.}
\end{proof}

\vspace{-2mm}
\setlength{\algomargin}{0.3em}
\SetAlgoSkip{smallskip}
\SetInd{0.1em}{1em}
\begin{algorithm}[h]
\small
% \SetKwRepeat{doWhile}{do}{while}
\DontPrintSemicolon
$\pi(\psi_V(u),\psi_V(v), G, R)$\\
\KwIn{The regular path labels of node $u$ and $v$, $\psi_V(u)$,
  $\psi_V(v)$, query $R$ and
  specification $G$\\}
\KwOut{whether $u\stackrel{R}\leadsto v$}
%\textbf{Initialization }\\
%Let $T_1$ ($T_2$) be the tree representation of $l_1$ ($l_2$)\\
\Begin{
Transform $R$ to its minimal DFA $\cal M$\\
\If{$\cal M$ is safe w.r.t. $G$}{
Intersect $G$ with $\cal M$ resulting in $G_R$\\
%\tcp{$\pi(\psi_V(u),\psi_V(v), G_R)$ is the predicate for reachability
%queries in Section~\ref{sec:labeling}}
\Return{$\pi(\psi_V(u),\psi_V(v), G_R)$ \tcp{Section~\ref{sec:labeling}}}
}}
\caption{{\em { answerPairwiseSafeQuery}}}
\label{algo:answerPairwiseRegularQuery}
\end{algorithm}
\vspace{-5mm}

\section{Answering All-Pairs Queries}
\label{sec:allpairs}

We now turn to all-pairs regular path queries. Recall that in this problem, we are
given a graph $g$, two list  $l_1,l_2$ of nodes of $g$, and a regular expression $R$, and return the set of all node pairs $(u, v)$ in $l_1\times l_2$ such that $u \stackrel{R}\leadsto v$.
We start by describing how to efficiently answer all-pairs {\em safe} regular
path queries before moving to {\em unsafe} (general) queries.

\subsection{Safe queries}
\label{sec:our-approach}
We assume that each node in the run $g$ is labeled using the algorithm presented in
Section~\ref{sec:labeling}, so that pairwise regular path queries $u\stackrel{R}\leadsto v$ can be answered in constant
time.  
We then do {\em structural joins} over the two lists $l_1$ and $l_2$ to find all pairs of nodes $(u, v)$ such that $(u,v)\in l_1 \times l_2$ and $u\stackrel{R}\leadsto v$.
In particular, we consider two types of structural joins.

\smallskip \noindent \textbf{Option S1: Nested-loop join.} A straightforward algorithm is to use nested loops to perform structural joins:  For each node $u$ in $l_1$,  iterate over the list $l_2$ and
return $(u,v)$ if $u \stackrel{R}\leadsto v$. Given that testing $u \stackrel{R}\leadsto v$ can be done by comparing the labels of $u$ and $v$ in constant time, the overall time complexity  
is $\Theta(|l_1| \times |l_2|)$. However, it turns out
  that we can do better.

\smallskip 
\noindent 
\textbf{Option S2: Reachable node pairs as a filtering step.}
In this option, we first find reachable node pairs in $l_1\times l_2$ (i.e. pairs
$(u,v)$ such that $u\leadsto v$) and then check if
$u\stackrel{R}\leadsto v$. 
We will show in Section~\ref{sec:experiments} that the filtering step significantly improves query performance
although the overall time complexity is proportional to the size of
the input, which is $O(|l_1| \times |l_2|)$.  

\begin{lemma}
The all-pairs safe regular query $R$ over input lists $l_1$ and $l_2$ from an edge-labeled run of workflow $G$ 
can be answered in time linear in $|l_1|$, $|l_2|$, $N$ and polynomial in $|G|$, 
where $N$ is the number of reachable node pairs in $l_1$,  $l_2$.
\end{lemma}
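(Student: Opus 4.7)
The plan is to combine a fast all-pairs reachability computation (a filtering step that enumerates only the $N$ reachable pairs in $l_1\times l_2$) with the constant-time pairwise safe-query decoder from Section~\ref{sec:pairwise}. Concretely, I would proceed in three phases. First, at preprocessing (query-time, but once per query), build $G_R$ by intersecting $G$ with the minimal DFA of $R$ and check safety; by the analysis at the end of Section~\ref{sec:constraints} this takes time polynomial in $|G|$ (and independent of $|l_1|,|l_2|,N$). Second, compute the set $\Reach=\{(u,v)\in l_1\times l_2\mid u\leadsto v\}$ in time $O(|l_1|+|l_2|+|\Reach|)$, i.e., linear in the input and output. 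Third, iterate over $\Reach$ and, for each pair $(u,v)$, invoke the pairwise decoder $\pi(\psi_V(u),\psi_V(v),G,R)$ from Algorithm~\ref{algo:answerPairwiseRegularQuery}, which by the theorem above runs in $O(1)$ per pair. Summing gives total time linear in $|l_1|,|l_2|,N$ and polynomial in $|G|$, as required (and in particular the reachability filter itself establishes the claimed optimal all-pairs reachability side result).

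The main obstacle is Phase~2: producing the reachable pairs without paying $\Theta(|l_1|\cdot|l_2|)$ for nested-loop probing. The key observation is that the run $g$, while a DAG, is endowed with the reachability labels of \cite{DBLP:conf/sigmod/BaoDM12}, which encode the compressed parse tree derivation path of each node. I would exploit this labeling as follows. Sort $l_1$ and $l_2$ once in the document/derivation order induced by $\psi_V$ (a one-time cost of $O((|l_1|+|l_2|)\log(|l_1|+|l_2|))$ assuming labels compare in $O(1)$; with bucketization over parse-tree positions this can be reduced to linear). Then perform a merge-style sweep guided by the compressed parse tree: walk the nodes of $l_1\cup l_2$ in order while maintaining a stack whose entries correspond to the ``currently active'' ancestors in the compressed parse tree. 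When a node $v\in l_2$ is encountered, the ancestors on the stack that are known to reach $v$ (decided in $O(1)$ per stack entry via $\pi$) identify exactly the $u\in l_1$ forming reachable pairs with $v$; emit them and pop stack entries whose derivation subtree no longer contains $v$. This is the DAG analogue of the stack-tree join of \cite{DBLP:conf/icde/Al-KhalifaJPWKS02}, made possible by the fact that strictly linear-recursive derivations give a tree-shaped (compressed parse tree) ``scaffold'' over the otherwise arbitrary DAG.

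Correctness of Phase~2 follows from the soundness and completeness of $\pi$ for reachability and from the invariant that the stack at any point contains exactly the derivation ancestors relevant to the next node in the sweep. Efficiency follows because each node of $l_1\cup l_2$ is pushed and popped at most once (contributing $O(|l_1|+|l_2|)$) and each reachable pair is emitted exactly once in $O(1)$ amortized time (contributing $O(N)$). Composing with Phase~1's $\mathrm{poly}(|G|)$ cost and Phase~3's $O(N)$ decoding cost yields the stated bound. The hard part, and where the argument must be most careful, is proving the stack invariant in the presence of the DAG structure: because a node may be reached via several paths through distinct derivation ancestors, one must show that the compressed parse tree's bounded depth (bounded by $|G|$) keeps the per-node stack work within the $\mathrm{poly}(|G|)$ budget, and that no reachable pair is missed nor reported twice.
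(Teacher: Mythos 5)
Your overall architecture --- precompute $G_R$ and check safety in $\mathrm{poly}(|G|)$ time, enumerate the $N$ reachable pairs in $O(|l_1|+|l_2|+N)$ as a filter, then run the constant-time pairwise decoder $\pi$ on each surviving pair --- is exactly the paper's plan (this is Option S2 / Algorithm~\ref{algo:answerAllPairsRegularQuery}). The problem is that Phase~2, the only genuinely hard step, is where your argument has a gap. You model the reachability join as a stack-tree join in document order, emitting, for each $v\in l_2$, ``the ancestors on the stack that are known to reach $v$.'' But reachability in the run does not correspond to ancestor--descendant relationships in the compressed parse tree: the ancestors of $v$ in that tree are the composite module executions that \emph{derive} $v$, not other leaves. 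A pair $(u,v)$ with $u\leadsto v$ is generically a \emph{cousin} pair --- $u$ and $v$ sit in distinct child subtrees $m_u$, $m_v$ of their least common ancestor, and whether $u\leadsto v$ is decided by whether $m_u$ reaches $m_v$ inside the simple workflow of the production fired at that ancestor. A single stack of ``currently active ancestors'' swept in document order has no slot for the earlier sibling subtrees of $l_1$-leaves that must be matched against $v$, so the invariant you rely on (``the stack identifies exactly the $u$ forming reachable pairs with $v$'') is false as stated. The paper instead traverses the two \emph{projected} trees simultaneously, top-down, and at each shared internal node runs a nested loop over the (at most $|G|$) distinct children, testing grammar-level reachability between the $i$th and $j$th positions of the production's right-hand side; this is what makes cousin pairs emittable as cross products of leaf sets in output-linear time.

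The second missing ingredient is the recursive case. A recursive node in the compressed parse tree has fan-out proportional to the run size, so the nested loop over children cannot be used there, and --- crucially --- it is \emph{not} true that every leaf under recursion instance $i$ reaches every leaf under instance $j>i$: only leaves descending through a child that reaches the recursive module within its production do. The paper handles this with the red/blue edge coloring plus a merge join over the two sorted child sequences ($Set_<$, $Set_=$, $Set_>$ in Algorithm~\ref{algo:answerAllPairsRegularQuery}), which is what keeps that level at $O(|l_1|+|l_2|+N_l)$ while remaining correct. Your proposal neither bounds the work at a recursive node nor distinguishes which of its descendants actually participate in reachable pairs, so both the $O(|l_1|+|l_2|+N)$ bound and completeness/soundness of the emitted pairs are unestablished precisely where the paper has to work hardest. (Minor additional points: the paper assumes the input lists are pre-sorted in parse-tree post-order, so your $\log$ factor is avoidable but should be stated as an assumption; and the final per-level cost in the paper is $O(|l|\cdot|G|^2 + N_l)$ summed over $O(|G|)$ levels, i.e.\ $O(|G|^3 l + N)$, which is where the ``polynomial in $|G|$'' in the statement actually comes from.)
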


We prove this by presenting a corresponding algorithm.  We show that
%is shown in Algorithm~\ref{algo:answerAllPairsRegularQuery}.
all-pairs {\em reachability} queries can be answered in time
$O(|l_1|+|l_2|+N)$, which is optimal.
This can be done 
by using the fact  that if $u\leadsto v$,
then all nodes derived by $u$ can reach all nodes derived by
$v$.   The trick is to represent a list of nodes as a tree.

\textbf{Tree representation of a list of nodes.} Given a list $l$ of
node labels $\psi_V$, we  transform $l$ into an edge-labeled tree $T$ which is a projection of the compressed parse tree for $g$ and whose leaves correspond to the list $l$.  
Since a node label  is a list of entries $(k,i)$ or $(s,t,i)$, if $l$ is already sorted then the 
tree can be constructed in linear time.

Returning to the sample run in Fig.~\ref{fig:run} whose compressed
parse tree is in Fig.~\ref{fig:parse}\eat{of Section~\ref{sec:labeling}},  the list $l_1=\;\{a:1,d:1,b:3\}$ represented by their labels would be $\\\{(1,2)(1,1,1)(2,1), \; (1,2)(1,1,1)(2,3), \; (1,3)(4,2)\}$, and has the tree representation shown as $T_1$ in Fig.~\ref{fig:adlists}.  

Next, we {\em color} an edge $e=(u,v)$ in $T$ if the incoming edge of $u$ is from a recursive node, i.e. has label of form $(s,t,j)$ (see Section~\ref{sec:labeling}). 
Suppose $e$ is labeled as $(k,i)$, and $M \rightarrow W$ is the $k^{th}$ production; then $v$ is the $i^{th}$ node of $W$. Let $v'$ be the recursive node of $W$ (if there is one). 
Then $e$ is colored red if $v\leadsto v'$ in $W$ or blue if $v'\leadsto v$ in $W$.    
For example, in Fig.~\ref{fig:adlists} the edge 
$(A:1,a:1)$ of $T_1$ is colored red (indicated by the letter ``r") since node $a$ can reach the recursive node $A$ in $W_2$,
and edge $(A:1, d:1)$ is colored blue (indicated by the letter ``b") since the recursive node $A$ can reach node $d$.

\begin{figure}
\centering
\includegraphics[scale=0.25]{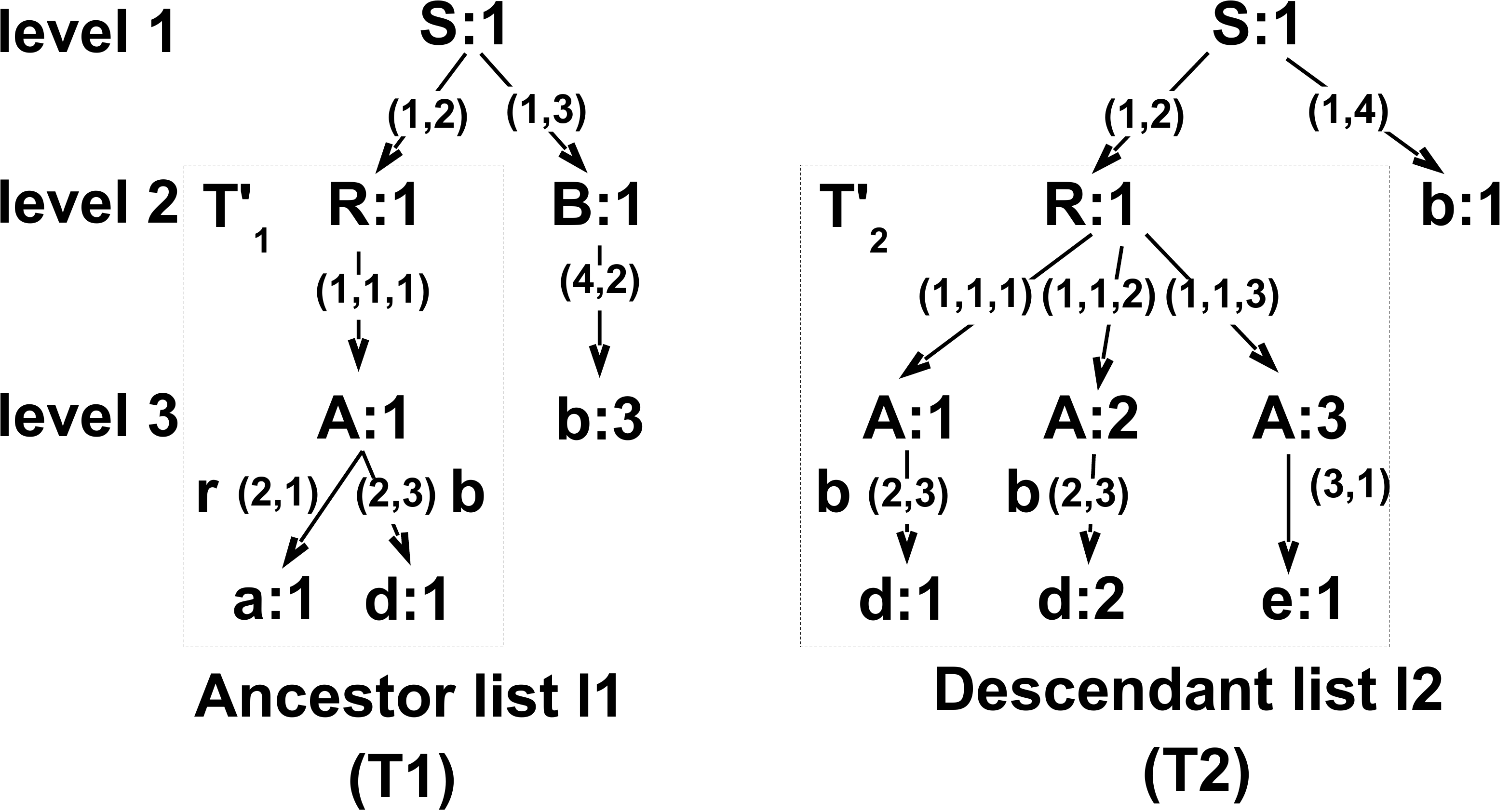}
\vspace{-1mm}
\caption{Two lists of nodes}
\vspace{-1mm}
\label{fig:adlists}
\end{figure}

\eat{The tree is built as follows.  Given a list of nodes $l$,  we assume nodes in
the list have been sorted in post-order as they appear in the
compressed parse tree. Recall that that the label of a node $v$, $\psi_V(v)$ is a
list of edge labels along the path from root in the compressed parse
tree. For each node $v$, we have a pointer to the first element of
$\psi_V(v)$. %Let $l[i]$ be the list of the $i^{th}$ element of labels
             %of nodes in $l$.
We start with a virtual root. We now identify the distinct
outgoing edges of the root, i.e. the distinct first elements of
labels of nodes in $l$, which can by obtained by reading the
pointers. We then for each child of the root  }

\textbf{Answering all-pairs reachability queries.} To answer all-pairs
reachability query, we traverse the
two trees $T_1$ and $T_2$ representing $l_1$ and $l_2$
using Algorithm~\ref{algo:answerAllPairsRegularQuery}.  This is done
top-down, level by level. Note that children of  the ``same" node in
$T_1$ and $T_2$ are either from the same simple workflow or from recursion (Cases 1 and 2, respectively, in Algorithm~\ref{algo:answerAllPairsRegularQuery}).
For example, the children of $S:1$ in Fig.~\ref{fig:adlists} are from the same simple workflow, whereas
the children of $R:1$ are from recursion.  Two nodes $v_1, v_2$ in $T_1,T_2$
are said to be same if they have the same labels, $\psi_V(v_1)=\;
\psi_V(v_2)$. To disambiguate, we denote by $\psi_T(u)$ the label of the incoming edge
of $u$ in $T$, e.g. $\psi_T(B:1)=(1,3)$.

\eat{
\setlength{\algomargin}{0.3em}
\setlength{\textfloatsep}{1mm}
\SetAlgoSkip{smallskip}
\SetInd{0em}{0.5em}
\begin{algorithm}[h]
\small
% \SetKwRepeat{doWhile}{do}{while}
\DontPrintSemicolon
\KwIn{Tree lists, $T_1,T_2$, 
  specification $G$ and safe query $R$\\}
%\KwOut{boolean}
\textbf{Initialization }\\
%Let $T_1$ ($T_2$) be the tree representation of $l_1$ ($l_2$)\\
Let $E_1 ~(E_2)$ be the outgoing  edges of  $root(T_1)$~($root(T_2)$)\\
Given an edge $e$ of $T_1$ (or $T_2$), let $\psi_T(e)$ be the label of
$e$, let $T(e)$ be the tree rooted at  the tail node of $e$\\ 
Let $Z(k,i,j)$ be a boolean representing the reachability between the $i^{th}$ and
$j^{th}$ node on the right-hand side of the $k^{th}$ production\\
\Begin{
%\tcp{There are only two cases}
\nl \If{$\forall e_1\in E_1, e_2\in E_2$, $\psi_T(e_1)=(k,i)$ and $\psi_T(e_2)=(k,j)$}
{// \textbf{Case 1:} Same simple workflow\\
%\tcp{$\forall e_1\in outE_1, e_2\in outE_2$, $\psi_T(e_1)=(k,i)$ and $\psi_T(e_2)=(k,j)$}
\ForEach{$e_1\in E_1, e_2\in E_2$}{
let $\psi_T(e_1)=(k,i),\psi_T(e_2)=(k,j)$\\
\nl\lIf{$i=j$}{$answerAllPairsSafeQuery(T(e_1),T(e_2))$}
\nl\lElseIf{$Z(k,i,j)$}{$output(leaves(T(e_1),leaves(T(e_2))))$}}}
\nl\ElseIf{ $\forall e_1\in E_1, e_2\in E_2$,$\psi_T(e_1)=(s,t,i)$ and
  $\psi_T(e_2)=(s,t,j)$} {
// \textbf{Case 2:} Recursion\\
%\tcp{$\forall e_1\in outE_1, e_2\in outE_2$, $\psi_T(e_1)=(s,t,i)$ and $\psi_T(e_2)=(s,t,j)$}
%\tcp{Note that $E_1$ and $E_2$ are sorted by the labels in ascending
%order of $i$}
// $E_1$ and $E_2$ are sorted in ascending order of
  $i$. We only list
the body of merge joining $E_1$, $E_2$\\
$mergeJoin(E_1,E_2)$\{ \\
let $p_1,p_2$ be the respective pointer to the current element of
$E_1$ and $E_2$; 
let $\psi_T(p_1)=(s,t,i)$, $\psi_T(p_2)=(s,t,j)$\\
\nl \lIf{$i=j$} { $answerAllPairsSafeQuery(T(p_1),T(p_2))$}
\nl \ElseIf{$i<j$}{
\nl\If{The tail node of $p_1$ has red outgoing edges}{
\ForEach{$e'\in E_2$ that is ordered after $p_2$}{
// Tail of $p_1$ derives tail of $e'$\\
let $e_r$ be a red outgoing edge of the tail node of $p_1$
\nl$output(leaves(T(e_r)),leaves(T(e')))$}}}
\nl \lElse{Similarly process $p_1$ and edges ordered before $p_2$\}}}}
%\SetAlgoNoLine
{\em output}(a list of node labels $l_1$, a list of node labels $l_2$ ) \{\\
\quad \lForEach{ $u\in l_1$, $v\in l_2$}{\\
\nl\quad \lIf{$\pi(\psi_V(u),\psi_V(v),G,R)$}{output($u$,$v$)\}}}
\caption{\em {answerAllPairsSafeQuery}}
\label{algo:answerAllPairsRegularQuery}
\vspace{-3mm}
\end{algorithm}}
%\vspace{-2mm}

\setlength{\algomargin}{0.3em}
\setlength{\textfloatsep}{1mm}
\SetAlgoSkip{smallskip}
\SetInd{0em}{0.5em}
\begin{algorithm}[h]
\small
% \SetKwRepeat{doWhile}{do}{while}
\DontPrintSemicolon
\KwIn{The tree presentations of two lists, $T_1,T_2$, 
  specification $G$ and safe query $R$\\}
%\KwOut{boolean}
\textbf{Initialization }\\
%Let $E_1 ~(E_2)$ be the outgoing  edges of  $root(T_1)$~($root(T_2)$)\\
%Given an edge $e$ of $T_1$ (or $T_2$), let $\psi_T(e)$ be the label of
%$e$, let $T(e)$ be the tree rooted at  the tail node of $e$\\ 
Let $root(T)$, $children(T)$, $leaves(T)$ be the root of tree $T$, the
set of children of
the root, the set of leaf-descendants of the root; let $T(u)$ be the
subtree rooted at node $u$\\
%Denote by $\psi_T(u)$ the label of the incoming edge of $u$ \\
%Let $Z(k,i,j)$ be a boolean representing the reachability between the $i^{th}$ and
%$j^{th}$ node on the right-hand side of the $k^{th}$ production\\
\Begin{
%\tcp{There are only two cases}
\nl \If{$root(T_1), root(T_2)$ are the same non-recursive node}
{// \textbf{Case 1:} Same simple workflow\\
%\tcp{$\forall e_1\in outE_1, e_2\in outE_2$, $\psi_T(e_1)=(k,i)$ and $\psi_T(e_2)=(k,j)$}
\ForEach{$u\in children(T_1), v\in children(T_2)$}{
let $\psi_T(u)=(k,i),\psi_T(v)=(k,j)$\\
\nl\lIf{$i=j$}{$answerAllPairsSafeQuery(T(u),T(v))$}
\nl\lElseIf{The $i^{th}$ node reaches the $j^{th}$ node on the right-hand side of the $k^{th}$ production }{$output(leaves(T(u),leaves(T(v))))$}}}
\nl\ElseIf{$root(T_1), root(T_2)$ are the same recursive node} {
// \textbf{Case 2:} Recursion\\
%\tcp{$\forall e_1\in outE_1, e_2\in outE_2$, $\psi_T(e_1)=(s,t,i)$ and $\psi_T(e_2)=(s,t,j)$}
%\tcp{Note that $E_1$ and $E_2$ are sorted by the labels in ascending
%order of $i$}
Let $Set_{op}$ ($op \in \{<, =, >\}$) be the set of all node pairs $(u,v)\in$ $children(T_1)$
$\times children(T_2)$ such
that if let $\psi_T(u)=(s,t,i)$, $\psi_T(v)=(s,t,j)$ then $i~op~j$, and if $op$ equals $<$ then $u$ has red
children, and if $op$ equals $>$ then $v$ has blue children\\
// $Set_{op}$ can be computed by merge join; details are omitted \\
\nl\ForEach{$(u,v)\in Set_=$} {$answerAllPairsSafeQuery(T(u),T(v))$}
\nl\ForEach{$(u,v)\in Set_<$, red child $u_r$ of $u$}{$output(leaves(T(u_r)),leaves(T(v)))$}
\nl \ForEach{$(u,v)\in Set_>$, blue child $v_b$ of $v$}{$output(leaves(T(u)),leaves(T(v_b)))$}}}
%\SetAlgoNoLine
{\em output}(a list of node labels $l_1$, a list of node labels $l_2$ ) \{\\
\quad \lForEach{ $u\in l_1$, $v\in l_2$}{\\
\nl\quad \lIf{$\pi(\psi_V(u),\psi_V(v),G,R)$}{output($u$,$v$)\}}}
\caption{\em {answerAllPairsSafeQuery}}
\label{algo:answerAllPairsRegularQuery}
\vspace{-3mm}
\end{algorithm}

For Case 1, let  node $u$ of $T_1$ and  node $v$ of $T_2$ be 
children of their respective root, and then $\psi_T(u)$ and $\psi_T(v)$
are of the form $(k,i)$. If $u$ can reach $v$ in the simple workflow for that production\footnote{Since $T_1$ and $T_2$ are both projections of the same compressed parse tree, the same productions must be used.}
(line 3), then all leaf-descendant nodes of $u$ in $T_1$ can reach 
all leaf-descendant nodes of $v$ in $T_2$. However, if $u$ and $v$
are the same node, then we need to move to the next level to process
the subtrees rooted at $u$ and $v$  (line 2).  E.g.,  for $T_1$, $T_2$
in Fig.~\ref{fig:adlists}, we first process level 1. The outgoing
edges of the root of $T_1$ are labeled $\{(1,2),(1,3)\}$, and
$\{(1,2),(1,4)\}$ for $T_2$, and fall into case 1 (line
1). Note that $(1,2),(1,3),(1,4)$ corresponds to $A,B,b$ in $W_1$,
respectively.  It is clear that $A$ can reach $b$
%i.e. $Z(1,2,4)\not= 0$ 
and hence all leaf-descendants of $(1,2)$ in
$T_1$ (i.e. $a:1,d:1$) are ancestors of all leaf-descendants of $(1,4)$
in $T_2$ ( i.e. $d:1$) (line 3). However, root of $T_1$ and $T_2$ have
the same outgoing edge $(1,2)$. We then recursively process $T_1'$ and $T_2'$
(shown in the rectangles). 

For Case 2, let a recursive node $u$ of $T_1$ and a recursive node $v$ of $T_2$ be a
child of the root, and then $\psi_T(u)$ ($\psi_T(v)$) is of the form
$(s,t,i)$ ($(s,t,j)$). Let $u_r$ be a child of $u$.
If $u$ derives $v$, i.e. $i<j$ (line 6), and $u_r\leadsto v$ (i.e. $(u,u_r)$ is a red edge) then leaf-descendants of $u_r$ can reach leaf-descendants of $v$. Similarly,
if $v$ derives $u$, i.e. $i>j$ (line 7), and $u\leadsto v_b$ (i.e. $(v,v_b)$ is a
blue edge) then leaf-descendants of $u$ can reach leaf-descendants of $v_b$.
If $u$ and $v$ are the same, we then move to the next level (line
 5). E.g., $A:1~(1,1,1)$ of $T_1'$ derives $A:2~(1,1,2)$ of $T_2'$, 
 hence all leaf-descendants of $a:1~(2,1)$ (red child) of $T_1'$
  can reach all leaf-descendants of $A:2$. 

\textbf{Answering all-pairs safe regular path queries.} For each
reachable node pair $(u,v)$ i.e. $u\leadsto v$, we invoke
Algorithm~\ref{algo:answerPairwiseRegularQuery} to check if
$u\stackrel{R}\leadsto v$ (Line 8).

\textbf{Time complexity.}\eat{The preprocessing step of Algorithm~\ref{algo:answerAllPairsRegularQuery} takes a list 
of node labels and builds its tree representation.
Assuming that the list is sorted by post-order of nodes in the
compressed parse tree, this can be done in time $O(|l|)$ and the
leaf-descendants of any node can be found in constant time. } Building
the tree representation of a list can be done in time linear in the
list size, assuming the list is sorted by post-order of nodes in the
compressed parse tree.
Case 1 %of Algorithm~\ref{algo:answerAllPairsRegularQuery} 
does  a nested-loop join. Since
the size of the outer and inner loop of the join is bounded by the
grammar size,  this  takes $O(|G|^2)$. 
Case 2 %of Algorithm~\ref{algo:answerAllPairsRegularQuery} 
does  a merge join and hence
takes $O(|l_1|+|l_2|)$. 
The body of the merge join runs in time linear in
output size of the function $output$. For each level of the two trees, Algorithm~\ref{algo:answerAllPairsRegularQuery} therefore takes
$O(|l_1|+|l_2|+max(|l_1|,|l_2|)*|G|^2 + N_l)$ where $N_l$ is the
number of reachable node pairs from this level. Since the height of the two trees
is bounded by the grammar size, letting $l=\; max(|l_1|,|l_2|)$, Algorithm~\ref{algo:answerAllPairsRegularQuery} takes
\eat{$O(|G|*(l*|G|^2)+ N)=$}$O(|G|^3 l + N)$  where $N$
is the number of reachable node pairs in $l_1, l_2$.

\subsection{General queries}
\label{sec:general}

\eat{ As observed in Section~\ref{sec:pairwise}, derivation-based dynamic labeling 
does not work for unsafe queries since the input-output port connections are not ``stable" between
executions, i.e. $\lambda$ cannot be computed.   
The approach that we propose, therefore, is to decompose a regular expression $R$ into a set of safe sub-expressions, and stitch the results together. }

We now discuss answering general (unsafe) all-pairs regular path queries.  We first describe three previous approaches, which will be used to provide a baseline for the experiments  in the next section, and then
describe our approach.

\eat{
The first two come from XML query processing, while the third uses the results of \cite{DBLP:conf/sigmod/BaoDM11}
for a special form of query.}

%\medskip \noindent \textbf{Previous approaches}

\eat{\noindent\textbf{Option G1:  Traverse the run.}
The brute-force approach is to traverse the run graph $g$ at query time. 
Starting with each node $v$ in turn, we perform a DFS over $g$ while simulating the state transitions in the DFA accepting $L(R)$. Since each invocation of this algorithm takes $O(|g|)$ time, the time complexity is $O(|g| * n)$, where $n$ is the number of nodes in $g$.}

\smallskip \noindent
 {\bf Option G1:  Represent $R$ as a tree and evaluate bottom-up using joins \cite{DBLP:conf/vldb/LiM01}.}
This approach treats a regular expression as a (binary/unary) tree
(parse tree),
where leaves are single symbols, and internal nodes are union,
concatenation, or Kleene star. We then evaluate the tree
 bottom-up. \cite{DBLP:conf/vldb/MendelzonW89} is too slow we omit
it.
 
\eat{For each leaf $\lambda$, we retrieve (using an
index) a list of node pairs
$(u,v)$ where $u,v$ are connected by an edge tagged $\lambda$. We
then perform joins up to the root.

For example, given two regular expressions $e_1,e_2$ and
the list of node pairs $l(e_1)$, $l(e_2)$ satisfying $e_1$, $e_2$, respectively.
The result of $e_1+ e_2$ is  $l(e_1) \cup l(e_2)$, and
the result of $e_1e_2$ is $\{(u,v)|(u,w)\in l(e_1),(w,v)\in l(e_2)\}$. 
Kleene star repeats concatenation until a fixpoint is reached.
The complexity of this approach is quadratic in the number of nodes in the run graph $g$
due to the Kleene star operator.
}

\smallskip \noindent 
\textbf{Option G2: Use rare edge labels  \cite{DBLP:conf/ssdbm/KoschmiederL12}.} ``Rare'' edge labels are ones which match very few node pairs.  The approach decomposes a query to a series of smaller subqueries using rare labels, then performs a breadth-first search on the graph.

\eat{What [20] does is to Use rare edge labels to perform multiway search where rare edge labels are the ones matching few node pairs. E.g. Suppose query R= R1 a R2 where edge label a matches only a few node pairs (less than a threshold); then [20] will split the query to R1 and R2 and then perform breadth-first search starting from a, to R1 and R2 respectively, (so-called multiway search). I simply don
	t think it's necessary to spend much space on explaining how other approaches work.}

\smallskip \noindent
 \textbf{Option G3: Use reachability labeling \cite{DBLP:conf/sigmod/BaoDM11} combined with indexing for queries of a special form.}
Regular expressions of the form $R = (\_)^*a_1(\_)^*a_2(\_)^*\ldots(\_)^*a_k(\_)^*$
% which we will refer to as IFQs ((I)ndex (F)riendly (Q)ueries) in experiments, 
can be decomposed into $k$
sub-expressions of the form $R_i = a_i$.  The set $l_i$ of nodes pairs $(u_i, v_i)$ matching $a_i$ can be found using indexing, and reachability tested between $v_i$ and $u_{i+1}$ using dynamic labeling.
\eat{The benefit is that, for such expressions, allpairs reachability
  can be answered in time linear in input and output size ($|l_i| +
  |l_{i+1}|$).} 

\medskip \noindent \textbf{Our approach:} %  Decompose $R$ into a set of safe sub-expressions.}
We first represent the regular expression $R$ as a (binary/unary) parse tree 
\cite{DBLP:conf/vldb/LiM01} as described in Option G1, and 
find its {\em largest safe subtree}, which then can be evaluated using the approach described in Section~\ref{sec:our-approach}.   The remainder of the query can then be evaluated using 
Option G1.

Given a tree representing $R$, to find the largest
safe subtree, we traverse the tree top-down from the root.  For each
subtree, we check the safety of the regular expression it represents
using the algorithm given in Section~\ref{sec:constraints}.
If the
subtree is unsafe, we move to its child subtrees until we find a subtree
that is safe. We will show in experiments that this simple heuristic
 yields significant performance improvements over previous
 approaches.  

It is worth noting that there may be  many
 trees equivalent to $R$ due to query rewriting \cite{DBLP:conf/pods/CalvaneseGLV99}. 
 Finding  the {\em largest safe subexpression} is therefore an interesting optimization problem, which we leave for
 future work. 
\eat{ an interesting optimization problem, which we have not yet studied.  The approach we currently use is to work bottom up from the leaves of the tree and until
the subtree becomes unsafe.  Note that safe expressions are closed under
union, but not under concatenation. However, it is straightforward (albeit potentially time-consuming when more than a few symbols are concatenated) to enumerate all possible concatenations of
subexpressions to find the biggest safe one. Safe expressions are also not closed under Kleene star, which is more problematic.    However, our experimental results show
that this simple heuristic yields significant performance improvements over the previous approaches.}
\eat{
\medskip \noindent \textbf{Option 5: Decompose the DFA for $R$ into a set of safe DFAs.}
The benefit of this approach is that we can reuse the safety result
computed in Section~\ref{sec:prelims}. The optimization problem is to
find the biggest safe sub-DFA of the given DFA. Particularly the sub-DFA should
be a single-source-single-sink graph and all incoming edges from the
outside of the sub-DFA go to the source node and all outgoing edges to
the outside are from  the sink.
}
\eat{It is also possible to decompose the DFA for $R$ rather than the tree for $R$.  How this is done, and whether it yields significant improvements over our current approach, also remains as future research.}

%%% Local Variables:
%%% TeX-master: "rpl"
%%% End:
%\input{example}
\section{Experiments}\label{sec:experiments}

To evaluate the performance of all-pairs general regular path queries, we evaluate each component  separately.   
We start in Section~\ref{subsec:ex_overhead} by evaluating
overhead, and then evaluate pairwise safe queries (Algorithm~\ref{algo:answerPairwiseRegularQuery}) in Section~\ref{subsec:ex_pairwise} and all-pairs safe queries (Algorithm~\ref{algo:answerAllPairsRegularQuery}) in Section~\ref{subsec:allpairssafe}.  
Finally, we evaluate the performance of all-pairs general queries in Section~\ref{subsec:exp_general}.

\begin{figure*}
\centering
\subfloat[Time overhead (vary grammar size)]{\label{fig:overhead_syn}\includegraphics[scale=0.45]{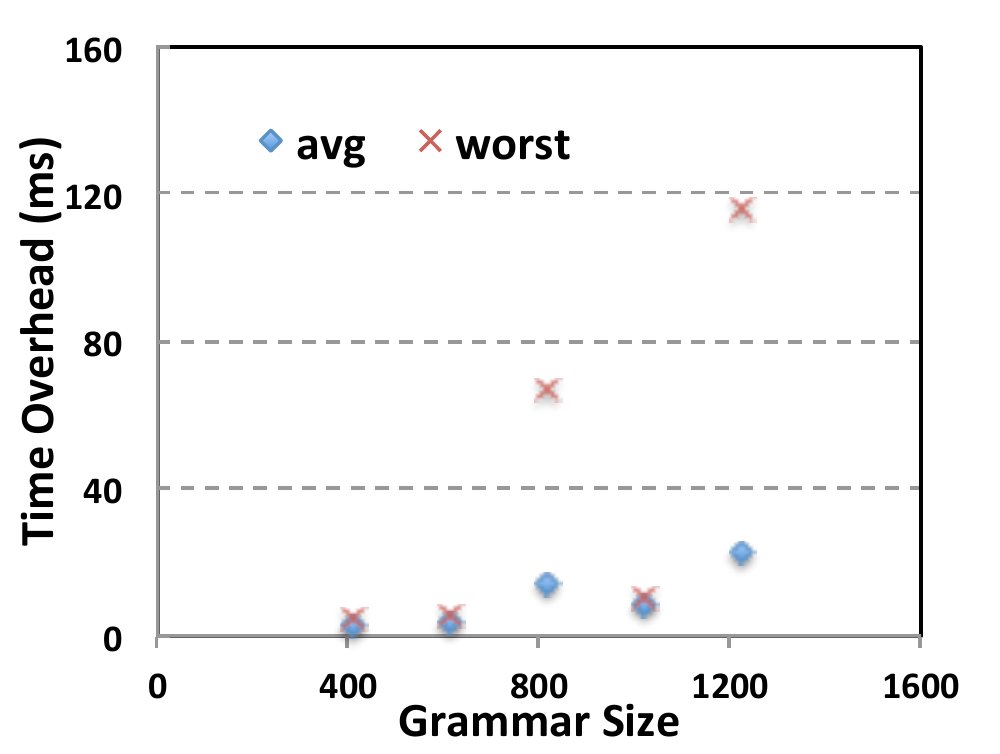}}
\subfloat[Time overhead (vary query size)]{\label{fig:overhead_real}\includegraphics[scale=0.45]{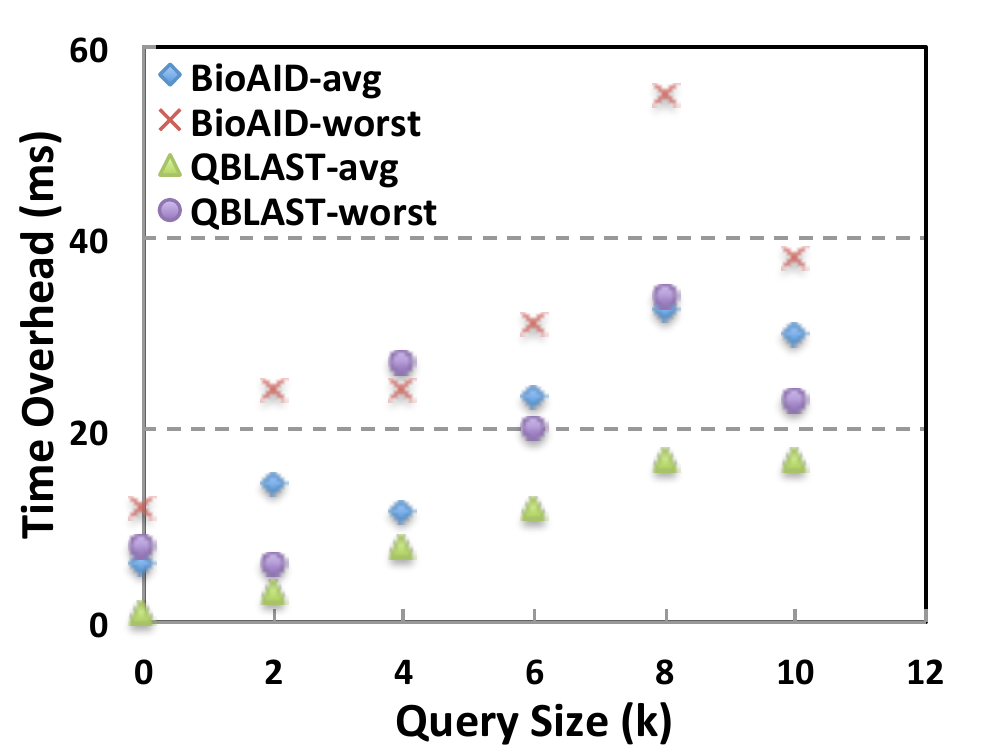}}
\subfloat[pairwise query time (vary run size)]{\label{fig:pairwise_varyRun}\includegraphics[scale=0.45]{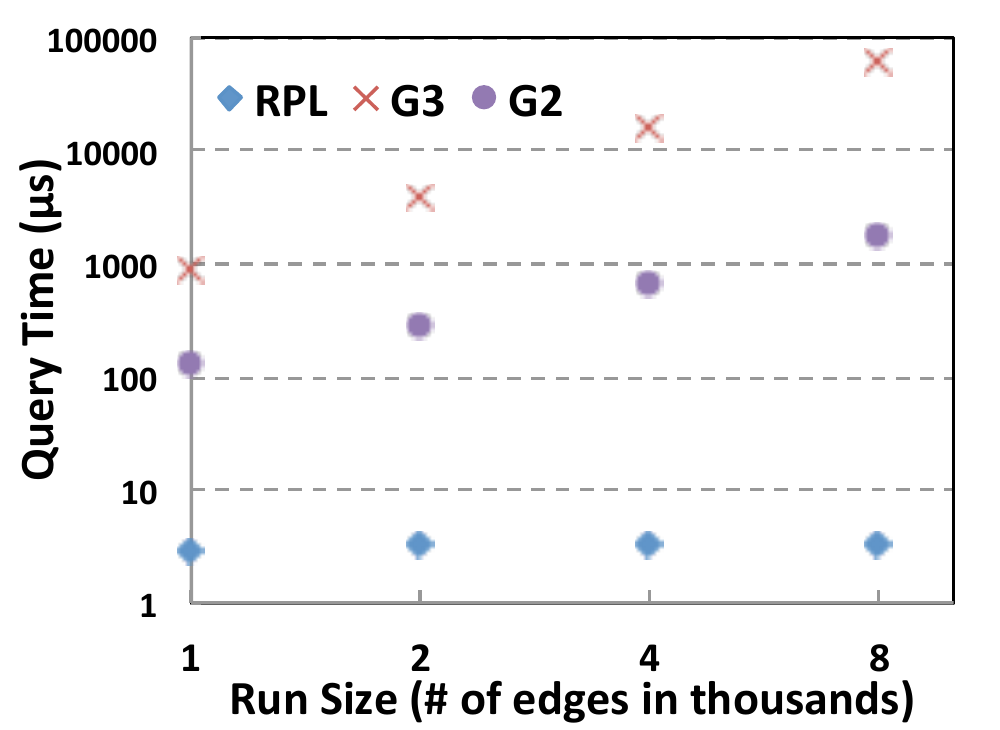}}
\subfloat[pairwise query time (vary query size)]{\label{fig:pairwise_varyQ}\includegraphics[scale=0.45]{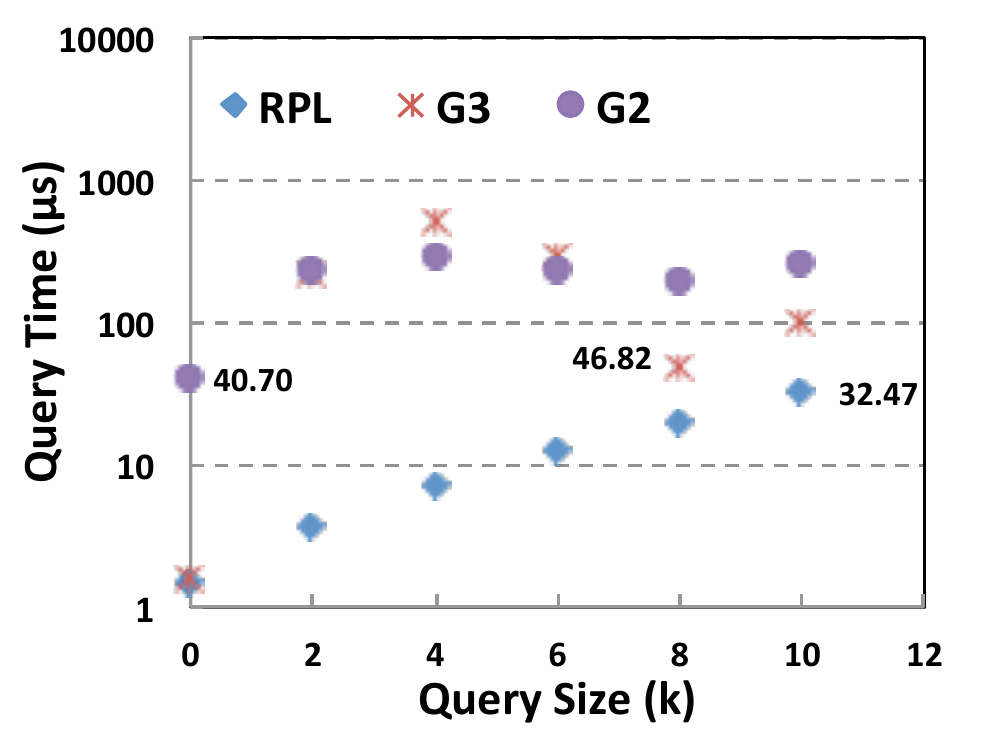}}
\\
\vspace{-3mm}
\subfloat[all-pairs query time (BioAID)\newline  $8$ IFQs ($k=3$), run size=$2K$]{\label{fig:allpairs_selectivity_BioAID}\includegraphics[scale=0.45]{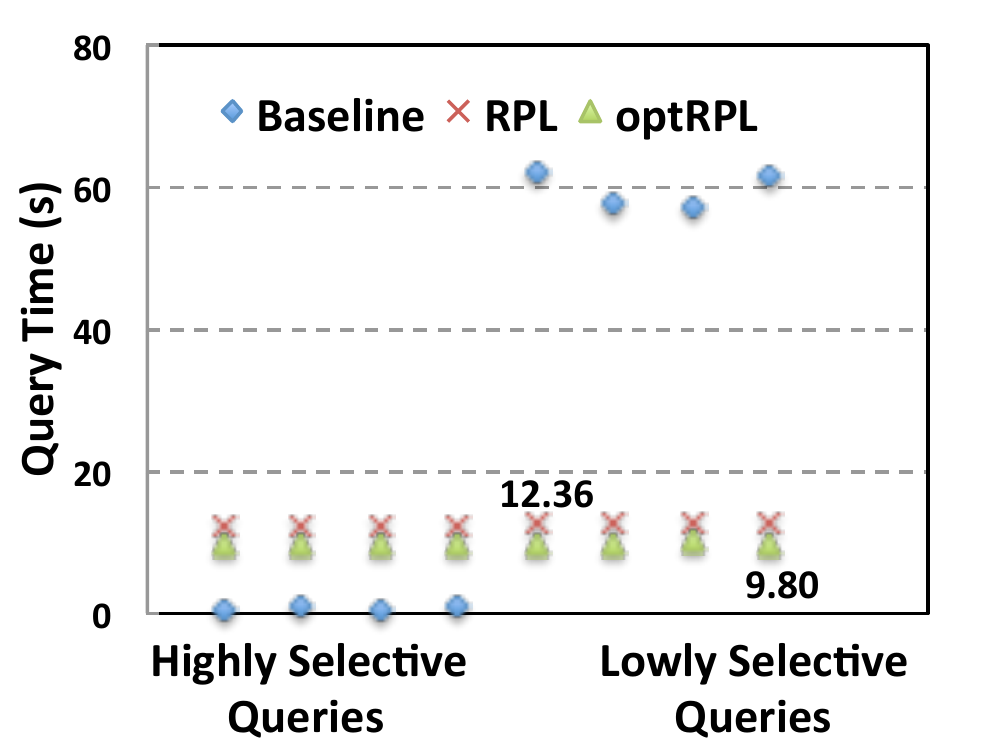}}
\subfloat[all-pairs query time (QBLast) \newline $8$ IFQs ($k=3$), run size=$2K$]{\label{fig:allpairs_selectivity_QBLast}\includegraphics[scale=0.45]{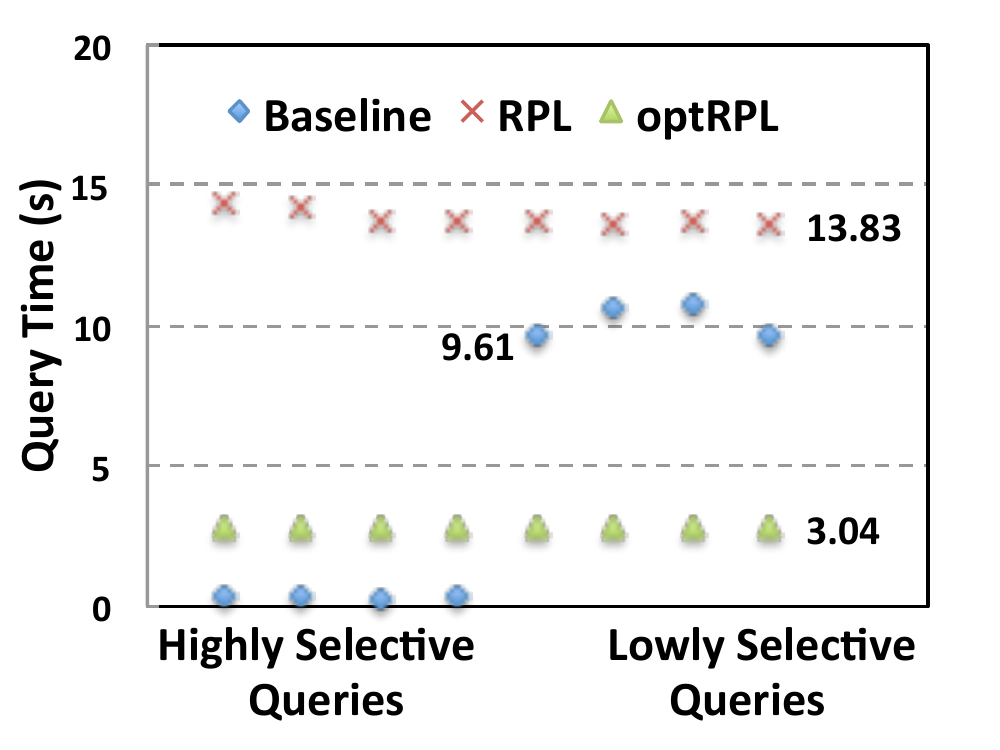}}
\subfloat[all-pairs  query time (BioAID)\newline $R=a^*$, run size=$2K$]{\label{fig:allpairs_astar_BioAID}\includegraphics[scale=0.45]{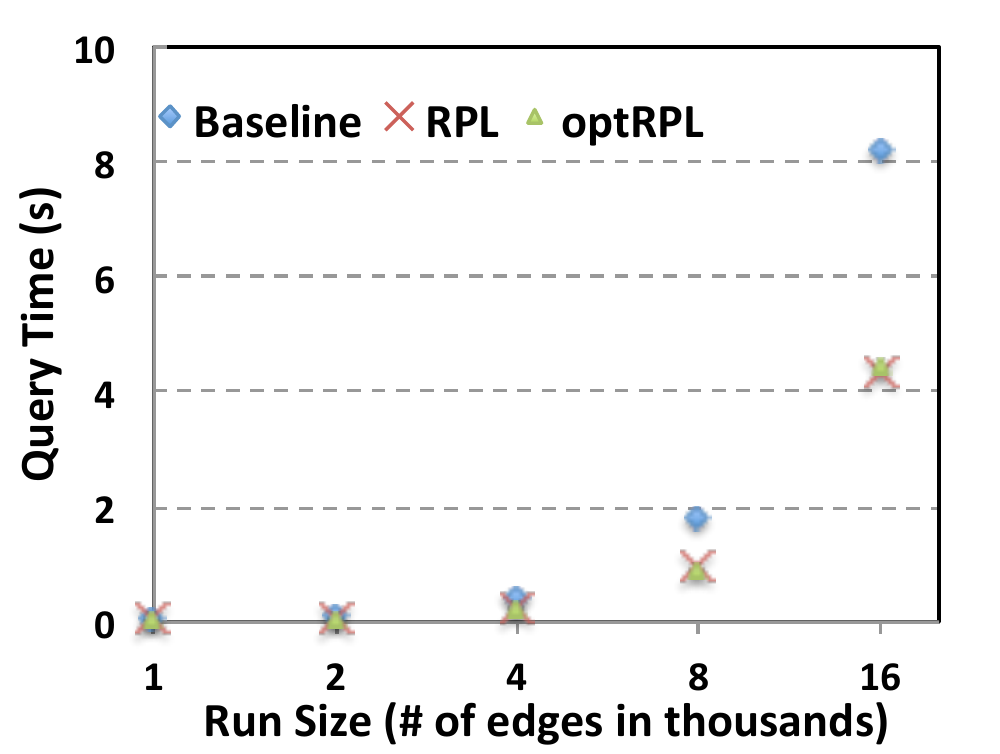}}
\subfloat[all-pairs query time (QBLast)\newline $R=a^*$, run size=$2K$]{\label{fig:allpairs_astar_QBLast}\includegraphics[scale=0.45]{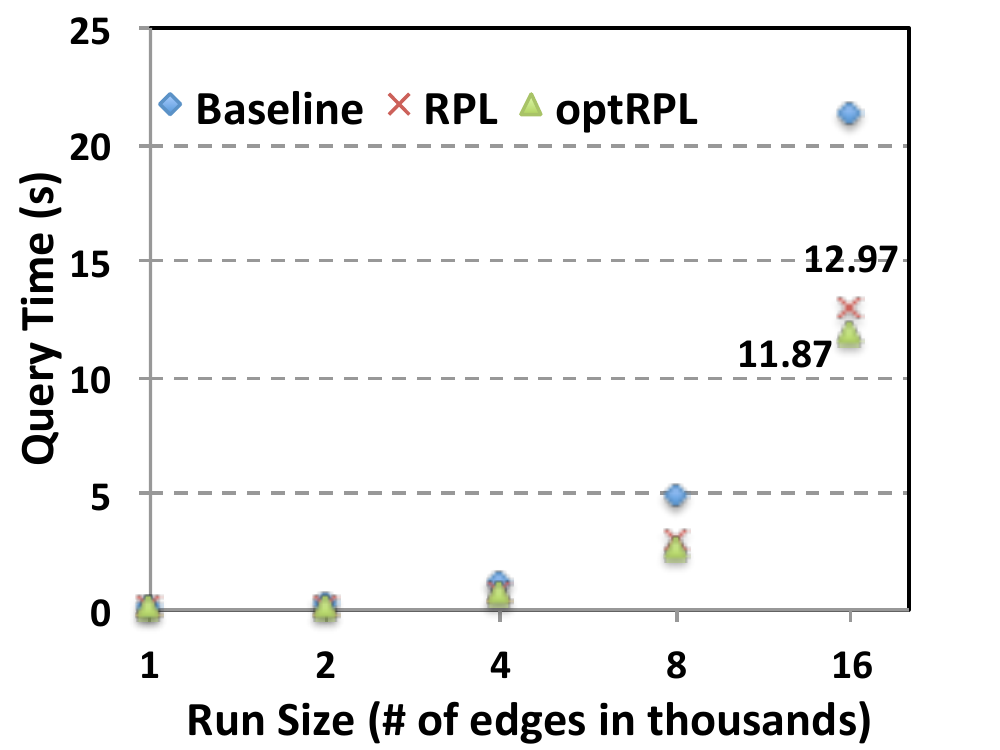}}
\vspace{-0.2cm}
\caption{Performance of our approaches (RPL and optRPL) for safe queries}
\vspace{-5mm}
\label{fig:ex_safe}
\end{figure*}

\subsection{Experimental Setup}
Experiments were performed on a
Mac Pro with Intel Core i5 2.3GHz CPU and 4G memory. We use the library \cite{automaton} to parse regular
expressions and minimize DFAs. 
\eat{We measure the query time,
% for safe and general regular path queries. Note that 
which for our algorithm includes parsing a regular expression, transforming to a DFA,
minimizing a DFA, and decoding  labels. } All reported running times are averages of 5 sample runs per setting. 

{\bf Realistic and Synthetic Datasets.}  Realistic scientific workflows were collected from  {\em myExperiment} \cite{DBLP:journals/fgcs/RoureGS09}. We report on results for two representative, recursive workflows, BioAID and QBLast.  BioAID is deep while QBLast is more ``branchy''. BioAID, of size 166, has 112 modules (16 of which are composite) and 23 productions (7 of which are
recursive)\footnote{\small The size of a workflow is the sum of the size of all productions where the size of a production equals one plus the number of modules on the right-hand side.}; QBLast, of size 105, has 77 modules (11 of which are composite) and 15 productions (5 of which are recursive). 
%More characteristics of BioAID and QBLast can be found in \cite{DBLP:conf/sigmod/BaoDKR10}.\eat{ The simple workflow on the right-hand side of each production contains at most 19 modules. } 
%We assign each edge a tag which is the module name of the head of the edge. 
To evaluate the overhead of
our approach, we create a set of synthetic workflows while varying workflow
parameters (e.g. size, recursion depth, node degree).
%and etc. \cite{DBLP:conf/sigmod/BaoDKR10}. 
Due to lack of space, we only report the overhead while varying workflow size.

Since  {\em myExperiment}  does not record executions,  we simulate
runs. If not specified, we apply a random sequence
of productions, varying run sizes (i.e. the number of edges) from 1K
to 8K by a factor of 2, and labeling the nodes as they are
generated. \eat{We stop at 8K because Option G2  (Section~\ref{sec:general})
overflows the memory at 16K due to the size of  intermediate result, which can be exponential in the run size.}
All executions are stored as Java serializable objects on disk. The loading time is omitted\eat{ in our experiments}. 

In addition, we build indices to support comparisons (Option G3).
\eat{We build inverted indices for each run to help identify adjacent node pairs.} 
For each run, an index maps an edge tag $\gamma\in\Gamma$  to a list of node pairs 
%$(u,v)$ where $u,v$ 
that are connected by an edge tagged $\gamma$. We
store indices as Java serializable objects and materialize them on
disk. The running time %of the comparing approaches
for all-pairs queries thus includes disk access for indices. Although we could
further reduce the index access time by using more sophisticated indices,
the inverted index is fast enough (below $10ms$) in light of the query
time being more than $1s$.

\eat{we will show in
  Section~\ref{subsec:ex_pairwise} that the disk access for indices
  are fast enough (compared with query time) that  it will not hurt the overall performance of existing
approaches and thus the comparison still makes sense.}

\smallskip
{\bf Queries.}  We test using two classes of queries known to be
expensive: 
%We generate safe and unsafe queries of various sizes. 
\eat{To illustrate the performance of our regular path labels (for safe
queries),  }

(1) {\bf  IFQs}  ($R=(\_)^*a_1(\_)^*a_2\ldots a_k(\_)^*$)  ask for
  node pairs that are processed by a sequence of modules. IFQs can be handled
  by indexing and reachability labels (Option G3), 
%for which we use the state-of-the-art reachability labeling scheme \cite{DBLP:conf/sigmod/BaoDM12},
  and are therefore challenging to improve upon. {\em We will show that our approach beats the baseline
when queries are not too selective.}

(2)  {\bf Kleene Star ($R=a^*$)},  used to query provenance for recursions,
%for forks and loops \cite{DBLP:conf/icde/BaoBDEK09} (recursion in grammars)
is at the other extreme. The baseline for Kleene star is Option G2, which performs
 self-joins on node pairs that are connected by an edge labeled $a$ until a fixpoint is reached. Since it is unknown how many rounds it takes to
 reach a fixpoint, the performance\eat{ of this approach} can be
 very bad. {\em We will show that our approach achieves a major gain in this case.}

\eat{It is worth noting that Kleene star is important since workflows may
be recursive, and thus subworkflows may be executed repeatedly either
in series (looping) or in parallel (forking). {\color{red} Thus to evaluate Kleene
star, we generate runs by executing one cursion repeatedly}.}

%\eat{
It is worth noting that Kleene star is important for querying
workflow provenance because of fork and loop operations. We illustrate
by the simplest Kleene star expression $a^*$. For example, one fork operation in BioAID is captured by two productions shown
in Fig.~\ref{fig:fork}, where module $A$ is  composite, whose
productions are omitted and
module $M$ is to call $A$ in forks; atomic module $a$ is a fork distributor and
atomic module $b$ is a fork aggregator. When executing, the first production may be
fired many times, the result run of which is shown in
Fig.~\ref{fig:forkrun}. When a user queries for data that are
processed by forks, he may issue query $a^*$. To evaluate the
performance of $a^*$, we generate runs by firing the specified fork
recursion many times and other recursions only once, while varying run size from 1K to 16K by a factor of 2.

\usetikzlibrary{arrows}
\begin{figure}
\centering
\subfloat[Fork]{\label{fig:fork}\begin{tikzpicture}
\node[draw,circle,inner sep=1pt] (head) {M};
\node[draw,circle,inner sep=1pt, right of =head] (v0) {a};
\draw[double,->] (head.east) to (v0.west);

\node[right of =v0, node distance = 25] (center) {};
\node[draw,circle, inner sep=1pt, above of=center, node distance=15] (v1) {M};
\node[draw,circle,inner sep=1pt, below of=center, node distance=15] (v2) {A};
\node[draw,circle, inner sep=1pt, right of =center, node distance =30](v3) {b};
 \path[->,double]
            (v0)    edge  node [above]{$a$}     (v1)
 (v0)    edge  node[below] {$a$}     (v2)
 (v1)    edge  node[above] {M}     (v3)
 (v2)    edge  node[below] {A}     (v3)
;

\node[right of = v1, node distance = 40] (top){};
\node[right of = v2, node distance = 40] (bottom){};
\draw[-] (top) -- (bottom);
\node[draw, circle,inner sep=1pt, right of = center, node distance = 50]{A};

%\draw[double,->] (v0) -- (v1) \node[above]{d};

\end{tikzpicture}
}
\subfloat[Run of Fork]{\label{fig:forkrun}\tikzstyle{level 1}=[level distance=0.8cm, sibling distance=0.8cm]
\tikzstyle{level 2}=[level distance=0.8cm, sibling distance=0.8cm]

% Define styles for bags and leafs
\tikzstyle{bag} = [circle, draw=black,text centered,inner sep=1pt]

% The sloped option gives rotated edge labels. Personally
% I find sloped labels a bit difficult to read. Remove the sloped options
% to get horizontal labels. 
\begin{tikzpicture}[grow=right, sloped]
\node[bag] {a}
    child {
        node[bag] {A}    
            child {
             node[right] {...}
                edge from parent node[below]  {A}
            }
edge from parent node[below]  {$a$} 
    }
    child {
        node[bag] {a}         
        child {
                 node[bag] {A}    
                 child {
             node[right] {...}
                edge from parent node[below]  {A}
            } edge from parent node[below]  {$a$}
                } 
        child {
                node[bag] {a}
                child {
                  node[bag] {A}    
  child {
             node[right] {...}
                edge from parent node[below]  {A}
            }
                 edge from parent node[below]  {$a$}
               } 
                child {
                  node[right] {...}
                  edge from parent
                  node[above]  {$a$}
                }
                edge from parent node[above]  {$a$}
              }
                edge from parent
                node[above]  {$a$}
            };
\end{tikzpicture}
}
\vspace{-1mm}
\caption{A fork example of BioAID}
\vspace{-1mm}
\end{figure}
%}

\eat{To show how labels can be used to answer general
queries,} We also generate queries by randomly combining edge tags using
concatenation, union, and Kleene star. Due to space
  constraints, we do not explicitly list these queries. 
    \eat{Since we use labels to answer regular path queries, we will refer
    to the labels as {\em regular path labels} in our experiments.}

\eat{
\begin{figure*}
\centering
\subfloat[FRPL vs IRL ($8$ IFQs) \newline (Input list size
$|l_1|*|l_2|=
2M$)]{\label{fig:ex_IFQs}\includegraphics[scale=0.45]{figs/INRLvsFRPL.pdf}}
~
\subfloat[FRPL vs Non-RPL ($R=a^*$)]{\label{fig:ex_FRPLvsNonRPL}\includegraphics[scale=0.45]{figs/FRPLvsNonRPL.pdf}}
~
\subfloat[FRPL vs
NRPL]{\label{fig:ex_FRPL}\includegraphics[scale=0.45]{figs/FRPLvsNRPL.pdf}}
\vspace{-0.5cm}
\caption{Performance of All-pairs Safe Queries}
\vspace{-3mm}
\label{fig:ex_safe}
\end{figure*}}

\subsection{Overhead of Our Approach}
\label{subsec:ex_overhead}
In this section, we use {\bf RPL} ({\bf R}egular {\bf P}ath {\bf L}abels) for our approach to pairwise queries (Algorithm~\ref{algo:answerPairwiseRegularQuery}) and Option S1 for all-pairs queries.  We use {\bf optRPL} for Option S2 (Algorithm~\ref{algo:answerAllPairsRegularQuery}).

The overhead of our approach comes from checking the safety of a query w.r.t. a workflow (Section~\ref{sec:constraints}). We thus evaluate
overhead while varying the grammar and  query size. Fig.~\ref{fig:overhead_syn} shows the average and worst time overhead of $20$ IFQs with $k=3$ on synthetic workflows of size varying from $400$ to $1200$ ($10$ workflows per size).  Fig.~\ref{fig:overhead_real} shows the average and worst time overhead of IFQs varying $k$ from $0$ to $10$ on both BioAID and QBLast. We can see that
time overhead increases as the grammar or query grows. Nonetheless the time overhead is  $<200ms$, which is
acceptable compared to the query time in seconds (see Section~\ref{subsec:allpairssafe}).

\subsection{Performance of Pairwise Safe Queries}
\label{subsec:ex_pairwise}
%%%%%%%%%%
We test the performance of pairwise safe IFQs on real datasets while comparing Option G2 and G3 and %our Algorithm~\ref{algo:answerPairwiseRegularQuery} referred as
RPL. Option G1 is clearly worse than Option G3 for IFQs so we omit it here.
\eat{(1) (D)epth-(F)irst (S)earch ({\bf DFS}), which traverses the
provenance graph using depth-first search while simulating DFA
transitions; (2) (I)ndex + (R)eachability (L)abels ({\bf IRL}), which
works only
for IFQs ($R=(\_)^*a_1(\_)^*a_2\ldots
a_k(\_)^*$) and for which we use the reachability labeling
scheme in \cite{DBLP:conf/sigmod/BaoDM12}; and (3) our new approach in Section~\ref{sec:pairwise},
(R)egular (P)ath (L)abels ({\bf RPL}).}
We vary run size and query size. QBLast has similar trends to BioAID, so we only report on BioAID.  The pairwise query time  is in
microseconds, so we use $10K$ node pairs  and report the average time of $5$ queries per setting. For RPL, the query time thus includes time overhead amortized over $10K$ node pairs. Fig.~\ref{fig:pairwise_varyRun} reports the query time of  IFQs of size $3$ on runs varying in size from $1K$ to $8K$. We can see that  RPL runs in almost constant time (below $10\mu s$) while the run times of the other approaches grow sharply as the run size grows, from $100\mu s$ to $1 ms$ and $1ms$ to $100ms$ respectively. Fig.~\ref{fig:pairwise_varyQ} reports the  query time of IFQs of size $k=0$ to $k=10$ on runs of size $2K$. We can see that the query time of RPL grows as the query size grows, but remains below $40 \mu s$. In contrast, the query time of Option G2 and G3 goes above $40 \mu s$.  Note that when $k=0$, the query degrades to a reachability query and thus Option G3 has constant
query time. It is worth noting that Option G2 and G3 do not show a clear trend; this is because they rely on query selectivity, which we will discuss in the next section. {\em In summary, RPL significantly outperforms Options G2 and G3 for pairwise safe queries.}
\eat{Note that our approach RPL involves {\em preparation},   i.e. it checks query safety and precomputes matrices (see
Section~\ref{sec:pairwise}). The preparation is done once per
query and can be amortized over all node pairs. \eat{ Recall that the preparation time
of RPL depends on the grammar size and query size.} In our experiments,
the preparation time of RPL is around 4ms. The query time reported in
Fig.~\ref{fig:ex_pairwise} includes the preparation time amortized
over 10K node pairs.}

\eat{IRL accesses the disk to retrieve indices; and RPL  checks
{\color{red} For DFS, we assume the graph is in the
memory.} Table~\ref{ex:pre_time} shows the
preparation time for all three methods. The preparation time for
\eat{DFS and} IRL depends on the run size, while the time for RPL depends on the
 grammar and query size. Although RPL's preparation time is
 sometimes worse, it does not hurt the overall query performance since
 it is amortized over all nodes pairs, which can be in the thousands.}
\eat{\color{red}Although the preparation time for IRL (below $10ms$) could
  be further reduced by using complex
index techniques, we can see that it does not affect the overall query performance since
 it is amortized over all nodes pairs, which can be in the thousands.}

\eat{
\begin{table}
\centering
\begin{tabular}{|l|l|l|c|}
\hline
Run Size & DFS & IRL & RPL ($|Q|\leq 5$)\\
\hline
1K&{\color{red} 0}&1.8&4.0\\
\hline
2K&{\color{red} 0}&2.4&4.0\\
\hline
4K&{\color{red}0}&5.2&3.9\\
\hline
8K&{\color{red}0}&9.3&4.1\\
\hline
\end{tabular}
\vspace{-1mm}
\caption{Preparation time (ms)}
\vspace{-0.2cm}
\label{ex:pre_time}
\end{table}
}

\eat{
\begin{figure}[h]
\centering
\includegraphics[scale=0.5]{figs/pairwise_decoding.pdf}
\caption{Query time of pairwise RPQ (Run Size=2K)
 $(R=(\_)^*a_1(\_)^*\ldots a_k(\_)^*)$}
\label{fig:ex_pairwise}
\end{figure} }

\eat{Fig.~\ref{fig:ex_pairwise} reports the time to answer pairwise
 queries while varying the query size\eat{, which includes the
preparation time amortized over 10K node pairs}. Note that when $k=0$,
the query degrades to a reachability query and thus IRL has constant
query time.
%IRL and DFS have linear time  complexity in the run size, and
The  time for IRL and DFS
grows sharply as the query grows, from $1 \mu s$ to $650 \mu s$ and $368 \mu s$ to $768 \mu s$, 
respectively. In contrast, RPL has almost constant query time, below $4 \mu$s,
and significantly outperforms IRL and DFS. Even though
the RPL query time appears to be constant, a closer look shows that it increases slightly
as the query size increases. As $k$ increases from $0$ to $4$, the query
time of RPL increases from $1\mu s$ to $4 \mu s$.}
\eat{
\begin{figure*}
\centering
\eat{
\subfloat[Query
time]{\label{fig:ex_RPL}\includegraphics[scale=0.55]{figs/RPLvsBaseline.pdf}}}
\subfloat[Query Time of Pairwise RPQ (Run Size=2K)  $(R=(\_)^*a_1(\_)^*\ldots a_k(\_)^*)$]{\label{fig:ex_pairwise}\includegraphics[scale=0.55]{figs/pairwise_decoding.pdf}}
~
\subfloat[Improvement]{\label{fig:ex_imprvRPL}\includegraphics[scale=0.55]{figs/impvRPL.pdf}}
\caption{Performance of General Queries (Run size = 4K)}
\end{figure*}}

\subsection{Performance of All-pairs Safe Queries}
\label{subsec:allpairssafe}
We now show how labels can be used to help
answer all-pairs safe queries. \eat{ Our approaches are named by RPL (Option
S1) and optRPL (Option S2).} Since Option G2 is designed to run in parallel for all-pairs queries, it would be unfair to include it. So for IFQs, which favor existing approaches, the baseline is\eat{ the best
existing approach i.e. } Option G3. For Kleene
star, which favor our approach, the baseline is\eat{ the best
existing approach i.e.} Option G1.

Fig.~\ref{fig:allpairs_selectivity_BioAID}  reports the query time for
$8$ IFQs of size 3 $k=3$
over runs of size $2K$ (the
input lists $l_1$, $l_2$ are the list of all nodes) on BioAID. As expected, the baseline performance  varies
widely since it depends on the selectivity of the query. The first set of 4 queries are highly selective 
( $<10$ node pairs), while the second set of 4 queries are not (around 100 node pairs). 
The baseline query time  increases from $0.4s$ for the first set  to $60s$ for the second set.
In contrast, the performance of  RPL and optRPL depend  solely on the
input list size (and query size), and  have respective query times of
$12s$ and $10s$ for all queries of this size, regardless of their
selectivity.\eat{ The same trend can also be observed from
  Fig.~\ref{fig:allpairs_selectivity_QBLast}.} Compared with
Fig.~\ref{fig:allpairs_selectivity_QBLast}, RPL is stable when query
size and input list size are fixed while optRPL achieves a major gain
over RPL. Note that query times of (opt)RPL for lowly selective queries
on BioAID and QBLast vary, around 60s and 14s respectively; that is
because BioAID is deeper so that IFQs match more node pairs.

Fig.~\ref{fig:allpairs_astar_BioAID} reports the query time for $a^*$ on BioAID. 
The query time of the baseline increases dramatically 
from $44ms$ to $7.8s$  as the run size grows from $1K$ to
$16K$\eat{, since it takes up to **** rounds to reach the fixpoint}.  
In contrast, RPL and optRPL increase slowly from $22ms$ to
$4.7s$, reducing the query time by an order of magnitude. The same trend can be observed in Fig.~\ref{fig:allpairs_astar_QBLast} for QBLast. Note that optRPL shows limited improvement over RPL because a relatively small number of unreachable node pairs are filtered out.

\eat{ eat for now
We now show how regular path labels can be used to help
answer all-pairs safe queries. 
We start with a comparison between using and
not using regular path labels, illustrated using
the two query classes,  IFQs  (which favors existing approaches)  and Kleene star (which favors our
approach). 
The best existing approach for IFQs  (Option G3 in Section~\ref{sec:general}) uses (I)ndex and (R)eachability
(L)abels ({\bf IRL} in Fig.~\ref{fig:ex_IFQs}), for which we use the reachability labeling scheme of
\cite{DBLP:conf/sigmod/BaoDM12}. Note that we do not precompute
transitive closures.
The best existing approach for
Kleene star uses either depth-first search (Option G1 in Section~\ref{sec:general}) or
structural joins (Option G2 in Section~\ref{sec:general}).  We omit results of Option G1 since it is
very slow, and refer to Option G2 as {\bf Non-RPL}. We compare with our 
Algorithm~\ref{algo:answerAllPairsRegularQuery}, i.e. {\bf FRPL} in Fig.~\ref{fig:ex_safe} (pre(F)iltering +
(R)egular (P)ath (L)abels). We then show that part of the performance gain of FRPL comes from the
prefilering step.\eat{(Section~\ref{subsubsec:FRPL}).}  In
Fig.~\ref{fig:ex_FRPL},  {\bf NRPL} stands for 
(N)ested-loop joins over regular path labels (Option S1 in Section~\ref{sec:our-approach}).

%\subsubsection{FRPL vs IRL and Non-RPL}\label{subsubsec:FRPLvsNonRPL}
{\bf FRPL vs IRL and Non-RPL.}
Fig.~\ref{fig:ex_IFQs} reports the query time using IRL and FRPL for
$8$ IFQs of size 3  \\ ($R=(\_)^*a_1(\_)^*a_2(\_)^*a_3(\_)^*$)
over runs of size $2K$ (the
input lists $l_1$, $l_2$ are the list of all nodes). As expected, the performance of IRL varies
widely since it depends on the selectivity of the query. The first set of 4 queries are highly selective 
( $<10$ node pairs), while the second set of 4 queries are not (around 100 node pairs). 
The query time of IRL increases from $0.2s$ for the first set  to $42s$ for the second set.
In contrast, the performance of  FRPL depends solely on the input list size (and query size). 
As shown in Fig.~\ref{fig:ex_IFQs}, FRPL has a query time of $5s$ for all queries of this size, regardless of their selectivity. 
\eat{\scream{Xiaocheng:  Perhaps move this to the summary -- By using
statistics of data, we can probably decide which approach to choose ahead of
time. We leave it as future work. }}
It is worth noting that our linear-time algorithm for all-pairs
reachability queries could also be used for IFQs, and that it outperforms IRL by
more than 20\%. (Details are omitted.)

%\subsubsection{FRPL vs Non\text{-}RPL}\label{subsubsec:FRPLvsNonRPL}
\eat{In this subsection, we show the power of our regular path
labels. Since all existing work cannot avoid generating huge amount of
intermediate
results whereas our approach solely depends on the input list size. 
It is not hard to imagine that our approach is extremely good for
Kleene star.}

Fig.~\ref{fig:ex_FRPLvsNonRPL} compares FRPL and Non-RPL for the query $a^*$. 
%As seen from Fig.~\ref{fig:ex_FRPLvsNonRPL}, 
The query time of Non-RPL increases dramatically (exponentially)
from $2ms$ to $47ms$  as the result size of $a$ grows from $20$ to
$200$, since it takes up to 10 rounds to reach the fixpoint.  In contrast, FRPL increases slowly from $2ms$ to
$4ms$, reducing the query time by an order of magnitude.

%\subsubsection{FRPL vs NRPL}\label{subsubsec:FRPL}
{\bf FRPL vs NRPL.}
We now show that the performance gain of FRPL partly results from the
prefiltering step. Recall that both FRPL and NRPL use regular path
labels. Given two lists of nodes $l_1,l_2$, NRPL performs a nested-loop
join over $l_1$ and $l_2$  and checks  regular path labels for each node pair. 
FRPL filters out unreachable node pairs first, and then uses nested-loops to check
regular path labels for the remaining node pairs.

 Fig.~\ref{fig:ex_FRPL} reports the query time for NRPL and FRPL. 
First, both approaches are almost quadratic in the input list size (note that the x-axis is
$|l_1|*|l_2|$), independent of the run size. Second, both approaches are very
fast, less than $200ms$ when $|l_1|*|l_2|<10K$, since answering
pairwise queries is fast.  % (Section~\ref{subsec:ex_pairwise}). 
Moreover, FRPL outperforms NRPL by almost $20\%$. 
}

\subsection{Performance of General Queries}
\label{subsec:exp_general}
Finally, we evaluate the performance of regular path labels
when used in general queries. We compare our approach
optRPL with the baseline Option G1 on queries randomly generated by
combining IFQs, Kleene stars and edge labels. We observed that most of
the queries are safe. In this subsection, we only report the
performance of $40$ unsafe queries. 
As expected, our technique significantly speeds up the performance of queries
that generate massive intermediate results due to lowly selective components.
For BioAID, 75\% (31/40) of the unsafe queries were of this form; the improvement over the baseline of these queries is shown in 
Fig.~\ref{fig:ex_general_BioAID}.  Note that over 60\% (19/31) of these queries show significant improvement  ($>40\%$). 
For QBLast, over 25\% (13/40) of the unsafe queries were of this form, and the improvement is shown in 
Fig.~\ref{fig:ex_general_QBLAST}. 
We therefore conclude that our technique could be a very useful component in a cost-based query optimizer that uses statistical
information to choose the right query plan and would significantly reduce the evaluation cost of lowly selective subqueries.

%Our approach works well for queries of the form $R=R_1|R_2|\ldots |R_k$ where $R_i$ is a safe subqueries.

%\vspace{-5mm}
\begin{figure}
\centering
\subfloat[BioAID]{\label{fig:ex_general_BioAID}\includegraphics[scale=0.4]{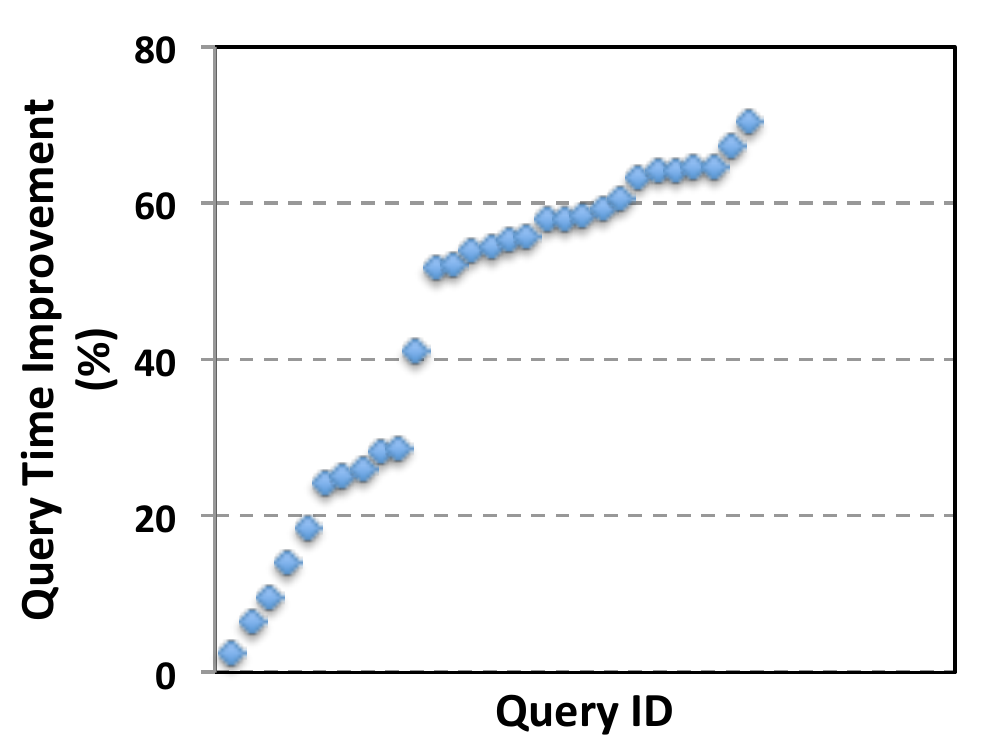}}
~
\subfloat[QBLast]{\label{fig:ex_general_QBLAST}\includegraphics[scale=0.4]{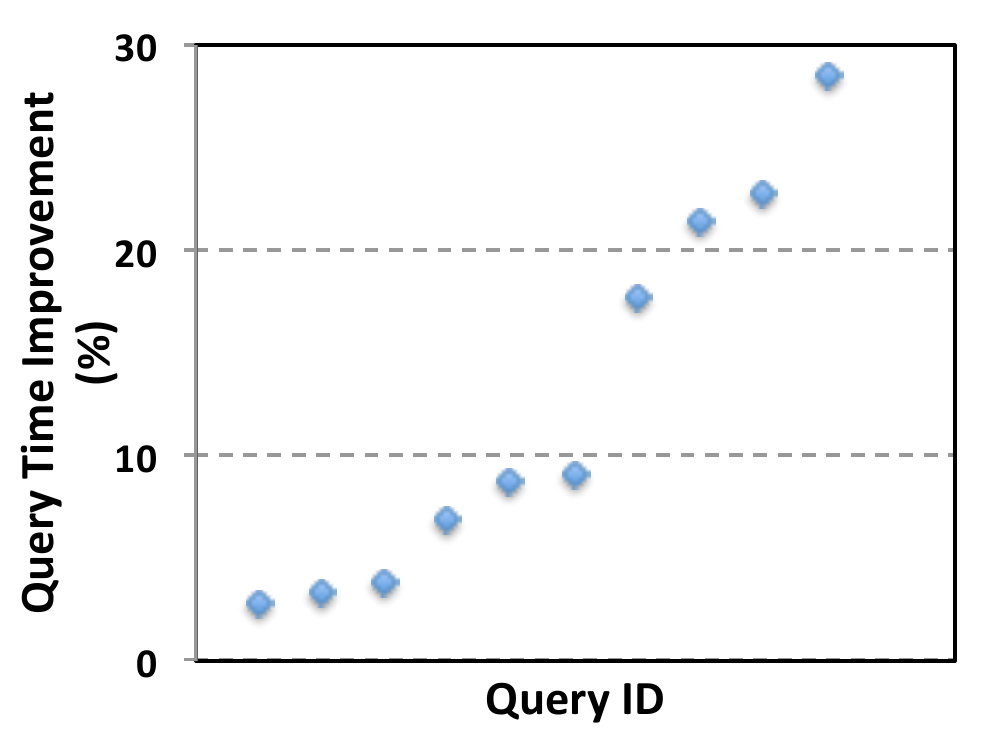}}
\vspace{-0.2cm}
\caption{Improvement of optRPL for general queries}
\vspace{-1mm}
\label{fig:ex_safe}
\end{figure}

%%%%%%%%%%%%%
\eat{ eat for now
Finally, we evaluate the performance of regular path labels
when used in general queries. We compare our approach
(RPL, Section~\ref{sec:general}) with the baseline (Option G2 in Section~\ref{sec:general},
Option G1 is too slow and is omitted).  
\eat{We generate queries by randomly combining edge tags using concatenation, union, and Kleene
star, varying the number of safe subqueries.  Each query has
$10$ subqueries in total while the number of safe subqueries varies from
$2$ to $8$. }
We generate a query with $2 \leq n \leq 8$ safe subqueries by randomly choosing $n$ of a set of pre-constructed safe queries, and combining them with $10-n$ random edge labels and operators (concatenation, alternation, Kleene star).  
\eat{Let n be the x-axis of Figure 16 (# of safe subqueries), i.e n=2,4,6,8. Take n=2 for example.
I have a set of safe queries the size of which vary. I randomly generate 30 queries each of which is constructed by combing n=2 random safe queries and 10-n=8 random edge labels using random operators (concatenation, alternation, Kleene star).  The performance of queries when n=2 is the average of these 30 queries, i.e. the leftmost bar of Figure 16.
Similarly, when n=4, I generate each query by combining n=4 safe queries and 10-n=6 random edge labels using random operators. 
So neither n=1 nor n=10 can guarantee a safe query.}
\eat{
 \begin{figure}
\centering
\includegraphics[scale=0.45]{figs/impvRPL.pdf}
\vspace{-1mm}
\caption{Improvement of RPL for General Queries}
\vspace{-5mm}
\label{fig:ex_imprvRPL}
\end{figure}}

\eat{
{\bf Implementation and Observation}  
Our main observation is that when dealing with general queries, it may
not be better to use FRPL all the time, which we discuss
next. Recall that we consider three operators in this paper,
concatenation, union and Kleene Star. Concatenation and union are cheap
operators.
If the two input node lists are sorted,
concatenation can be done using merge join in time linear in input
size. Union can be done using hash set structure in Java and therefore can be
done almost in constant time.  In our experiments, queries without Kleene Star can be
answered in less than 20ms (when queries have <10 operators in total)
by the baseline. In contrast, FRPL is
quadratic in input size. Seen from Fig.~\ref{fig:ex_FRPL}, when each
of the two input list has 200 nodes, the running time of FRPL is already
50ms. Note that intermediate results of a general query are usually
hundreds even thousands. Thus, in our experiment, for concatenation and
union, we choose not to apply FRPL. However, Kleene star is an expensive
operator as has shown in Fig.~\ref{fig:ex_FRPLvsNonRPL}. Thus we
apply FRPL for safe subqueries that are Kleene star expressions.

Fig.~\ref{fig:ex_RPL} reports the query time of all-pairs general
queries for RPL and baseline. Each data point here is average of 30
randomly generated queries each of which has 10 subqueries in total.
 The queries are evaluated  over 10 runs of size 4K with two input
lists of 200 nodes. The x-axis is the number of {\bf safe} subqueries, which are {\bf all
Kleene star} expressions. The first two points show that RPL
is no better than baseline. We observe that most Kleene star expressions are
safe. That's why the first two points have little query time. Thus the
overhead of checking safety becomes the main cost. As queries become
complex (and have more Kleene star subexpressions), RPL starts to
win. }

Fig.~\ref{fig:ex_imprvRPL} reports the improvement of RPL over
the baseline. % among $30$ queries each. 
The leftmost bar is the number of
queries (out of $30$) for which RPL wins. As the number of safe subqueries
grows beyond $4$, more than $25$ ($83\%$) queries show an
improvement. The middle bar shows the number of queries for which RPL wins
by at least $30\%$. Nearly $67\%$ of queries show a $30\%$ improvement. The
rightmost bar shows that for some queries, RPL achieves significant
improvement. We also observe in our experiments that when the query
contain few safe subqueries ($<2$), RPL is a bit slower, $18ms$ on
average, since RPL spends
time trying to find safe subqueries. When the query contain $8$ safe
subqueries (out of $10$), RPL brings down the average query time from
$126ms$ to $55ms$.}

\eat{\subsection{Summary}
To summarize, our approach (opt)RPL is best for queries that generate
massive intermediate results, e.g. low selectivity IFQs and
Kleene star. In contrast, for queries that are
highly selective or have few intermediate results, it is better to use the
baseline. An interesting future direction is to build a cost
model using data statistics to pick the right approach.}

\section{Conclusions}\label{sec:conclu}
This paper considers the problem of answering regular path queries over
workflow provenance graphs (executions). 
The approach assumes that the execution has been labeled with derivation-based {\em reachability} labels~\cite{DBLP:conf/sigmod/BaoDM12},
and shows how to use them to answer {\em regular path queries}.
For this we identify a core property of a query w.r.t. a specification,
{\em safe query}, which allows reachability labels 
to be used in conjunction with the query-intersected specification to answer a pairwise regular path query in constant time.
The reason that this works is that the reachability  labels of \cite{DBLP:conf/sigmod/BaoDM12}, unlike other labeling techniques,
are parameterized by the specification. 
Building on this, we develop efficient algorithms to answer all-pairs safe/general queries. Experimental
results demonstrate the advantage of our approach, especially for
queries which  generate large intermediate results, e.g. Kleene
star. Future work includes 1) building a cost model to predict the
intermediate result size so as to optimize the query process; and 2) query
rewriting, taking the  workflow specification into account.

\section{Acknowledgement}
We thank the authors of \cite{DBLP:conf/ssdbm/KoschmiederL12} for
  providing their code.

\small
\balance
\bibliographystyle{abbrv}
\bibliography{rpl}

\begin{thebibliography}{10}

\bibitem{automaton}
\texttt{http://www.brics.dk/automaton/}.

\bibitem{DBLP:conf/icde/Al-KhalifaJPWKS02}
S.~Al-Khalifa et~al.
\newblock Structural joins: A primitive for efficient {XML} query pattern
  matching.
\newblock In {\em ICDE}, 2002.

\bibitem{DBLP:conf/sigmod/BaoDM11}
Z.~Bao, S.~B. Davidson, and T.~Milo.
\newblock Labeling recursive workflow executions on-the-fly.
\newblock In {\em SIGMOD}, 2011.

\bibitem{DBLP:conf/sigmod/BaoDM12}
Z.~Bao, S.~B. Davidson, and T.~Milo.
\newblock Labeling workflow views with fine-grained dependencies.
\newblock In {\em PVLDB}, 2012.

\bibitem{DBLP:conf/sigmod/BaoDKR10}
Z.~Bao et~al.
\newblock An optimal labeling scheme for workflow provenance using skeleton
  labels.
\newblock In {\em SIGMOD}, 2010.

\bibitem{DBLP:journals/tods/BarceloLLW12}
P.~Barcel{\'o}, L.~Libkin, A.~W. Lin, and P.~T. Wood.
\newblock Expressive languages for path queries over graph-structured data.
\newblock {\em ACM Trans. Database Syst.}, 37(4):31, 2012.

\bibitem{DBLP:conf/vldb/BeeriEKM06}
C.~Beeri, A.~Eyal, S.~Kamenkovich, and T.~Milo.
\newblock Querying business processes.
\newblock In {\em VLDB}, 2006.

\bibitem{DBLP:journals/iandc/Bruggemann-KleinW98}
A.~Br{\"u}ggemann-Klein and D.~Wood.
\newblock One-unambiguous regular languages.
\newblock {\em Inf. Comput.}, 140(2):229--253, 1998.

\bibitem{DBLP:conf/sigmod/BrunoKS02}
N.~Bruno, N.~Koudas, and D.~Srivastava.
\newblock Holistic twig joins: optimal {XML} pattern matching.
\newblock In {\em SIGMOD}, 2002.

\bibitem{DBLP:conf/icdt/BunemanDFS97}
P.~Buneman et~al.
\newblock Adding structure to unstructured data.
\newblock In {\em ICDT}, 1997.

\bibitem{DBLP:conf/pods/CalvaneseGLV99}
D.~Calvanese et~al.
\newblock Rewriting of regular expressions and regular path queries.
\newblock In {\em PODS}, pages 194--204, 1999.

\bibitem{DBLP:conf/vldb/ChenGK05}
L.~Chen, A.~Gupta, and M.~E. Kurul.
\newblock Stack-based algorithms for pattern matching on dags.
\newblock In {\em VLDB}, 2005.

\bibitem{Cohen-Boulakia:2011:SAR:2034863.2034865}
S.~Cohen-Boulakia and U.~Leser.
\newblock Search, adapt, and reuse: the future of scientific workflows.
\newblock {\em SIGMOD Rec.}, (2):6--16, Sept. 2011.

\bibitem{davidson@deb2007}
S.~B. Davidson et~al.
\newblock Provenance in scientific workflow systems.
\newblock {\em IEEE Data Eng. Bull.}, 30(4):44--50, 2007.

\bibitem{Dey:2013:IPR:2457317.2457353}
S.~Dey, V.~Cuevas-Vicentt\'{\i}n, S.~K\"{o}hler, E.~Gribkoff, M.~Wang, and
  B.~Lud\"{a}scher.
\newblock On implementing provenance-aware regular path queries with relational
  query engines.
\newblock EDBT, 2013.

\bibitem{DBLP:conf/icde/FernandezS98}
M.~F. Fernandez and D.~Suciu.
\newblock Optimizing regular path expressions using graph schemas.
\newblock In {\em ICDE}, 1998.

\bibitem{DBLP:conf/sigmod/HeinisA08}
T.~Heinis and G.~Alonso.
\newblock Efficient lineage tracking for scientific workflows.
\newblock In {\em SIGMOD}, pages 1007--1018, 2008.

\bibitem{hopcroft2007introduction}
J.~Hopcroft, R.~Motwani, and J.~Ullman.
\newblock {\em Introduction to Automata Theory, Languages, And Computation}.
\newblock Pearson/Addison Wesley, 2007.

\bibitem{DBLP:conf/sigmod/JinHWRX10}
R.~Jin et~al.
\newblock Computing label-constraint reachability in graph databases.
\newblock In {\em SIGMOD}, 2010.

\bibitem{DBLP:conf/ssdbm/KoschmiederL12}
A.~Koschmieder and U.~Leser.
\newblock Regular path queries on large graphs.
\newblock In {\em SSDBM}, 2012.

\bibitem{DBLP:conf/vldb/LiM01}
Q.~Li and B.~Moon.
\newblock Indexing and querying {XML} data for regular path expressions.
\newblock In {\em VLDB}, 2001.

\bibitem{DBLP:conf/icdt/LibkinV12}
L.~Libkin and D.~Vrgoc.
\newblock Regular path queries on graphs with data.
\newblock In {\em ICDT}, 2012.

\bibitem{DBLP:conf/pods/LosemannM12}
K.~Losemann and W.~Martens.
\newblock The complexity of evaluating path expressions in {SPARQL}.
\newblock In {\em PODS}, 2012.

\bibitem{DBLP:conf/vldb/MendelzonW89}
A.~O. Mendelzon and P.~T. Wood.
\newblock Finding regular simple paths in graph databases.
\newblock In {\em VLDB}, 1989.

\bibitem{opm2011}
L.~{Moreau et al}.
\newblock {The Open Provenance Model} core specification (v1.1).
\newblock {\em Future Generation Computer Systems}, (6):743 -- 756, 2011.

\bibitem{DBLP:journals/fgcs/RoureGS09}
D.~D. Roure et~al.
\newblock The design and realisation of the {myExperiment Virtual Research
  Environment} for social sharing of workflows.
\newblock {\em Future Generation Comp. Syst.}, (5):561--567, 2009.

\bibitem{DBLP:conf/ssdbm/StarlingerBL12}
J.~Starlinger, S.~C. Boulakia, and U.~Leser.
\newblock (re)use in public scientific workflow repositories.
\newblock In {\em SSDBM}, 2012.

\bibitem{DBLP:conf/sigmod/TrisslL07}
S.~Tri{\ss}l and U.~Leser.
\newblock Fast and practical indexing and querying of very large graphs.
\newblock In {\em SIGMOD}, 2007.

\bibitem{DBLP:journals/www/WangLWL08}
H.~Wang et~al.
\newblock Coding-based join algorithms for structural queries on
  graph-structured {XML} documents.
\newblock {\em WWW}, (4):485--510, 2008.

\bibitem{DBLP:journals/pvldb/ZengH12}
Q.~Zeng and H.~Zhuge.
\newblock Comments on ``{Stack-based} algorithms for pattern matching on dags".
\newblock {\em PVLDB}, (7):668--679, 2012.

\end{thebibliography}
\end{document}